%
%
%

\documentclass[graybox]{svmult}

\usepackage{amsmath,amssymb}
\usepackage{mathptmx}       
\usepackage{helvet}         
\usepackage{courier}        
\usepackage{type1cm}        
%
\usepackage{color}
\usepackage{makeidx}         
\usepackage{graphicx}        
\usepackage{multicol}        
\usepackage[bottom]{footmisc}
\usepackage{bm}






\newtheorem{idea memo}{Idea Memo}[section]
\newtheorem{notation}{Notation}[section]
\newtheorem{proposal}{Davies-Lewis proposal}

\newtheorem{dlocriterion}{Davies-Lewis-Ozawa criterion}

\makeindex             


\begin{document}

\title*{Measuring processes and the Heisenberg picture}
\author{Kazuya Okamura}
\institute{Kazuya Okamura \at 
Graduate School of Informatics, Nagoya University
Chikusa-ku, Nagoya 464-8601, Japan, \email{okamura@math.cm.is.nagoya-u.ac.jp}}
%
%
\maketitle

\abstract*{In this paper, we attempt to establish quantum measurement theory in the Heisenberg picture.
First, we review foundations of quantum measurement theory,
that is usually based on the Schr\"{o}dinger picture. The concept of instrument is introduced there.
Next, we define the concept of system of measurement correlations and that of measuring process.
The former is the exact counterpart of instrument in the (generalized) Heisenberg picture.
In quantum mechanical systems, we then show
a one-to-one correspondence between systems of measurement correlations and measuring processes
up to complete equivalence. This is nothing but a unitary dilation theorem of
systems of measurement correlations.
Furthermore, from the viewpoint of the statistical approach to quantum measurement theory,
we focus on the extendability of instruments to systems of measurement correlations.
It is shown that all completely positive (CP) instruments are extended into systems of measurement correlations.
Lastly, we study the approximate realizability of CP instruments
by measuring processes within arbitrarily given error limits.}

\abstract{
In this paper, we attempt to establish quantum measurement theory in the Heisenberg picture.
First, we review foundations of quantum measurement theory,
that is usually based on the Schr\"{o}dinger picture. The concept of instrument is introduced there.
Next, we define the concept of system of measurement correlations and that of measuring process.
The former is the exact counterpart of instrument in the (generalized) Heisenberg picture.
In quantum mechanical systems, we then show
a one-to-one correspondence between systems of measurement correlations and measuring processes
up to complete equivalence. This is nothing but a unitary dilation theorem of
systems of measurement correlations.
Furthermore, from the viewpoint of the statistical approach to quantum measurement theory,
we focus on the extendability of instruments to systems of measurement correlations.
It is shown that all completely positive (CP) instruments are extended into systems of measurement correlations.
Lastly, we study the approximate realizability of CP instruments
by measuring processes within arbitrarily given error limits.}

\section{Introduction}
\label{sec:1}
In this paper, we mathematically investigate measuring processes in the Heisenberg picture.
We aim to extend the framework of quantum measurement theory and
to apply the method in this paper not only to quantum systems of finite degrees of freedom
but also to those with infinite degrees of freedom.

It is well-known that correlation functions are essential for the description of systems
in quantum theory and in quantum probability theory.
Typical examples are Wightman functions in (axiomatic) quantum field theory \cite{SW00} and several
(algebraic, noncommutative) independence in quantum probability theory
(see \cite{HoraObata07,Muraki13} and references therein),
which are characterized by behaviors of correlation functions.
In the famous paper by Accardi, Frigerio and Lewis \cite{AFL82},
general classes of quantum stochastic processes including quantum Markov processes
were characterized in terms of correlation functions.
The main result of \cite{AFL82} is a noncommutative version of Kolmogorov's theorem
stating that every quantum stochastic process can be reconstructed
from a family of correlation functions up to equivalence.
The proof of this result is made by the efficient use of
positive-definiteness of a family of correlation functions.
Later Belavkin \cite{Belavkin85} formulated the theory of
operator-valued correlation functions, which is more flexible than the original formulation in \cite{AFL82}
and gives an opportunity for reconsidering the standard formulation of qunatum theory.
And he extended the main result of \cite{AFL82}.
We apply Belavkin's theory, with some modifications, to a systematic characterization of
measurement correlations herein.


Measurements are described by the notion of instrument introduced Davies and Lewis \cite{DL70}.
An instrument $\mathcal{I}$ for $(\mathcal{M},S)$ is defined as a $P(\mathcal{M}_\ast)$-valued measure
$\mathcal{F}\ni\Delta\mapsto\mathcal{I}(\Delta)\in P(\mathcal{M}_\ast)$,
where $\mathcal{M}$ is a von Neumann algebra with predual $\mathcal{M}_\ast$,
$P(\mathcal{M}_\ast)$ is the set of positive linear map of $\mathcal{M}$
and $(S,\mathcal{M})$ is a measurable space.
The statistical description of measurements in terms of instruments can be regarded as a kind of
quantum dynamical process based on the so-called Schr\"{o}dinger picture.
As widely accepted, the Schr\"{o}dinger picture stands
on describing states as time-dependent variables and observables as constants with respect to time, i.e.,
time-independent variables while we treat
states as constants with respect to time and observables as time-dependent variables the Heisenberg picture.
To the author's knowledge, the Schr\"{o}dinger picture is
matched with an operational approach to quantum theory
concerning probability distributions of observables and of output variables of apparatuses
\cite{DL70,Davies76,BGL95,Ozawa97}.
On the other hand, no systematic treatment of measurements
in the Heisenberg picture, which can compare with the theory of instruments, has been investigated.
In contrast to the Schr\"{o}dinger picture, the Heisenberg picture focuses on
dynamical changes of observables and can naturally treat correlation functions
of observables at different times,
so that enables us to examine the dynamical nature of the system under consideration itself in detail.
The Heisenberg picture is better than the Schr\"{o}dinger picture at this point.
Therefore, inspired by the previous investigations on 
quantum stochastic processes and correlation functions \cite{AFL82,Belavkin85},
we define the concept of system of measurement correlations.
This is the exact counterpart of instrument in a ``generalized" Heisenberg picture and
defined as a family of ``operator-valued" multilinear maps satisfying ``positive-definiteness",
``$\sigma$-additivity" and other conditions.
An instrument induced by a system of measurement correlations is always completely positive.
In addition, we redefine measuring process (Definition \ref{MP2}) in order that
it becomes consistent with the definition of system of measurement correlations.
In the quantum mechanical case, we show that every system of measurement correlations is
defined by a measuring process.
It is, however, difficult to extend this result to general von Neumann algebras.
Therefore, we develop another aspect of measurements,
which is deeply analyzed for the first time in this paper.
From the statistical viewpoint as the starting point of quantum measurement theory,
we discuss the extendability of CP instruments to
systems of measurement correlations and the realizability of CP instruments by measuring processes.
In physically relevant cases, we show that both are possible within arbitrary accuracy.

The purpose of this paper is to mathematically
develop the unitary dilation theory of systems of measurement correlations and of CP instruments.
Dilation theory is one of main topics in functional analysis and enables us to apply
representation theory and harmonic analysis to operators or to operator algebras.
Especially, the unitary dilation theory of contractions on Hilbert space \cite{NFBK10,Halmos82}
and the dilation theory of completely positive maps \cite{Arveson69,Stine55}
have been studied in many investigations
(see \cite{NFBK10,Halmos82,Paulsen02,Arveson69,Stine55,
Lance95,Kraus,Kraus2,Skeide00,Skeide01} and references therein).
A representation theorem of CP instruments on the set ${\bf B}(\mathcal{H})$ of bounded operators
on a Hilbert space $\mathcal{H}$ \cite[Theorem 5.1]{Oz84} (Theorem \ref{CPIMP}) follows from these results,
which shows the existence of unitary dilations of CP instruments.
The proof of this theorem is based on the theory of CP-measure space \cite{O14,OO16}.
In the case of CP instruments,
a unitary dilation of a CP instrument is nothing but a measuring process which realizes it.
We generalize this representation theorem to systems of
measurement correlations defined on ${\bf B}(\mathcal{H})$ in terms of Kolmogorov's theorem.
It should be remarked that CP instruments defined on general von Neumann algebras do not always admit unitary dilations
(see Examples \ref{NoNEP1} and \ref{NoNEP2}).
Next, we consider the extendability of CP instruments to systems of measurement correlations.
It will be shown that all CP instruments can be extended into systems of measurement correlations.
Furthermore, we show that
every CP instrument defined on general von Neumann algebras can be approximated
by measuring processes within arbitrarily given error limits $\varepsilon>0$.
If von Neumann algebras are injective or injective factors, measuring processes
approximating a CP instrument can be chosen to be faithful or inner, respectively.

Preliminaries are given in Section \ref{sec:2}; Foundations of quantum measurement theory,
kernels and their Kolmogorov decompositions are explained.
We introduce a system of measurement correlations and prove a representation theorem
of systems of measurement correlations in Section \ref{sec:3}.
In Section \ref{sec:4}, we define measuring processes and their complete equivalence,
and in the case of ${\bf B}(\mathcal{H})$ we show a unitary dilation theorem of
systems of measurement correlations establishing
a one-to-one correspondence between systems of measurement correlations and
complete equivalence classes of measuring processes.
In Section \ref{sec:5}, we discuss a generalization of the main result in Section \ref{sec:4}
to arbitrary von Neumann algebras, and the extendability of CP instruments to
systems of measurement correlations.
We show that for any CP instruments
there always exists a systems of measurement correlations which defines a given CP instrument.
In Section \ref{sec:6}, we explore the existence of measuring processes which approximately
 realizes a given CP instrument.
We show several approximate realization theorems of CP instruments by measuring processes.
\section{Preliminaries}
\label{sec:2}
In this paper, we assume that von Neumann algebras $\mathcal{M}$ are $\sigma$-finite.
However, only in the case of $\mathcal{M}={\bf B}(\mathcal{H})$, the von Neumann algebra of
bounded operators on a Hilbert space $\mathcal{H}$, this assumption is ignored.
\subsection{Foundations of quantum measurement theory}
We introduce foundations of quantum measurement theory herein.
To precisely understand the theory of quantum measurement and its mathematics,
the most important thing is to know how measurements physically realizable in experimental settings
are described as physical processes consistent with statistical characterization of measurements.
We refer the reader to \cite{Ozawa04,14MFQ-,OO16} for detailed introductory expositions
of quantum measurement theory.

The history of quantum measurement theory is long as much as those of quantum theory,
but the modern theory of quantum measurement began with the mathematical study of
the notion of instruments introduced by Davies and Lewis \cite{DL70}. They proposed that
we should abandon \textit{the repeatability hypothesis} \cite{vonNeumann,DL70,Ozawa15} as a general principle
and employ an operational approach to quantum measurement,
which is based on the mathematical description of measurements
in terms of instruments defined as follows.
Let $\mathbf{S}$ be a system whose observables and states are
described by self-adjoint operators affiliated to
a von Neumann algebra $\mathcal{M}$ on a Hilbert space $\mathcal{H}$
and by normal states on $\mathcal{M}$, respectively.
$\mathcal{M}_\ast$ denotes the predual of $\mathcal{M}$,
i.e., the set of ultraweakly continuous linear functionals on $\mathcal{M}$,
$\mathcal{S}_n(\mathcal{M})$ does that of normal states on $\mathcal{M}$
and $P(\mathcal{M}_\ast)$ does that of positive linear maps on $\mathcal{M}_\ast$.

\begin{definition}[Instrument, Davies-Lewis \text{\cite[p.243, ll.21--26]{DL70}}]
Let $\mathcal{M}$ be a von Neumann algebra on a Hilbert space $\mathcal{H}$
and $(S,\mathcal{F})$ a measurable space. 
A map $\mathcal{I}:\mathcal{F}\rightarrow P(\mathcal{M}_\ast)$ is called an instrument for $(\mathcal{M},S)$
if it satisfies the following conditions:\\
$(1)$ $\langle\mathcal{I}(S)\rho,1\rangle=\langle\rho,1\rangle$ for all $\rho\in\mathcal{M}_\ast$,
where $\langle \cdot,\cdot \rangle$ denotes the duality pairing of $\mathcal{M}_\ast$ and $\mathcal{M}$;\\
$(2)$ For every $M\in\mathcal{M}$, $\rho\in\mathcal{M}_\ast$ and mutually disjoint sequence $\{\Delta_j\}$
of $\mathcal{F}$,
\begin{equation}
\langle \mathcal{I}(\cup_j\Delta_j)\rho, M \rangle=\sum_j \langle \mathcal{I}(\Delta_j)\rho, M \rangle.
\end{equation}

We define the dual map $\mathcal{I}^\ast$ of an instrument $\mathcal{I}$ by
$\langle \rho,\mathcal{I}^\ast(\Delta)M \rangle=\langle \mathcal{I}(\Delta)\rho,M \rangle$
and use the notation $\mathcal{I}(M,\Delta)=\mathcal{I}^\ast(\Delta)M$ for all
$\Delta\in\mathcal{F}$ and $M\in\mathcal{M}$.
\end{definition}
It is obvious, by the definition, that every instrument describes
the weighted state changes caused by the measurement.
The dual map $\mathcal{I}:\mathcal{M}\times\mathcal{F}\rightarrow \mathcal{M}$
of an instrument $\mathcal{I}$ for $(\mathcal{M},S)$
is characterized by the following conditions:\\
$(i)$ For every $\Delta\in\mathcal{F}$, the map $\mathcal{M}\ni M\mapsto \mathcal{I}(M,\Delta)\in\mathcal{M}$
is normal positive linear map of $\mathcal{M}$;\\
$(ii)$ $\mathcal{I}(1,S)=1$;\\
$(iii)$ For every $M\in\mathcal{M}$, $\rho\in\mathcal{M}_\ast$ and mutually disjoint sequence $\{\Delta_j\}$
of $\mathcal{F}$,
\begin{equation}
\langle \rho, \mathcal{I}(M,\cup_j\Delta_j) \rangle=\sum_j \langle \rho, \mathcal{I}(M,\Delta_j) \rangle.
\end{equation}
Since every map $\mathcal{I}:\mathcal{M}\times\mathcal{F}\rightarrow \mathcal{M}$ satisfying the above conditions
is always the dual map of an instrument for $(\mathcal{M},S)$,
we also call the map $\mathcal{I}$ an instrument for $(\mathcal{M},S)$.

Davies and Lewis claimed that
experimentally and statistically accessible ingredients via measurements by a given measuring apparatus
should be specified by instruments as follows:
\begin{proposal}
For every apparatus ${\bf A}({\bf x})$ measuring ${\bf S}$,
where ${\bf x}$ is the output variable of ${\bf A}({\bf x})$
taking values in a measurable space $(S,\mathcal{F})$,
there always exists an instrument $\mathcal{I}$ for $(\mathcal{M},S)$
corresponding to ${\bf A}({\bf x})$ in the following sense.
For every input state $\rho$, the probability distribution
$\mathrm{Pr}\{{\bf x}\in\Delta\Vert\rho\}$ of ${\bf x}$ in $\rho$ is given by
\begin{equation}
\mathrm{Pr}\{{\bf x}\in\Delta\Vert\rho\} = \Vert\mathcal{I}(\Delta)\rho\Vert
=\langle \mathcal{I}(\Delta)\rho, 1\rangle
\end{equation}
for all $\Delta\in\mathcal{F}$,
and the state $\rho_{\{{\bf x}\in\Delta\}}$ after the measurement under the condition that
$\rho$ is the prepared state and the outcome ${\bf x}\in\Delta$ 
is given by
\begin{equation}
 \rho_{\{{\bf x}\in\Delta\}}= \dfrac{\mathcal{I}(\Delta)\rho}{\Vert\mathcal{I}(\Delta)\rho\Vert}
\end{equation}
if $\mathrm{Pr}\{{\bf x}\in\Delta\Vert\rho\}>0$, and $\rho_{\{{\bf x}\in\Delta\}}$ is indefinite if
$\mathrm{Pr}\{{\bf x}\in\Delta\Vert\rho\}=0$.
\end{proposal}
Although this proposal is very general, it was not evident at that time
that how this is related to the standard formulation of quantum theory.
In the 1980s, Ozawa \cite{Ozawa83,Oz84} introduced
both completely positive (CP) instruments and measuring processes.
Following this investigation, the standpoint of the above proposal in quantum mechanics was settled
and the circumstances changed at all.
An instrument $\mathcal{I}$ for $(\mathcal{M},S)$ is said to be \textit{completely positive} if 
$\mathcal{I}(\Delta)$ (or $\mathcal{I}(\cdot,\Delta)$, equivalently)
is completely positive for every $\Delta\in\mathcal{F}$.
$\mathrm{CPInst}(\mathcal{M},S)$ denotes the set of CP instruments for $(\mathcal{M},S)$.
The notion of measuring process is defined as
a quantum mechanical modeling of an apparatus as a physical system, of the meter of the apparatus, and
of the measuring interaction between the system and the apparatus.
Let  $\mathcal{H}$ and $\mathcal{K}$ be Hilbert spaces.
${\bf B}(\mathcal{H})$ denotes the set of bounded linear operators on $\mathcal{H}$
and ${\bf B}(\mathcal{H},\mathcal{K})$ does the set of bounded linear operators
of $\mathcal{H}$ to $\mathcal{K}$.
Let $\mathcal{M}$ and $\mathcal{N}$ be von Neumann algebras on $\mathcal{H}$ and $\mathcal{K}$,
respectively. $\mathcal{M}\overline{\otimes}\mathcal{N}$ denotes
the W$^\ast$-tensor product of $\mathcal{M}$ and $\mathcal{N}$.
For every $\sigma\in\mathcal{N}_\ast$, a linear map
$\mathrm{id}\otimes\sigma:\mathcal{M}\overline{\otimes}\mathcal{N}\rightarrow\mathcal{M}$ is defined by
$\langle \rho,(\mathrm{id}\otimes\sigma)X \rangle=\langle \rho\otimes\sigma, X\rangle$
for all $X\in\mathcal{M}\overline{\otimes}\mathcal{N}$ and $\rho\in\mathcal{M}_\ast$.
The following is the mathematical definition of measuring processes:
\begin{definition}[Measuring process \text{\cite[Definition 3.2]{OO16}}]\label{MP1}
A measuring process for $(\mathcal{M},S)$ is a 4-tuple $\mathbb{M}=(\mathcal{K},\sigma,E,U)$
of a Hilbert space $\mathcal{K}$, a normal state $\sigma$ on ${\bf B}(\mathcal{K})$,
a PVM $E:\mathcal{F}\rightarrow{\bf B}(\mathcal{K})$ and a unitary operator $U$ on
$\mathcal{H}\otimes\mathcal{K}$ satisfying
\begin{equation}
(\mathrm{id}\otimes \sigma)[U^\ast(M\otimes E(\Delta))U]\in \mathcal{M}
\end{equation}
for all $M\in\mathcal{M}$ and $\Delta\in\mathcal{F}$.
\end{definition}
Let $\mathbb{M}=(\mathcal{K},\sigma,E,U)$ be a measuring process for $(\mathcal{M},S)$.
Then a CP instrument $\mathcal{I}_\mathbb{M}$ for $(\mathcal{M},S)$ is defined by
\begin{equation}
\mathcal{I}_\mathbb{M}(M,\Delta)=(\mathrm{id}\otimes \sigma)[U^\ast(M\otimes E(\Delta))U]
\end{equation}
for $M\in\mathcal{M}$ and $\Delta\in\mathcal{F}$.
The most important example of meauring processes is a von Neumann model
$(L^2(\mathbb{R}),\omega_\alpha,E^Q,e^{-iA\otimes P/\hbar})$ of measurement
of an observable $A$, a self-adjoint operator affiliated with $\mathcal{M}$, where
$\alpha$ is a unit vector of $L^2(\mathbb{R})$,
$\omega_\alpha$ is defined by $\omega_\alpha(M)=\langle\alpha|M\alpha\rangle$
for all $M\in {\bf B}(L^2(\mathbb{R}))$,
 and $Q=\int_\mathbb{R} q\;dE^Q(q)$ and $P$ are self-adjoint operators defined on
dense linear subspaces of $L^2(\mathbb{R})$ such that $\overline{[Q,P]}=i\hbar 1$.
Here, $E^X$ denotes the spectral measure of a self-adjoint operator $X$ densely defined on a Hilbert space.
Quantum mechanical modeling of appratuses began with this model \cite{vonNeumann,Oz93}.

Two measuring processes are \textit{statistically equivalent} if they define an identical instrument.
As seen above, a measuring process $\mathbb{M}$ for $(\mathcal{M},S)$ defines
a CP instrument $\mathcal{I}_\mathbb{M}$ for $(\mathcal{M},S)$.
In the case of $\mathcal{M}={\bf B}(\mathcal{H})$, the following theorem,
a unitary dilation theorem of CP instruments for $({\bf B}(\mathcal{H}),S)$, is known to hold.
\begin{theorem}[\text{\cite{Ozawa83}, \cite[Theorem 5.1]{Oz84}, \cite[Theorem 3.6]{OO16}}]\label{CPIMP}
For every CP instrument $\mathcal{I}$ for $({\bf B}(\mathcal{H}),S)$,
there uniquely exists a statistical equivalence class of measuring processes
$\mathbb{M}=(\mathcal{K},\sigma,E,U)$ for $({\bf B}(\mathcal{H}),S)$ such that
$\mathcal{I}(M,\Delta)=\mathcal{I}_\mathbb{M}(M,\Delta)$
for all $M\in{\bf B}(\mathcal{H})$ and $\Delta\in\mathcal{F}$. Conversely,
every statistical equivalence class of measuring processes for $({\bf B}(\mathcal{H}),S)$
defines a unique CP instrument $\mathcal{I}$ for $({\bf B}(\mathcal{H}),S)$.
\end{theorem}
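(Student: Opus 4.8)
The plan is to dispatch the easy implications first and then concentrate on the genuine content, which is the construction of a measuring process realizing a prescribed CP instrument. The converse direction is already settled by the preliminaries: for any measuring process $\mathbb{M}=(\mathcal{K},\sigma,E,U)$ the formula $\mathcal{I}_\mathbb{M}(M,\Delta)=(\mathrm{id}\otimes\sigma)[U^\ast(M\otimes E(\Delta))U]$ manifestly defines a normal, completely positive, $\sigma$-additive, unital map, i.e.\ a CP instrument for $({\bf B}(\mathcal{H}),S)$, and this CP instrument depends only on $\mathbb{M}$. Likewise, uniqueness up to statistical equivalence is almost tautological: by definition two measuring processes are statistically equivalent precisely when they induce the same instrument, so the set $\{\mathbb{M}:\mathcal{I}_\mathbb{M}=\mathcal{I}\}$ is a single statistical equivalence class as soon as it is nonempty. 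Everything therefore reduces to existence.

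For existence, first I would extract from $\mathcal{I}$ a positive-definite kernel and invoke the Kolmogorov decomposition recalled in Section~\ref{sec:2}. Complete positivity of $\mathcal{I}(\cdot,\Delta)$ is exactly the hypothesis needed to make the associated ${\bf B}(\mathcal{H})$-valued kernel positive-definite; applying the Kolmogorov/Stinespring-type decomposition then yields an auxiliary Hilbert space $\mathcal{K}_0$, an isometry $V\colon\mathcal{H}\to\mathcal{H}\otimes\mathcal{K}_0$, and a projection-valued measure $E_0\colon\mathcal{F}\to{\bf B}(\mathcal{K}_0)$ (projection-valuedness coming from the spectral/Naimark part of the decomposition) such that
\begin{equation}
\mathcal{I}(M,\Delta)=V^\ast(M\otimes E_0(\Delta))V
\end{equation}
for all $M\in{\bf B}(\mathcal{H})$ and $\Delta\in\mathcal{F}$. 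The $\sigma$-additivity and normality of $\mathcal{I}$ pass to $E_0$ and $V$ through the construction, and unitality forces $E_0(S)=1$ and $V^\ast V=1$.

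It remains to upgrade the Stinespring isometry $V$ to a genuine unitary on a tensor product, which I expect to be the main obstacle. I would enlarge the ancilla to a Hilbert space $\mathcal{K}\supseteq\mathcal{K}_0$ carrying a unit vector $\xi$, set $\sigma(M)=\langle\xi|M\xi\rangle$, and define a partial isometry by $U(\psi\otimes\xi)=V\psi$ on the subspace $\mathcal{H}\otimes\mathbb{C}\xi$. This map is isometric because $V$ is, and the task is to extend it to a unitary $U$ on all of $\mathcal{H}\otimes\mathcal{K}$; such an extension exists provided the orthogonal complements of its initial and final spaces have equal dimension, which I can always arrange by taking $\mathcal{K}$ sufficiently large (for instance infinite-dimensional). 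Writing $E$ for the PVM on $\mathcal{K}$ extending $E_0$, one then recovers
\begin{equation}
(\mathrm{id}\otimes\sigma)[U^\ast(M\otimes E(\Delta))U]=V^\ast(M\otimes E_0(\Delta))V=\mathcal{I}(M,\Delta),
\end{equation}
so that $\mathbb{M}=(\mathcal{K},\sigma,E,U)$ is the desired measuring process; the membership condition $(\mathrm{id}\otimes\sigma)[U^\ast(M\otimes E(\Delta))U]\in{\bf B}(\mathcal{H})$ is automatic here because $\mathcal{M}={\bf B}(\mathcal{H})$. The delicate points throughout are the mutual compatibility of the Naimark (PVM) dilation, the Stinespring (CP) dilation, and the unitary extension, together with preserving the measure-theoretic hypotheses — $\sigma$-additivity, normality of $\sigma$, and joint measurability — as $\mathcal{K}_0$ is enlarged to $\mathcal{K}$.
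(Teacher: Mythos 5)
Your overall skeleton coincides with the route of the paper and of the cited proofs (\cite[Theorem 5.1]{Oz84}, \cite[Theorem 3.6]{OO16}): the converse direction and the uniqueness claim are indeed essentially tautological given the definition of statistical equivalence, and existence is obtained by dilating $\mathcal{I}$ to the form $V^\ast(M\otimes E_0(\Delta))V$ and then extending the isometry $\psi\otimes\xi\mapsto V\psi$ to a unitary, with $\sigma$ the vector state at $\xi$. However, there is one genuine gap, and it sits at the crux of the theorem: you assert that positive-definiteness plus ``the spectral/Naimark part'' of the Kolmogorov decomposition already yields an isometry $V:\mathcal{H}\to\mathcal{H}\otimes\mathcal{K}_0$ and a PVM $E_0$ with $\mathcal{I}(M,\Delta)=V^\ast(M\otimes E_0(\Delta))V$. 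What that machinery actually yields is a dilation $\mathcal{I}(M,\Delta)=V^\ast\pi(M)E_0(\Delta)V$ on an \emph{abstract} Hilbert space, where $\pi$ is a normal representation of ${\bf B}(\mathcal{H})$ commuting with $E_0$. To reach the tensor form you must further invoke the multiplicity theorem for normal representations of ${\bf B}(\mathcal{H})$ (Theorem \ref{vNhom2}, $\pi(X)=U(X\otimes 1_\mathcal{K})U^\ast$) together with Arveson's commutation theorem (Theorem \ref{CommLift}) to identify the commuting PVM with one of the form $1\otimes E_0$. This is not pedantry: it is the only point where the hypothesis $\mathcal{M}={\bf B}(\mathcal{H})$ is genuinely used (your sketch locates that hypothesis only in the trivial membership condition), and your argument as written would apply verbatim to an arbitrary von Neumann algebra $\mathcal{M}$ --- where the conclusion is false, since a tensor-form dilation would give the NEP via Theorem \ref{THMNEP}, which the weakly repeatable, non-discrete instruments of Examples \ref{NoNEP1} and \ref{NoNEP2} provably lack.

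The unitary-extension step, by contrast, is sound modulo bookkeeping you yourself flagged: enlarge the ancilla as $\mathcal{K}=\mathcal{K}_0\otimes\ell^2$ with $E=E_0\otimes 1$ and $\xi=\xi_0\otimes e_1$, so that $E$ remains a PVM with $E(S)=1$ and the subspace carrying $V\mathcal{H}$ reduces it; then both orthocomplements (of the initial space $\mathcal{H}\otimes\mathbb{C}\xi$ and of the final space $V\mathcal{H}\otimes e_1$) have dimension $\dim(\mathcal{H}\otimes\mathcal{K})$, so a unitary extension $U$ exists, and since $\sigma$ is the vector state at $\xi$, the instrument $\mathcal{I}_{\mathbb{M}}$ only sees $U(\cdot\otimes\xi)$, making the arbitrariness of $U$ on the complement harmless. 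This is the same dimension-counting device the paper itself uses in the proof of Theorem \ref{OTOC}, where $U=U_3Q+U_4(1-Q)$. So: repair the first paragraph of your existence argument by citing the two structure theorems explicitly, and the proof is complete.
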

A generalization of this theorem to arbitrary von Neumann algebras is shown to hold
not for all CP instruments but for those with the normal extension property (NEP) in \cite{OO16}
(see Theorem \ref{THMNEP}).
Let $(S,\mathcal{F},\mu)$ be a measure space.
$\mathcal{L}(S,\mu)$ denotes the $^\ast$-algebra of complex-valued $\mu$-measurable functions on $S$.
A $\mu$-measurable function $f$ is said to be $\mu$-negligible if $f(s)=0$ for $\mu$-a.e. $s\in S$.
$\mathcal{N}(S,\mu)$ denotes the set of $\mu$-negligible functions on $S$
and $M^\infty(S,\mu)$ does the $^\ast$-algebra of bounded $\mu$-measurable functions on $S$.
It is obvious that $M^\infty(S,\mu)\subset\mathcal{L}(S,\mu)$ as $^\ast$-algebra.
For any $1\leq p<\infty$, $L^p(S,\mu)$ denotes the Banach space of
 $p$-integrable functions on $S$ with respect to $\mu$ modulo the $\mu$-negligible functions.
$[f]$ denotes the $\mu$-negligible equivalence class of $f\in\mathcal{L}(S,\mu)$
and $L^\infty(S,\mu)$ does the commutative von Neumann algebra on $L^2(S,\mu)$.
$L^\infty(S,\mathcal{I})$ denotes the W$^\ast$-algebra of essentially bounded
$\mathcal{I}$-measurable functions on $S$ modulo the $\mathcal{I}$-negligible functions.

\begin{definition}[Normal extension property \text{\cite[Definition 3.3]{OO16}}]
Let $\mathcal{M}$ be a von Neumann algebra on a Hilbert space $\mathcal{H}$
and $(S,\mathcal{F})$ a measurable space.
Let $\mathcal{I}$ be a CP instrument for $(\mathcal{M},S)$ and
$\Psi_\mathcal{I}:\mathcal{M}\otimes_{\mathrm{min}}L^\infty(S,\mathcal{I})\rightarrow\mathcal{M}$
the corresponding unital binormal CP map, i.e., $\Psi_\mathcal{I}$ is
normal on $\mathcal{M}$ and $L^\infty(S,\mathcal{I})$
and satisfies $\Psi_\mathcal{I}(M\otimes [\chi_\Delta])=\mathcal{I}(M,\Delta)$
for all $M\in\mathcal{M}$ and $\Delta\in\mathcal{F}$
\cite[Proposition 3.3]{OO16}.
$\mathcal{I}$ is said to have the normal extension property (NEP)
if there exists a unital normal CP map
$\widetilde{\Psi_\mathcal{I}}:\mathcal{M}\overline{\otimes}L^\infty(S,\mathcal{I})\rightarrow\mathcal{M}$
such that $\widetilde{\Psi_\mathcal{I}}|_{\mathcal{M}\otimes_{\mathrm{min}}L^\infty(S,\mathcal{I})}
=\Psi_\mathcal{I}$. $\mathrm{CPInst}_{\mathrm{NE}}(\mathcal{M},S)$ denotes
the set of CP instruments for $(\mathcal{M},S)$ with the NEP.
\end{definition}

We then have the following theorem, a generalization of Theorem \ref{CPIMP}.

\begin{theorem}[\text{\cite[Theorem 3.4]{OO16}}] \label{THMNEP}
For a CP instrument $\mathcal{I}$ for $(\mathcal{M},S)$, the following conditions are equivalent:\\
$(i)$ $\mathcal{I}$ has the NEP.\\
$(ii)$ There exists CP instrument $\mathcal{I}$ for $({\bf B}(\mathcal{H}),S)$ such that
$\widetilde{\mathcal{I}}(M,\Delta)=\mathcal{I}(M,\Delta)$ for all $M\in\mathcal{M}$ and $\Delta\in\mathcal{F}$.\\
$(iii)$ There exists a measuring process $\mathbb{M}=(\mathcal{K},\sigma,E,U)$ for $(\mathcal{M},S)$ such that
$\mathcal{I}(M,\Delta)=\mathcal{I}_\mathbb{M}(M,\Delta)$ for all $M\in\mathcal{M}$ and $\Delta\in\mathcal{F}$.
\end{theorem}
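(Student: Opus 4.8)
The plan is to prove the cycle by establishing (ii)$\Leftrightarrow$(iii) almost for free from Theorem \ref{CPIMP}, and then closing the loop with (iii)$\Rightarrow$(i) and (i)$\Rightarrow$(iii), the last of which carries the real content. For (ii)$\Leftrightarrow$(iii) the key observation is that the defining condition of a measuring process in Definition \ref{MP1} is vacuous over ${\bf B}(\mathcal{H})$, since $(\mathrm{id}\otimes\sigma)[U^*(M\otimes E(\Delta))U]$ always lies in ${\bf B}(\mathcal{H})$. Thus any measuring process $\mathbb{M}=(\mathcal{K},\sigma,E,U)$ for $(\mathcal{M},S)$ is simultaneously one for $({\bf B}(\mathcal{H}),S)$, and the instrument $\mathcal{I}_\mathbb{M}$ it defines on ${\bf B}(\mathcal{H})$ restricts to $\mathcal{I}$ on $\mathcal{M}$, giving (iii)$\Rightarrow$(ii). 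Conversely, given an extension $\widetilde{\mathcal{I}}$ as in (ii), Theorem \ref{CPIMP} furnishes a measuring process $\mathbb{M}$ for $({\bf B}(\mathcal{H}),S)$ with $\mathcal{I}_\mathbb{M}=\widetilde{\mathcal{I}}$; since $\mathcal{I}_\mathbb{M}(M,\Delta)=\widetilde{\mathcal{I}}(M,\Delta)=\mathcal{I}(M,\Delta)\in\mathcal{M}$ for $M\in\mathcal{M}$, this $\mathbb{M}$ satisfies Definition \ref{MP1} relative to $\mathcal{M}$ and realizes $\mathcal{I}$, giving (ii)$\Rightarrow$(iii).

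For (iii)$\Rightarrow$(i), starting from a measuring process $\mathbb{M}=(\mathcal{K},\sigma,E,U)$ realizing $\mathcal{I}$, I would set $\widetilde{\Psi_\mathcal{I}}(M\otimes[f]):=(\mathrm{id}\otimes\sigma)[U^*(M\otimes E(f))U]$, where $E(f)=\int_S f\,dE$. The maps $\mathrm{Ad}(U^*)$ and $\mathrm{id}\otimes\sigma$ are normal and CP, and spectral integration against $E$ is normal, so the composite is a normal unital CP map on $\mathcal{M}\overline{\otimes}L^\infty(S,\mathcal{I})$ once well-definedness modulo $\mathcal{I}$-negligible functions is checked. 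That reduces to showing $\mathcal{I}(1,\Delta)=0$ forces $(\mathrm{id}\otimes\sigma)[U^*(M\otimes E(\Delta))U]=0$ for $M\in\mathcal{M}$; but this quantity equals $\mathcal{I}_\mathbb{M}(M,\Delta)=\mathcal{I}(M,\Delta)$, which vanishes by the Schwarz inequality for the CP map $\mathcal{I}(\cdot,\Delta)$. Since $\widetilde{\Psi_\mathcal{I}}(M\otimes[\chi_\Delta])=\mathcal{I}(M,\Delta)=\Psi_\mathcal{I}(M\otimes[\chi_\Delta])$, the map $\widetilde{\Psi_\mathcal{I}}$ extends $\Psi_\mathcal{I}$ and witnesses the NEP.

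The crux is (i)$\Rightarrow$(iii), where the NEP is used precisely to upgrade the binormal Stinespring data of $\Psi_\mathcal{I}$ on $\mathcal{M}\otimes_{\mathrm{min}}L^\infty(S,\mathcal{I})$ to a \emph{jointly} normal representation on the W$^\ast$-tensor product. Concretely, I would take the minimal Stinespring dilation of the normal unital CP map $\widetilde{\Psi_\mathcal{I}}:\mathcal{M}\overline{\otimes}L^\infty(S,\mathcal{I})\to\mathcal{M}$, obtaining a Hilbert space $\mathcal{L}$, a normal unital representation $\pi$, and an isometry $V:\mathcal{H}\to\mathcal{L}$ with $\widetilde{\Psi_\mathcal{I}}(X)=V^*\pi(X)V$. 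Restriction yields a normal representation $\pi_\mathcal{M}=\pi(\cdot\otimes1)$ of $\mathcal{M}$ and a commuting PVM $E_0(\Delta)=\pi(1\otimes[\chi_\Delta])$ on $\mathcal{L}$. Because $\mathcal{M}$ acts faithfully, hence with central support $1$, on $\mathcal{H}$, every normal representation of $\mathcal{M}$ is unitarily equivalent to a subrepresentation of the amplification $M\mapsto M\otimes1$ on $\mathcal{H}\otimes\mathcal{K}$ for a sufficiently large ancilla $\mathcal{K}$; I would use this to identify $\pi_\mathcal{M}$ with $(M\otimes1)|_\mathcal{L}$ for an invariant subspace $\mathcal{L}\subseteq\mathcal{H}\otimes\mathcal{K}$, extend the Stinespring isometry $V$ to a unitary $U$ on $\mathcal{H}\otimes\mathcal{K}$ (enlarging $\mathcal{K}$ for the dimension count), and take $\sigma$ to be the vector state at the chosen ancilla vector $\xi$, so that $Vh=U(h\otimes\xi)$ and $\mathcal{I}_\mathbb{M}=\mathcal{I}$ by construction.

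I expect the main obstacle to be the last geometric step of (i)$\Rightarrow$(iii): arranging the embedding so that the abelian part $E_0$ of the representation is simultaneously implemented by a genuine PVM of the form $1\otimes E(\Delta)$ living on the ancilla $\mathcal{K}$, while the $\mathcal{M}$-part is exactly the amplification $M\otimes1$. The commuting projections $E_0(\Delta)$ a priori lie in the compression $e(\mathcal{M}'\overline{\otimes}{\bf B}(\mathcal{K}))e$ rather than in $1\otimes{\bf B}(\mathcal{K})$, so turning them into a meter observable requires diagonalizing the joint normal representation of $\mathcal{M}\overline{\otimes}L^\infty(S,\mathcal{I})$ into the spatial product form and exploiting the freedom of infinite amplification. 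This is exactly the unitary-dilation heart of the argument, paralleling the proof of Theorem \ref{CPIMP} but now over a general $\mathcal{M}$, where the joint normality supplied by the NEP is indispensable and, as Examples \ref{NoNEP1}--\ref{NoNEP2} show, cannot be dispensed with.
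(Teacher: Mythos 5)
Your legs (iii)$\Rightarrow$(ii), (ii)$\Rightarrow$(iii) and (iii)$\Rightarrow$(i) are essentially right, and they coincide with the intended argument (note the paper only cites this theorem from \cite[Theorem 3.4]{OO16}, so the comparison is with the proof there): the membership condition of Definition \ref{MP1} is indeed vacuous over ${\bf B}(\mathcal{H})$, Theorem \ref{CPIMP} converts an extension $\widetilde{\mathcal{I}}$ into a measuring process whose instrument restricts into $\mathcal{M}$, and for (iii)$\Rightarrow$(i) your reduction of well-definedness to killing $\mathcal{I}$-negligible functions is the right one (for positive $M$ one even has simply $0\leq\mathcal{I}(M,\Delta)\leq\Vert M\Vert\,\mathcal{I}(1,\Delta)$; to be airtight you should first build the normal map on $\mathcal{M}\overline{\otimes}L^\infty(S,\nu_E)$, $\nu_E$ the measure class of $E$, via $X\mapsto(\mathrm{id}\otimes\sigma)[U^\ast(\mathrm{id}\otimes\pi_E)(X)U]$ and then observe that it vanishes on the kernel ideal of the quotient onto $L^\infty(S,\mathcal{I})$, so it factors through the central cut-down).

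The genuine gap is (i)$\Rightarrow$(iii). You spatialize only the $\mathcal{M}$-part $\pi_{\mathcal{M}}$ of the Stinespring representation and then openly defer the decisive step --- moving the commuting projections $E_0(\Delta)\in e'(\mathcal{M}'\overline{\otimes}{\bf B}(\mathcal{K}))e'$ into meter form $1\otimes E(\Delta)$ intertwined by a unitary fixing the amplification --- calling it an expected ``obstacle.'' But that step is the whole content of the implication, and the plan sketched for it would fail: for general $\mathcal{M}$ there is no analogue of Theorems \ref{vNhom2} and \ref{CommLift} forcing a PVM in the commutant of a subrepresentation of $M\mapsto M\otimes 1$ into $1\otimes{\bf B}(\mathcal{K})$, and no amount of amplification of the $\mathcal{M}$-representation alone achieves this; attacking (i)$\Rightarrow$(iii) directly amounts to re-proving the unitary-dilation geometry of Theorem \ref{CPIMP} over an arbitrary $\mathcal{M}$, which is exactly what Examples \ref{NoNEP1} and \ref{NoNEP2} warn cannot be done by hand. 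The repair is to change the decomposition: prove (i)$\Rightarrow$(ii) instead, and close the cycle with your own (ii)$\Rightarrow$(iii). Concretely, since $\mathcal{M}$ is $\sigma$-finite, fix a faithful normal state $\rho_0$ and set $\mu(\Delta)=\langle\rho_0,\mathcal{I}(1,\Delta)\rangle$, so that $L^\infty(S,\mathcal{I})\cong L^\infty(S,\mu)$ acts on $L^2(S,\mu)$ and $\mathcal{M}\overline{\otimes}L^\infty(S,\mathcal{I})$ acts spatially on $\mathcal{H}\otimes L^2(S,\mu)$; apply the amplification--compression structure theorem to the \emph{joint} normal representation $\pi$ of this tensor algebra, realizing $\pi$ as the restriction of $X\mapsto X\otimes 1_{\mathcal{R}}$ to an invariant subspace, and let $V'$ be the Stinespring isometry followed by the inclusion. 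Then $\widetilde{\mathcal{I}}(X,\Delta):=V'^{\ast}(X\otimes M_{\chi_\Delta}\otimes 1_{\mathcal{R}})V'$ is a CP instrument for $({\bf B}(\mathcal{H}),S)$ (unital since $V'^{\ast}V'=1$, $\sigma$-additive since $\Delta\mapsto M_{\chi_\Delta}$ is a PVM) which extends $\mathcal{I}$ because the subspace is invariant under the subalgebra, so $V'^{\ast}(M\otimes M_{\chi_\Delta}\otimes 1_{\mathcal{R}})V'=V^{\ast}\pi(M\otimes[\chi_\Delta])V=\mathcal{I}(M,\Delta)$ for $M\in\mathcal{M}$. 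Here the abelian part is of meter form $1\otimes E(\Delta)$ with $E(\Delta)=M_{\chi_\Delta}\otimes 1_{\mathcal{R}}$ from the outset, your obstruction never arises, and the only unitary ever constructed lives inside the quoted Theorem \ref{CPIMP}; this is in substance how \cite{OO16} argues.
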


It is also shown that all CP instruments defined on a von Neumann algebra $\mathcal{M}$
have the NEP, i.e., $\mathrm{CPInst}(\mathcal{M},S)=\mathrm{CPInst}_{\mathrm{NE}}(\mathcal{M},S)$
if $\mathcal{M}$ is atomic \cite[Theorem 4.1]{OO16}.
We should remember the famous fact that
a von Neumann algebra $\mathcal{M}$ on a Hilbert space $\mathcal{H}$ is atomic
if and only if there exists a normal conditional expectation
$\mathcal{E}:{\bf B}(\mathcal{H})\rightarrow\mathcal{M}$ \cite[Chapter V, Section 2, Excercise 8]{T79}.
Then the following question naturally arises.

\begin{question}\label{Q3}
Let $\mathcal{M}$ be a non-atomic von Neumann algebra on a Hilbert space
$\mathcal{H}$ and $(S,\mathcal{F})$ a measurable space.
Are there CP instruments for $(\mathcal{M},S)$ without the NEP?
For any CP instrument $\mathcal{I}$ for $(\mathcal{M},S)$, does there exist
a measuring process $\mathbb{M}$ for $(\mathcal{M},S)$
which realizes $\mathcal{I}$ within arbitrarily given error limits $\varepsilon>0$?
\end{question}

A CP instrument $\mathcal{I}$ for $(\mathcal{M},S)$ is said to have
\textit{the approximately normal extension property} (ANEP)
if there is a net $\{\mathcal{I}_\alpha\}$ of CP instruments with the NEP
such that $\mathcal{I}_\alpha(M,\Delta)$ is ultraweakly converges to $\mathcal{I}(M,\Delta)$
for all $M\in\mathcal{M}$ and $\Delta\in\mathcal{F}$. 
$\mathrm{CPInst}_{\mathrm{AN}}(\mathcal{M},S)$ denotes the set of 
CP instruments for $(\mathcal{M},S)$ with the ANEP.

Contrary to physicists' expectations,
Question \ref{Q3} was positively resolved in \cite[Section V]{OO16} for
non-atomic but injective von Neumann algebras.
\begin{definition}[\text{\cite[Definition 5.3]{OO16}}]
$(1)$ An instrument $\mathcal{I}$ for $(\mathcal{M},S)$ is called repeatable if
$\mathcal{I}(\Delta_2)\mathcal{I}(\Delta_1)=\mathcal{I}(\Delta_2\cap\Delta_1)$
for all $\Delta_1,\Delta_2\in\mathcal{F}$.\\
$(2)$ An instrument $\mathcal{I}$ for $(\mathcal{M},S)$ is called weakly repeatable if
$\mathcal{I}(\mathcal{I}(1,\Delta_2),\Delta_1)=\mathcal{I}(1,\Delta_2\cap\Delta_1)$
for all $\Delta_1,\Delta_2\in\mathcal{F}$.\\
$(3)$ An instrument $\mathcal{I}$ for $(\mathcal{M},S)$ is called discrete if
there exist a countable subset $S_0$ of $S$ and a map $T:S_0\rightarrow P(\mathcal{M}_\ast)$
such that
\begin{equation}
\mathcal{I}(\Delta)=\sum_{s\in\Delta} T(s)
\end{equation}
for all $\Delta\in\mathcal{F}$.
\end{definition}
\begin{proposition}[\text{\cite[Proposition 5.9]{OO16}}]\label{DCPinst1}
Let $\mathcal{M}$ be a von Neumann algebra on a Hilbert space
$\mathcal{H}$ and $(S,\mathcal{F})$ a measurable space.
Every discrete CP instrument $\mathcal{I}$ for $(\mathcal{M},S)$ has the NEP.
\end{proposition}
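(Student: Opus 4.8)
The plan is to construct the required unital normal CP extension $\widetilde{\Psi_\mathcal{I}}$ directly, exploiting the fact that discreteness forces $L^\infty(S,\mathcal{I})$ to be atomic. First I would analyze the measure-theoretic structure attached to $\mathcal{I}$: since $\mathcal{I}(\Delta)=\sum_{s\in\Delta}T(s)$, every measure $\Delta\mapsto\langle\mathcal{I}(\Delta)\rho,M\rangle$ is purely atomic and concentrated on the countable set $S_0$. Consequently the $\mathcal{I}$-negligible functions are exactly those vanishing on $\{s\in S_0:T(s)\neq 0\}$, so that $L^\infty(S,\mathcal{I})$ is $^\ast$-isomorphic to $\ell^\infty(S_0)$ (after discarding the null points, which I may assume away by shrinking $S_0$). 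Under this identification the minimal projections $e_s=[\chi_{\{s\}}]$ form a complete orthogonal family, and each $T^\ast(s):=\mathcal{I}(\cdot,\{s\})$ is a normal CP map on $\mathcal{M}$ with $\sum_{s\in S_0}T^\ast(s)(1)=\mathcal{I}(1,S)=1$, the series converging ultraweakly by the dual $\sigma$-additivity $(iii)$.

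Next I would identify the ambient W$^\ast$-tensor product. Because $\ell^\infty(S_0)$ is atomic, $\mathcal{M}\overline{\otimes}L^\infty(S,\mathcal{I})$ is canonically the $\ell^\infty$-direct product $\prod_{s\in S_0}\mathcal{M}$, i.e. the bounded families $(M_s)_{s\in S_0}$ with $M_s\in\mathcal{M}$, whereas $\mathcal{M}\otimes_{\mathrm{min}}L^\infty(S,\mathcal{I})$ is the strictly smaller norm-closed subalgebra on which $\Psi_\mathcal{I}$ already lives. On the product I define
\begin{equation}
\widetilde{\Psi_\mathcal{I}}\big((M_s)_{s\in S_0}\big)=\sum_{s\in S_0}T^\ast(s)(M_s),
\end{equation}
the series being understood as an ultraweak limit of its finite partial sums. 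Convergence is the first thing to check: writing each $M_s$ as a combination of four positive operators bounded by $\Vert M_s\Vert$ and using $0\leq T^\ast(s)(N)\leq\Vert N\Vert\,T^\ast(s)(1)$ for $N\geq 0$ together with $\sum_s T^\ast(s)(1)=1$, the partial sums over each positive piece form a bounded increasing net and hence converge ultraweakly.

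It then remains to verify the remaining required properties. Unitality is immediate from the normalization condition. The extension property follows by evaluating on elementary tensors: for $f\in\ell^\infty(S_0)$ the element $M\otimes f$ corresponds to the family $(f(s)M)_s$, whence $\widetilde{\Psi_\mathcal{I}}(M\otimes f)=\sum_s f(s)\,T^\ast(s)(M)$, which agrees with $\Psi_\mathcal{I}(M\otimes[\chi_\Delta])=\mathcal{I}(M,\Delta)$ when $f=\chi_\Delta$ and extends by linearity and norm-continuity. Complete positivity reduces, via $M_n(\prod_s\mathcal{M})=\prod_s M_n(\mathcal{M})$, to the componentwise statement that each amplification $T^{\ast(n)}(s)$ is positive, which holds since every $T^\ast(s)$ is CP; the sum of the resulting positive terms stays positive in the ultraweak limit.

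The hard part will be normality, where the infinite sum and the ultraweak limits must be interchanged. I would prove it through monotone nets: given a bounded increasing net $(X^{(\alpha)})$ in the positive cone of $\prod_s\mathcal{M}$ with supremum $X=(M_s)_s$, positivity of $\widetilde{\Psi_\mathcal{I}}$ gives $\sup_\alpha\widetilde{\Psi_\mathcal{I}}(X^{(\alpha)})\leq\widetilde{\Psi_\mathcal{I}}(X)$. For the reverse inequality I would fix a finite $F\subset S_0$, use normality of the finite sum $\sum_{s\in F}T^\ast(s)$ to obtain $\sum_{s\in F}T^\ast(s)(M_s)=\sup_\alpha\sum_{s\in F}T^\ast(s)(X^{(\alpha)}_s)\leq\sup_\alpha\widetilde{\Psi_\mathcal{I}}(X^{(\alpha)})$, and then let $F$ exhaust $S_0$ to get $\widetilde{\Psi_\mathcal{I}}(X)\leq\sup_\alpha\widetilde{\Psi_\mathcal{I}}(X^{(\alpha)})$. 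Normality on the positive cone yields ultraweak continuity, so $\widetilde{\Psi_\mathcal{I}}$ is the sought normal CP extension and $\mathcal{I}$ has the NEP. The only genuine subtlety I anticipate is the clean identification of $L^\infty(S,\mathcal{I})$ with $\ell^\infty(S_0)$, in particular checking that the atoms of the induced measure structure are exactly the points $s\in S_0$ with $T(s)\neq 0$ and that each such singleton is $\mathcal{I}$-measurable, so that $T^\ast(s)=\mathcal{I}(\cdot,\{s\})$ is legitimately CP.
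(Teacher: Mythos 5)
Your overall strategy --- reduce $L^\infty(S,\mathcal{I})$ to an atomic algebra, realize $\mathcal{M}\overline{\otimes}L^\infty(S,\mathcal{I})$ as an $\ell^\infty$-product of copies of $\mathcal{M}$, and define the extension as an ultraweakly convergent sum of pointwise normal CP maps --- is sound, and your analytic verifications are all correct as written: the convergence of partial sums via bounded increasing nets of positives, unitality from $\sum_s T^\ast(s)(1)=\mathcal{I}(1,S)=1$, agreement with $\Psi_\mathcal{I}$ on the norm-dense span of elementary tensors, componentwise complete positivity through $M_n(\prod_s\mathcal{M})=\prod_s M_n(\mathcal{M})$, and the finite-subset exchange argument for normality. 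Note that the paper itself gives no proof of this proposition, deferring to \cite[Proposition 5.9]{OO16}, so a repaired version of your argument would stand as a self-contained proof.

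However, the subtlety you flagged at the end is a genuine gap, and the resolution you assert for it is false in general. In an abstract measurable space $(S,\mathcal{F})$ the singletons $\{s\}$, $s\in S_0$, need not belong to $\mathcal{F}$, nor even to its completion modulo $\mathcal{I}$-null sets: no $\mathcal{I}$-null set can contain a point of $S_0'=\{s\in S_0 : T(s)\neq 0\}$, so if two points of $S_0'$ are not separated by $\mathcal{F}$, neither singleton is $\mathcal{I}$-measurable. For instance, with $S=\{1,2\}$, $\mathcal{F}=\{\emptyset,S\}$ and $T(1),T(2)\neq 0$, one has $L^\infty(S,\mathcal{I})\cong\mathbb{C}$, not $\ell^\infty(S_0)$. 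Worse, your appeal to ``each $T^\ast(s)$ is CP'' then breaks down: the definition of discreteness only requires $T(s)\in P(\mathcal{M}_\ast)$, i.e.\ positivity, and complete positivity of $\mathcal{I}$ only yields CP-ness of $\mathcal{I}(\cdot,\Delta)$ for $\Delta\in\mathcal{F}$; two positive non-CP maps can perfectly well have a CP sum, so the componentwise CP step of your proof has nothing to stand on when singletons are not measurable. The repair is cheap and preserves your architecture. Any $\sigma$-field on a countable set is atomic with measurable atoms, since the atom of $s$ is the countable intersection $\bigcap_{t\not\sim s}A_{s,t}$ of separating sets; hence the trace $\sigma$-field $\mathcal{F}\cap S_0'$ partitions $S_0'$ into countably many atoms $A_k=\Delta_k\cap S_0'$ with $\Delta_k\in\mathcal{F}$. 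Every $\mathcal{F}$-measurable function is constant on each $A_k$ (the preimage of a value is measurable and would otherwise separate the atom), which gives the correct identification $L^\infty(S,\mathcal{I})\cong\ell^\infty(\{A_k\})$, and the correct building blocks are the maps $\mathcal{I}(\cdot,\Delta_k)=\sum_{s\in A_k}T^\ast(s)$, which are well defined modulo null sets and are normal and CP because $\mathcal{I}$ is a CP instrument. Running your construction with $\mathcal{I}(\cdot,\Delta_k)$ in place of $T^\ast(s)$, every subsequent step --- convergence, unitality, the extension property, complete positivity, and the normality argument --- goes through verbatim. When $\mathcal{F}$ separates the points of $S_0$, e.g.\ for a standard Borel space, the atoms are the singletons and your original formula is recovered.
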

\begin{theorem}[\text{\cite[Theorem 5.10]{OO16}}] \label{DNEP1}
Let $\mathcal{M}$ be a von Neumann algebra on a Hilbert space
$\mathcal{H}$ and $(S,\mathcal{F})$ a standard Borel space.
A weakly repeatable CP instrument $\mathcal{I}$ for $(\mathcal{M},S)$ is discrete
if and only if it has the NEP.
\end{theorem}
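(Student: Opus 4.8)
The ``only if'' implication is immediate: by Proposition~\ref{DCPinst1} every discrete CP instrument has the NEP, and this uses no repeatability. So the whole content lies in the ``if'' implication, and I assume from now on that $\mathcal{I}$ is weakly repeatable and has the NEP, aiming to prove discreteness. Write $\Pi(\Delta)=\mathcal{I}(1,\Delta)$ for the associated normalized POVM and $\Phi=\mathcal{I}(\cdot,S)$ for the (normal, unital, CP) nonselective operation.

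First I would realize $\mathcal{I}$ by a dilation. By Theorem~\ref{THMNEP} the NEP yields a measuring process $\mathbb{M}=(\mathcal{K},\sigma,E,U)$ with $\mathcal{I}=\mathcal{I}_\mathbb{M}$; after enlarging $\mathcal{K}$ so that $\sigma$ is a vector state I obtain an isometry $V:\mathcal{H}\to\mathcal{H}\otimes\mathcal{K}$ and the PVM $E$ with $\mathcal{I}(M,\Delta)=V^\ast(M\otimes E(\Delta))V$. The engine is to extract from weak repeatability the identity $\mathcal{I}(\Pi(\Delta),\Delta')=\Pi(\Delta\cap\Delta')$, which for disjoint $\Delta,\Delta'$ gives $V^\ast(\Pi(\Delta)\otimes E(\Delta'))V=0$; positivity of the integrand forces $(\Pi(\Delta)^{1/2}\otimes E(\Delta'))V=0$, and summing over the complement yields the localization identity $(\Pi(\Delta)^{1/2}\otimes 1)V=(\Pi(\Delta)^{1/2}\otimes E(\Delta))V$. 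Two consequences I would record: $\Phi(\Pi(\Delta))=\Pi(\Delta)$, and, for any measurable partition $\{\Delta_i\}$, the vectors $(\Pi(\Delta_i)^{1/2}\otimes1)V\psi=(\Pi(\Delta_i)^{1/2}\otimes E(\Delta_i))V\psi$ are pairwise orthogonal because the $E(\Delta_i)$ are.

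Next I would show $\Pi$ is projection-valued. Summing the orthogonal vectors above over a partition $\{\Delta_i\}$ of $S$ gives $\Vert((\sum_i\Pi(\Delta_i)^{1/2})\otimes1)V\psi\Vert^2=\sum_i\langle\psi,\Pi(\Delta_i)\psi\rangle=\Vert\psi\Vert^2$, i.e. $\Phi(G^2)=1$ with $G=\sum_i\Pi(\Delta_i)^{1/2}$. Since $\Pi(\Delta_i)^{1/2}\ge\Pi(\Delta_i)$ one has $G\ge1$, hence $G^2\ge1$ and $\Phi(G^2-1)=0$; on the support of $\Phi$ this forces $G=1$, whence $\Pi(\Delta_i)^{1/2}=\Pi(\Delta_i)$ for every $i$ and $\Pi$ is a PVM (the complementary part of $\Phi$ being annihilated by $\mathcal{I}$, hence irrelevant). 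Once $\Pi$ is projection-valued, each $\Pi(\Delta)$ lies in the multiplicative domain of $\Phi$, so the localization identity upgrades $\mathcal{I}$ to the L\"{u}ders-type form $\mathcal{I}(M,\Delta)=\Pi(\Delta)\,\Phi(M)\,\Pi(\Delta)$; then $\sigma$-additivity of the instrument (vanishing of the cross terms $\Pi(\Delta_i)\Phi(M)\Pi(\Delta_j)$, $i\ne j$, over every partition) forces $\Phi(M)$ to commute with every $\Pi(\Delta)$, that is $\Phi(\mathcal{M})\subseteq\mathcal{A}'$ with $\Phi|_{\mathcal{A}}=\mathrm{id}$, where $\mathcal{A}=\{\Pi(\Delta):\Delta\in\mathcal{F}\}''$.

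The final step is to conclude atomicity. In the multiplicity-free case $\mathcal{A}$ is a MASA, $\mathcal{A}'=\mathcal{A}$, and $\Phi$ is a \emph{normal} unital CP map with range in $\mathcal{A}$ fixing $\mathcal{A}$, hence a normal conditional expectation onto $\mathcal{A}$; by the atomicity criterion (a normal conditional expectation onto $\mathcal{N}$ exists only when $\mathcal{N}$ is atomic) this is impossible unless $\mathcal{A}$ is atomic. Since $(S,\mathcal{F})$ is standard Borel, atomicity of $\mathcal{A}=\{\Pi\}''$ means exactly that $\Pi$ is carried by a countable set $S_0$, i.e. $\Pi(S\setminus S_0)=0$; and then $0\le\mathcal{I}(M,\Delta)\le\Vert M\Vert\,\Pi(\Delta)$ for $M\ge0$ gives $\mathcal{I}(\cdot,\Delta)=0$ whenever $\Delta\cap S_0=\emptyset$, so $\mathcal{I}(\Delta)=\sum_{s\in\Delta}T(s)$ with $T(s)=\mathcal{I}(\{s\})$, proving $\mathcal{I}$ discrete. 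I expect the genuine difficulty to sit in this last step in the presence of spectral multiplicity and for general $\mathcal{M}$: there $\mathcal{A}\subsetneq\mathcal{A}'$ and one cannot invoke Tomiyama's theorem verbatim, so the normality obstruction must be localized to the center of $\mathcal{A}$ (equivalently, passed to a MASA refinement) before the atomicity criterion applies; making this reduction rigorous, and confirming that the non-faithful part of $\Phi$ is harmless, is where the real work lies.
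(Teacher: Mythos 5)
Your proof breaks at its central structural claim: weak repeatability plus the NEP does \emph{not} force the associated POVM $\Pi$ to be projection-valued, and everything downstream (the multiplicative-domain argument, the L\"{u}ders form $\mathcal{I}(M,\Delta)=\Pi(\Delta)\Phi(M)\Pi(\Delta)$, the conditional expectation onto $\mathcal{A}$) collapses with it. Concretely, take $\mathcal{H}=\mathbb{C}^3$ with orthonormal basis $e_1,e_2,e_3$, $S=\{1,2\}$, fix $t\in(0,1)$, and set $\Pi_1=\mathrm{diag}(1,t,0)$, $\Pi_2=\mathrm{diag}(0,1-t,1)$, $\mathcal{I}(M,\{1\})=\langle e_1|Me_1\rangle\,\Pi_1$, $\mathcal{I}(M,\{2\})=\langle e_3|Me_3\rangle\,\Pi_2$. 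This is a discrete CP instrument, hence has the NEP by Proposition \ref{DCPinst1}; it is weakly repeatable, since $\langle e_1|\Pi_1e_1\rangle=1$, $\langle e_1|\Pi_2e_1\rangle=0$, $\langle e_3|\Pi_2e_3\rangle=1$, $\langle e_3|\Pi_1e_3\rangle=0$ give $\mathcal{I}(\Pi_j,\{i\})=\delta_{ij}\Pi_i$; yet $\Pi_1$ is not a projection and the L\"{u}ders form fails. Your own computations are consistent with this: here $G=\Pi_1^{1/2}+\Pi_2^{1/2}=\mathrm{diag}(1,\sqrt{t}+\sqrt{1-t},1)\neq 1$ while $\Phi(G^2)=1$, so the step ``$\Phi(G^2-1)=0$ forces $G=1$ on the support of $\Phi$, and the rest is irrelevant'' is exactly the loophole: $\Phi$ need not be faithful, and the non-faithful part is not ``annihilated by $\mathcal{I}$'' --- it is precisely where the non-projection behaviour lives, compatibly with weak repeatability. (Separately, for an infinite partition the operator $G=\sum_i\Pi(\Delta_i)^{1/2}$ need not even converge, though finite partitions would have sufficed for your intended claim.) Your final step is also unsound as stated: the atomicity criterion quoted in the paper (after Theorem \ref{THMNEP}) characterizes atomicity by the existence of a normal conditional expectation \emph{from} $\mathbf{B}(\mathcal{H})$ onto $\mathcal{M}$; a normal conditional expectation from a general $\mathcal{M}$ onto a subalgebra $\mathcal{A}$ implies nothing of the sort --- the paper's own Example \ref{NoNEP2} uses a normal conditional expectation of a type $\mathrm{II}_1$ factor onto the non-atomic algebra $\{A\}^\prime\cap\mathcal{M}$.

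For comparison: this paper does not prove the statement but cites \cite[Theorem 5.10]{OO16}, and the argument there runs through a reduction you never make. By Theorem \ref{THMNEP} (ii), the NEP yields a CP instrument $\widetilde{\mathcal{I}}$ for $(\mathbf{B}(\mathcal{H}),S)$ extending $\mathcal{I}$; since $\widetilde{\mathcal{I}}(1,\Delta)=\mathcal{I}(1,\Delta)\in\mathcal{M}$, the weak repeatability identity $\widetilde{\mathcal{I}}(\widetilde{\mathcal{I}}(1,\Delta_2),\Delta_1)=\widetilde{\mathcal{I}}(1,\Delta_1\cap\Delta_2)$ only ever evaluates $\widetilde{\mathcal{I}}$ on elements of $\mathcal{M}$, so $\widetilde{\mathcal{I}}$ is again weakly repeatable. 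One then invokes Ozawa's theorem \cite[Theorem 6.6]{Oz84} that every weakly repeatable CP instrument for $(\mathbf{B}(\mathcal{H}),S)$ with $S$ standard Borel is discrete, and discreteness of $\mathcal{I}$ follows since the two instruments agree on $\mathcal{M}$. In other words, the genuinely hard content is the $\mathbf{B}(\mathcal{H})$ discreteness theorem (where standard Borel structure enters), not a PVM/conditional-expectation structure theorem on $\mathcal{M}$; your proposal substitutes for it a structural argument that the counterexample above refutes, and your closing diagnosis (spectral multiplicity, MASA refinement) locates the difficulty in the wrong place.
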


In the case where $\mathcal{M}$ is non-atomic,
there exist CP instruments for $(\mathcal{M},S)$ without the NEP.
The following two CP instruments are such examples.
\begin{example}[\textrm{\cite[pp.~292--293]{Oz85}, \cite[Example 5.1]{OO16}}]\label{NoNEP1}
Let $m$ be Lebesgue measure on $[0,1]$. A CP instrument $\mathcal{I}_m$ for
$(L^{\infty}([0,1],m), [0,1])$ is defined by $\mathcal{I}_m(f,\Delta)=[\chi_\Delta] f$
for all $\Delta\in\mathcal{B}([0,1])$ and $f\in L^{\infty}([0,1],m)$.
\end{example}

A von Neumann algebra $\mathcal{M}$ is said to be
approximately finite-dimensional (AFD) if there is an increasing net
$\{\mathcal{M}_\alpha\}_{\alpha\in A}$ of finite-dimensional von Neumann subalgebras of $\mathcal{M}$
such that
\begin{equation}
\mathcal{M}=\overline{\bigcup_{\alpha\in A} \mathcal{M}_\alpha}^{uw}.
\end{equation}
\begin{example}[\textrm{\cite[Example 5.2]{OO16}}]\label{NoNEP2}
Let $\mathcal{M}$ be an AFD von Neumann algebra of type $\mathrm{II}_1$
on a separable Hilbert space $\mathcal{H}$.
Let $A=\int_\mathbb{R}a\;dE^A(a)$
be a self-adjoint operator with continuous spectrum affiliated with $\mathcal{M}$
and $\mathcal{E}$ a (normal) conditional expectation of $\mathcal{M}$ onto
$\{A\}^\prime\cap\mathcal{M}$ (the existence of $\mathcal{E}$ was first found by \cite[Theorem 1]{U54}),
where $\{A\}^\prime=\{E^{A}(\Delta)\;|\; \Delta\in \mathcal{B}(\mathbb{R})\}^\prime$.
A CP instrument $\mathcal{I}_A$ for $(\mathcal{M},\mathbb{R})$ is defined by 
\begin{equation}\label{repeatable}
\mathcal{I}_A(M,\Delta)=\mathcal{E}(M)E^A(\Delta)
\end{equation}
for all $M\in\mathcal{M}$ and $\Delta\in\mathcal{B}(\mathbb{R})$.
\end{example}
By Theorem \ref{DNEP1},
the weak repeatability and the non-discreteness of $\mathcal{I}_m$ and $\mathcal{I}_A$
imply the non-existence of measuring processes which define them.
These examples are very important for the dilation theory of CP maps
since they revealed the existence of families of CP maps which do not admit unitary dilations.

The following theorem
holds for general $\sigma$-finite von Neumann algebras
without assuming any other conditions.
\begin{theorem} \label{GVN1}
Let $\mathcal{M}$ be a von Neumann algebra on a Hilbert space $\mathcal{H}$
and $(S,\mathcal{F})$ a measurable space.
For every CP instrument $\mathcal{I}$ for $(\mathcal{M},S)$,
$n\in\mathbb{N}$, $\rho_1,\cdots,\rho_n\in\mathcal{S}_n(\mathcal{M})$,
$M_1,\cdots,M_n\in\mathcal{M}$ and $\Delta_1,\cdots,\Delta_n\in\mathcal{F}$,
there exists a measuring process $\mathbb{M}=(\mathcal{K},\sigma,E,U)$ for $(\mathcal{M},S)$ such that
\begin{equation}
\langle\rho_j,\mathcal{I}(M_j,\Delta_j)\rangle
=\langle\rho_j, \mathcal{I}_\mathbb{M}(M_j,\Delta_j)\rangle
\end{equation}
for all $j=1,\cdots,n$.
\end{theorem}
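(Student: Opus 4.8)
The plan is to reduce the problem to the discrete case, where realizability by a measuring process is already guaranteed by Proposition \ref{DCPinst1} together with Theorem \ref{THMNEP}. The key observation is that the conclusion only constrains the behavior of $\mathcal{I}$ on the finitely many sets $\Delta_1,\dots,\Delta_n$, so only a finite coarse-graining of $(S,\mathcal{F})$ is relevant, and coarse-graining is exactly what turns $\mathcal{I}$ into a discrete instrument.

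First I would form the finite Boolean subalgebra of $\mathcal{F}$ generated by $\Delta_1,\dots,\Delta_n$ and list its nonempty atoms $B_1,\dots,B_m$; these constitute a finite measurable partition of $S$, and each $\Delta_j$ is a union of some of the $B_i$. Choosing a representative point $s_i\in B_i$ for each $i$, I would define a map $\mathcal{J}:\mathcal{F}\rightarrow P(\mathcal{M}_\ast)$ by $\mathcal{J}(\Delta)=\sum_{i:\,s_i\in\Delta}\mathcal{I}(B_i)$. I would then check that $\mathcal{J}$ is a discrete CP instrument for $(\mathcal{M},S)$: complete positivity is inherited since $\mathcal{J}(\Delta)$ is a finite sum of the completely positive maps $\mathcal{I}(B_i)$; $\sigma$-additivity holds because each $s_i$ lies in at most one member of a disjoint family, so reindexing the finite sum gives $\mathcal{J}(\cup_k\Delta_k)=\sum_k\mathcal{J}(\Delta_k)$; and normalization follows from $\sum_i\mathcal{I}(1,B_i)=\mathcal{I}(1,S)=1$ by $\sigma$-additivity of $\mathcal{I}$. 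Discreteness holds with $S_0=\{s_1,\dots,s_m\}$ and $T(s_i)=\mathcal{I}(B_i)$. Since each $\Delta_j$ is a union of atoms, $s_i\in\Delta_j$ holds precisely when $B_i\subseteq\Delta_j$, whence $\mathcal{J}(\Delta_j)=\sum_{B_i\subseteq\Delta_j}\mathcal{I}(B_i)=\mathcal{I}(\Delta_j)$ and therefore $\mathcal{J}(M,\Delta_j)=\mathcal{I}(M,\Delta_j)$ for every $M\in\mathcal{M}$. Finally, Proposition \ref{DCPinst1} gives that the discrete CP instrument $\mathcal{J}$ has the NEP, so Theorem \ref{THMNEP} furnishes a measuring process $\mathbb{M}=(\mathcal{K},\sigma,E,U)$ with $\mathcal{I}_\mathbb{M}=\mathcal{J}$; then $\langle\rho_j,\mathcal{I}_\mathbb{M}(M_j,\Delta_j)\rangle=\langle\rho_j,\mathcal{J}(M_j,\Delta_j)\rangle=\langle\rho_j,\mathcal{I}(M_j,\Delta_j)\rangle$ for all $j$, which in fact yields equality for every state and every $M$, so the role of the $\rho_j$ is only to mark that a single finite data set is being matched.

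The genuinely delicate point here is conceptual rather than computational. Examples \ref{NoNEP1} and \ref{NoNEP2} show that $\mathcal{I}$ itself need not admit any measuring process, so one cannot hope to realize $\mathcal{I}$ globally; the whole content is that passing to the finite algebra generated by $\Delta_1,\dots,\Delta_n$ replaces the possibly non-discrete, NEP-lacking instrument by a discrete one, and discreteness is precisely the property that restores the NEP. Accordingly, I expect the only thing that needs care is the bookkeeping showing that the atomization preserves \emph{exact} agreement on each $\Delta_j$ while producing a bona fide instrument on the full space $(S,\mathcal{F})$, rather than any hard analytic estimate.
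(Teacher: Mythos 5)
Your proof is correct and is essentially the paper's own argument: the paper likewise passes to the (finite) $\sigma$-subfield generated by $\Delta_1,\dots,\Delta_n,S$, takes its atoms $\{\Gamma_i\}$, picks representatives $s_i\in\Gamma_i$, defines the discrete CP instrument $\mathcal{I}'(M,\Delta)=\sum_i\delta_{s_i}(\Delta)\mathcal{I}(M,\Gamma_i)$ agreeing exactly with $\mathcal{I}$ on each $(M_j,\Delta_j)$, and invokes Proposition \ref{DCPinst1} (with Theorem \ref{THMNEP}) to obtain the measuring process. Your added bookkeeping ($\sigma$-additivity, normalization, and the observation that agreement actually holds for all states and all $M$ on the $\Delta_j$) is sound and merely spells out what the paper calls obvious.
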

\begin{proof}
Let $n\in\mathbb{N}$, $\rho_1,\cdots,\rho_n\in\mathcal{S}_n(\mathcal{M})$,
$M_1,\cdots,M_n\in\mathcal{M}\backslash \{0\}$
and $\Delta_1,\cdots,\Delta_n\in\mathcal{F}\backslash \{\emptyset\}$.
Let $\mathcal{F}^\prime$ be a $\sigma$-subfield of $\mathcal{F}$ generated by $\Delta_1,\cdots,\Delta_n,S$.
Let $\{\Gamma_i\}_{i=1}^m \subset \mathcal{F}^\prime\backslash\{\emptyset\}$ be a maximal partition of $S$, i.e.,
$\{\Gamma_i\}_{i=1}^m$ satisfies the following conditions:\\
$(1)$ For every $i=1,\cdots,m$, if $\Delta\in\mathcal{F}^\prime$ satisfies $\Delta\subset\Gamma_i$,
then $\Delta$ is $\Gamma_i$ or $\emptyset$;\\
$(2)$ $\cup_{i=1}^m \Gamma_i=S$;\\
$(3)$ $\Gamma_i \cap\Gamma_j=\emptyset$ if $i\neq j$.\\
We fix $s_1,\cdots.s_m\in S$ such that $s_i\in\Gamma_i$ for all $i=1,\cdots,m$.
We define a discrete CP instrument $\mathcal{I}^\prime$ for $(\mathcal{M},S)$ by
\begin{equation}
\mathcal{I}^\prime(M,\Delta)=\sum_{j=1}^m\delta_{s_j}(\Delta)\mathcal{I}(M,\Gamma_j)
\end{equation}
for all $M\in\mathcal{M}$ and $\Delta\in\mathcal{F}$.
It is then obvious that $\mathcal{I}^\prime$ satisfies
\begin{equation}
\langle\rho_j,\mathcal{I}(M_j,\Delta_j)\rangle
=\langle\rho_j, \mathcal{I}^\prime(M_j,\Delta_j)\rangle
\end{equation}
for all $j=1,\cdots,n$.
By Proposition \ref{DCPinst1}, there exists a measuring process
$\mathbb{M}=(\mathcal{K},\sigma,E,U)$ for $(\mathcal{M},S)$ such that
$\mathcal{I}^\prime(M,\Delta)=\mathcal{I}_\mathbb{M}(M,\Delta)$ 
for all $M\in\mathcal{M}$ and $\Delta\in\mathcal{F}$. The proof is complete.
\end{proof}
\begin{corollary}
Let $\mathcal{M}$ be a von Neumann algebra on a Hilbert space $\mathcal{H}$
and $(S,\mathcal{F})$ a measurable space. Then we have
\begin{equation}
\mathrm{CPInst}_{\mathrm{AN}}(\mathcal{M},S)= \mathrm{CPInst}(\mathcal{M},S).
\end{equation}
\end{corollary}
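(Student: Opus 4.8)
The inclusion $\mathrm{CPInst}_{\mathrm{AN}}(\mathcal{M},S)\subseteq\mathrm{CPInst}(\mathcal{M},S)$ is immediate from the definition, so the plan is to prove the reverse inclusion: every CP instrument $\mathcal{I}$ for $(\mathcal{M},S)$ has the ANEP. The idea is to read the ANEP off directly from Theorem \ref{GVN1}, which already produces, for each finite family of test data, a measuring process---hence, via Theorem \ref{THMNEP} (condition $(iii)\Rightarrow(i)$), equivalently via the discreteness of the constructed instrument together with Proposition \ref{DCPinst1}, a CP instrument with the NEP---agreeing with $\mathcal{I}$ on that family. The only real work is to organize these instruments into a single net whose ultraweak limit is $\mathcal{I}$.

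First I would fix the directed set. Let $\Lambda$ be the collection of finite subsets $\alpha=\{(\rho_1,M_1,\Delta_1),\dots,(\rho_n,M_n,\Delta_n)\}$ of $\mathcal{S}_n(\mathcal{M})\times\mathcal{M}\times\mathcal{F}$, ordered by inclusion; this is directed since the union of two finite sets is finite. For each $\alpha\in\Lambda$ Theorem \ref{GVN1} furnishes a measuring process $\mathbb{M}_\alpha$, and I set $\mathcal{I}_\alpha:=\mathcal{I}_{\mathbb{M}_\alpha}$, which has the NEP and satisfies $\langle\rho,\mathcal{I}_\alpha(M,\Delta)\rangle=\langle\rho,\mathcal{I}(M,\Delta)\rangle$ for every $(\rho,M,\Delta)\in\alpha$.

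Next I would verify the ultraweak convergence $\mathcal{I}_\alpha(M,\Delta)\to\mathcal{I}(M,\Delta)$ for each fixed $M\in\mathcal{M}$ and $\Delta\in\mathcal{F}$. Fix also $\rho\in\mathcal{M}_\ast$ and decompose it as a linear combination $\rho=\sum_{k=1}^4 c_k\omega_k$ of normal states $\omega_k\in\mathcal{S}_n(\mathcal{M})$ (real and imaginary parts followed by Jordan decomposition, each nonzero positive part being a scalar multiple of a normal state). Put $\alpha_0=\{(\omega_k,M,\Delta):1\le k\le 4\}\in\Lambda$. Then for every $\alpha\supseteq\alpha_0$ the defining property of $\mathcal{I}_\alpha$ gives $\langle\omega_k,\mathcal{I}_\alpha(M,\Delta)\rangle=\langle\omega_k,\mathcal{I}(M,\Delta)\rangle$ for each $k$, whence by linearity $\langle\rho,\mathcal{I}_\alpha(M,\Delta)\rangle=\langle\rho,\mathcal{I}(M,\Delta)\rangle$. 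Thus the scalar net $\langle\rho,\mathcal{I}_\alpha(M,\Delta)\rangle$ is eventually constant and so converges to $\langle\rho,\mathcal{I}(M,\Delta)\rangle$; since $\rho\in\mathcal{M}_\ast$ was arbitrary, this is precisely ultraweak convergence. Therefore $\{\mathcal{I}_\alpha\}_{\alpha\in\Lambda}$ is a net of CP instruments with the NEP witnessing the ANEP for $\mathcal{I}$, and the corollary follows.

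The step I expect to require the most care is ensuring that one and the same net serves all triples $(\rho,M,\Delta)$ simultaneously, rather than a net that depends on the data one wants to approximate. This is exactly what forces the index set to range over \emph{all} finite subsets of $\mathcal{S}_n(\mathcal{M})\times\mathcal{M}\times\mathcal{F}$ and makes the cofinality of the tail $\{\alpha\supseteq\alpha_0\}$ do the work; the passage from normal states to a general $\rho\in\mathcal{M}_\ast$ through the four-states decomposition is the one genuinely algebraic point, everything else being bookkeeping on directed sets. It is worth noting that the argument in fact shows that $\mathcal{I}$ is approximated not merely by NEP instruments but by \emph{discrete} ones, so the corollary is essentially a repackaging of Theorem \ref{GVN1}.
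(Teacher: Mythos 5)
Your proof is correct and is precisely the argument the paper leaves implicit: the corollary is stated without proof as an immediate consequence of Theorem \ref{GVN1}, and your net indexed by finite subsets of $\mathcal{S}_n(\mathcal{M})\times\mathcal{M}\times\mathcal{F}$, with each $\mathcal{I}_\alpha$ a discrete (hence, by Proposition \ref{DCPinst1}, NEP) instrument agreeing with $\mathcal{I}$ on $\alpha$, is the intended construction. The four-state decomposition of $\rho\in\mathcal{M}_\ast$ and the eventual-constancy observation correctly supply the pointwise ultraweak convergence required by the definition of the ANEP, so nothing is missing.
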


In the case where $\mathcal{M}$ is injective, 
the result stronger than Theorem \ref{GVN1} is shown in \cite[Theorem 4.2]{OO16}:
for every CP instrument $\mathcal{I}$ for $(\mathcal{M},S)$,
$\varepsilon>0$, $n\in\mathbb{N}$, $\{\rho_i\}_{i=1}^n\subset\mathcal{S}_n(\mathcal{M})$,
$\{\Delta_i\}_{i=1}^n\subset\mathcal{F}$ and $\{M_i\}_{i=1}^n\subset\mathcal{M}$,
there exists a measuring process $\mathbb{M}$ for $(\mathcal{M},S)$ such that
\begin{equation}
|\langle \mathcal{I}(\Delta_i)\rho_i, M_i\rangle-
\langle \mathcal{I}_\mathbb{M}(\Delta_i)\rho_i, M_i\rangle|<\varepsilon
\end{equation}
for all $i=1,2, \cdots,n$, and that $\mathcal{I}(1,\Delta)=\mathcal{I}_\mathbb{M}(1,\Delta)$
for all $\Delta\in\mathcal{F}$.
In physically relevant cases, it is known that
every von Neumann algebra $\mathcal{M}$ describing the observable algebra
of a quantum system acts on a separable Hilbert space and is AFD.
For example, it is shown in \cite{BDF87} that
von Neumann algebras of local observables in quantum field theory are AFD
and acts on a separable Hilbert space under natural postulates, e.g., the Wightman axioms,
the nuclearity condition and the asymptotic scale invariance.
For every von Neumann algebra $\mathcal{M}$ on a separable Hilbert space (or with separable dual, equivalently),
$\mathcal{M}$ is AFD if and only if it is injective, furthermore, if and only if it is amenable \cite{Connes,T02}.
Hence the assumption of the injectivity for von Neumann algebras is very natural.

In quantum mechanics, complete positivity of instruments
is physically justified in \cite{Ozawa04,14MFQ-} by considering a natural extendability, 
called \textit{the trivial extendability}, of an instrument $\mathcal{I}$ on the system ${\bf S}$ to
that $\mathcal{I}^\prime$ on the composite system ${\bf S}+{\bf S^\prime}$ containing
the original one ${\bf S}$, where ${\bf S^\prime}$ is an arbitrary system
not interacting with ${\bf S}$ nor ${\bf A}({\bf x})$.
This justification of complete positivity is obtained as
a part of an axiomatic characterization of physically realizable measurements \cite{Ozawa04,14MFQ-}.
Then Theorem \ref{CPIMP} enables us to regard the Davies-Lewis proposal restricted to CP instruments
as a statement that is consistent with the standard formulation of quantum mechanics 
and hence acceptable for physicists. The above discussion is summarized as follows.
\begin{dlocriterion}
For every apparatus ${\bf A}({\bf x})$ measuring ${\bf S}$,
where ${\bf x}$ is the output variable of ${\bf A}({\bf x})$
taking values in a measurable space $(S,\mathcal{F})$,
there always exists a CP instrument $\mathcal{I}$ for $(\mathcal{M},S)$
corresponding to ${\bf A}({\bf x})$ in the sense of the Davies-Lewis proposal,
i.e., for every input state $\rho$ and outcome $\Delta\in\mathcal{F}$ both
the probability distribution $\mathrm{Pr}\{{\bf x}\in\Delta\Vert\rho\}$ of ${\bf x}$
and the state $\rho_{\{{\bf x}\in\Delta\}}$ after the measurement are obtained from $\mathcal{I}$.
\end{dlocriterion}

\subsection{Kernels}
Here, we briefly summerize the theory of kernels.
We refer the reader to \cite{EvansLewis,Lance95,Skeide01} for standard references.

\begin{definition}[Kernel \text{\cite[p.11, ll.1--3]{EvansLewis}}]
Let $C$ be a set and $\mathcal{H}$ a Hilbert space.
A map $K:C\times C\rightarrow{\bf B}(\mathcal{H})$ is called a
kernel of $C$ on $\mathcal{H}$. $\mathbb{K}(C;\mathcal{H})$ denotes
the set of kernels of $C$ on $\mathcal{H}$.
\end{definition}
It should be noted that $\mathbb{K}(C;\mathcal{H})$ has
a natural ${\bf B}(\mathcal{H})$-bimodule structure.
\begin{definition}[\text{\cite[Definition 1.1]{EvansLewis}}]
Let $K\in\mathbb{K}(C;\mathcal{H})$. $K$ is said to be positive definite if
\begin{equation}
\sum_{i,j=1}^n \langle \xi_i| K(c_i,c_j)\xi_j\rangle\geq 0
\end{equation}
for every $n\in\mathbb{N}$, $c_1,c_2,\cdots,c_n\in C$ and $\xi_1,\xi_2,\cdots,\xi_n\in\mathcal{H}$.
$\mathbb{K}(C;\mathcal{H})_+$ denotes
the set of positive definite kernels of $C$ on $\mathcal{H}$.
\end{definition}

\begin{definition}[Kolmogorov decomposition \text{\cite[Definition 1.3]{EvansLewis}}]
Let $K\in\mathbb{K}(C;\mathcal{H})$. A pair $(\mathcal{K},\Lambda)$
of a Hilbert space $\mathcal{K}$ and a map $\Lambda:C\rightarrow {\bf B}(\mathcal{H},\mathcal{K})$
is called a Kolmogorov decomposition of $K$ if it satisfies
\begin{equation}
K(c,c^\prime)=\Lambda(c)^\ast \Lambda(c^\prime)
\end{equation}
for all $c,c^\prime\in C$. A Kolmogorov decomposition $(\mathcal{K},\Lambda)$ of $K$
is said to be minimal if $\mathcal{K}=\overline{\mathrm{span}}(\Lambda(C)\mathcal{H})$.
\end{definition}
The following representation theorem holds for kernels.
\begin{theorem}[\text{\cite[Lemma 1.4, Theorems 1.8 and 1.9]{EvansLewis}}] \label{minKD}
Let $C$ be a set and $\mathcal{H}$ a Hilbert space.
For every $K\in\mathbb{K}(C;\mathcal{H})$, $K$ admits a Kolmogorov decomposition if and only if
it is an element of $\mathbb{K}(C;\mathcal{H})_+$.
For every $K\in\mathbb{K}(C;\mathcal{H})_+$, there exists
a minimal Kolmogorov decomposition $(\mathcal{K},\Lambda)$ of $K$, which is unique up to unitary equivalence.
\end{theorem}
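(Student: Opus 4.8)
The plan is to carry out the operator-valued analogue of the GNS/Naimark construction. The ``only if'' direction of the first assertion is immediate: if $(\mathcal{K},\Lambda)$ is a Kolmogorov decomposition of $K$, then for any $c_1,\dots,c_n\in C$ and $\xi_1,\dots,\xi_n\in\mathcal{H}$ one has
\[
\sum_{i,j=1}^n\langle\xi_i|K(c_i,c_j)\xi_j\rangle
=\sum_{i,j=1}^n\langle\Lambda(c_i)\xi_i|\Lambda(c_j)\xi_j\rangle
=\Big\|\sum_{i=1}^n\Lambda(c_i)\xi_i\Big\|^2\geq 0,
\]
so $K\in\mathbb{K}(C;\mathcal{H})_+$; thus positive definiteness is necessary, and it remains to produce a minimal decomposition from any $K\in\mathbb{K}(C;\mathcal{H})_+$.

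For the existence, I would form the free complex vector space $V_0$ on $C\times\mathcal{H}$ (equivalently, the finitely supported $\mathcal{H}$-valued functions on $C$), writing a generic element as a formal finite sum $u=\sum_i(c_i,\xi_i)$, and equip it with the sesquilinear form
\[
\langle u,v\rangle_K=\sum_{i,j}\langle\xi_i|K(c_i,c'_j)\xi'_j\rangle,\qquad u=\sum_i(c_i,\xi_i),\ v=\sum_j(c'_j,\xi'_j).
\]
Positive definiteness of $K$ says precisely that $\langle u,u\rangle_K\geq 0$, and taking $n=2$ it also forces the Hermitian symmetry $K(c',c)=K(c,c')^\ast$, so the form is a genuine positive semidefinite Hermitian form. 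Setting $N=\{u\in V_0:\langle u,u\rangle_K=0\}$, Cauchy--Schwarz shows $N$ is a subspace and that $\langle\cdot,\cdot\rangle_K$ descends to an inner product on $V_0/N$; I let $\mathcal{K}$ be the completion. Then I define $\Lambda(c)\xi$ to be the image in $\mathcal{K}$ of the class of $(c,\xi)$: linearity in $\xi$ is clear, and the estimate $\|\Lambda(c)\xi\|^2=\langle\xi|K(c,c)\xi\rangle\leq\|K(c,c)\|\,\|\xi\|^2$ shows $\Lambda(c)\in\mathbf{B}(\mathcal{H},\mathcal{K})$.

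By construction $\langle\Lambda(c)\xi|\Lambda(c')\xi'\rangle=\langle\xi|K(c,c')\xi'\rangle$ for all $\xi,\xi'\in\mathcal{H}$, which is exactly $\Lambda(c)^\ast\Lambda(c')=K(c,c')$; and since the classes of the $(c,\xi)$ span a dense subspace of $\mathcal{K}$, we get $\mathcal{K}=\overline{\mathrm{span}}(\Lambda(C)\mathcal{H})$, so $(\mathcal{K},\Lambda)$ is a minimal Kolmogorov decomposition. For uniqueness, given two minimal decompositions $(\mathcal{K},\Lambda)$ and $(\mathcal{K}',\Lambda')$, I would define $U$ on the dense subspace $\mathrm{span}(\Lambda(C)\mathcal{H})$ by $U\big(\sum_i\Lambda(c_i)\xi_i\big)=\sum_i\Lambda'(c_i)\xi_i$. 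The decisive computation is that both $\big\|\sum_i\Lambda(c_i)\xi_i\big\|^2$ and $\big\|\sum_i\Lambda'(c_i)\xi_i\big\|^2$ equal $\sum_{i,j}\langle\xi_i|K(c_i,c_j)\xi_j\rangle$; this single identity yields simultaneously the well-definedness of $U$ (the null spaces of the two maps coincide) and its isometry, so $U$ extends to a unitary $\mathcal{K}\to\mathcal{K}'$ with $U\Lambda(c)=\Lambda'(c)$ for all $c\in C$.

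I do not anticipate a serious obstacle, as this is the standard quotient-completion argument; the two points needing genuine care are the boundedness estimate for $\Lambda(c)$ (which is why the diagonal $K(c,c)$ must be a bounded operator, automatic since $K$ is $\mathbf{B}(\mathcal{H})$-valued) and the well-definedness of $U$ in the uniqueness step, both of which are dispatched by polarizing the positive-definiteness identity rather than by any separate estimate.
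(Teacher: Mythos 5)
Your proof is correct and follows exactly the standard quotient--completion (GNS/Naimark-type) construction: the paper itself does not reprove Theorem \ref{minKD} but cites Evans--Lewis, whose argument is the same one you give. The only points worth making explicit are the ones you already flag, plus the observation that the isometric extension of $U$ is surjective because its range contains the dense subspace $\mathrm{span}(\Lambda'(C)\mathcal{H})$ by minimality of the second decomposition.
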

This theorem is a key to the proof of the main theorem of the paper.
The famous Stinespring representation theorem is regarded as a corollary of this theorem.

\begin{theorem}[Arveson \text{\cite[Theorem 1.3.1]{Arveson69},
\cite[Theorem 12.7]{Paulsen02}}]\label{CommLift}
Let $\mathcal{H}$, $\mathcal{K}$ be Hilbert spaces, $\mathcal{B}$ a unital C$^\ast$-subalgebra of
${\bf B}(\mathcal{K})$ and $V$ an element
of ${\bf B}(\mathcal{H},\mathcal{K})$ such that
$\mathcal{K}=\overline{\mathrm{span}}(\mathcal{B}V\mathcal{H})$.
For every $A\in(V^\ast \mathcal{B}V)^\prime$, there exists a unique $A_1\in\mathcal{B}^\prime$ such that
$VA=A_1V$. Furthermore, the map $\pi^\prime:A\in(V^\ast \mathcal{B}V)^\prime\ni A\mapsto
A_1\in\mathcal{B}^\prime\cap\{VV^\ast\}^\prime$ is
an ultraweakly continuous surjective $^\ast$-homomorphism.
\end{theorem}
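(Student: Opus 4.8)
The plan is to build the correspondence $A\mapsto A_1$ by hand on the dense subspace $\mathcal{B}V\mathcal{H}\subseteq\mathcal{K}$ and then verify the algebraic and continuity properties, letting the uniqueness clause do most of the work. First I would define, for $A\in(V^\ast\mathcal{B}V)^\prime$, a map $A_1$ on finite sums by $A_1\big(\sum_i B_i V\xi_i\big)=\sum_i B_i VA\xi_i$. The crucial point is well-definedness and boundedness, which I would extract from the positivity of the operator matrix $[\,V^\ast B_i^\ast B_j V\,]_{i,j}$ on $\mathcal{H}^n$: writing $\tilde A=\mathrm{diag}(A,\dots,A)$ and noting that each entry $V^\ast B_i^\ast B_j V$ lies in $V^\ast\mathcal{B}V$ (so $\tilde A$ commutes with the whole matrix, because $\mathcal{B}$ is a $\ast$-algebra), the estimate $\big\|\sum_i B_i VA\xi_i\big\|\le\|A\|\,\big\|\sum_i B_i V\xi_i\big\|$ follows from the elementary fact that $\|P^{1/2}Sx\|\le\|S\|\,\|P^{1/2}x\|$ whenever $S$ commutes with a positive operator $P$. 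This simultaneously shows that $A_1$ respects the kernel relations and is bounded by $\|A\|$, so it extends to $A_1\in{\bf B}(\mathcal{K})$.

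Next I would read off the elementary properties. Taking the unit $1\in\mathcal{B}$ gives $A_1V=VA$; computing $A_1 B^\prime(BV\xi)=(B^\prime B)VA\xi=B^\prime A_1(BV\xi)$ shows $A_1\in\mathcal{B}^\prime$; and since any $X\in\mathcal{B}^\prime$ with $XV=VA$ is forced to agree with $A_1$ on the dense set $\mathcal{B}V\mathcal{H}$, uniqueness is immediate. Linearity and multiplicativity of $\pi^\prime$ then drop out of uniqueness: $A_1^\prime A_1$ lies in $\mathcal{B}^\prime$ and satisfies $(A_1^\prime A_1)V=V(A^\prime A)$, so $\pi^\prime(A^\prime A)=\pi^\prime(A^\prime)\pi^\prime(A)$.

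The $\ast$-property and the membership in $\{VV^\ast\}^\prime$ are where the spanning hypothesis $\mathcal{K}=\overline{\mathrm{span}}(\mathcal{B}V\mathcal{H})$ does its work. I would establish $A_1^\ast V=VA^\ast$ by testing against $BV\eta$: using $A_1\in\mathcal{B}^\prime$ and then moving $A$ across the element $V^\ast BV\in V^\ast\mathcal{B}V$ (which commutes with $A$), one finds $\langle A_1^\ast V\xi,BV\eta\rangle=\langle VA^\ast\xi,BV\eta\rangle$, and density forces $A_1^\ast V=VA^\ast$. Hence $\pi^\prime(A^\ast)=A_1^\ast$ by uniqueness, so $\pi^\prime$ is a $\ast$-homomorphism. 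Combining $A_1V=VA$ with the adjoint identity $V^\ast A_1=AV^\ast$ (obtained from $A_1^\ast V=VA^\ast$) yields $A_1VV^\ast=VAV^\ast=VV^\ast A_1$, so indeed $A_1\in\mathcal{B}^\prime\cap\{VV^\ast\}^\prime$.

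For surjectivity I would reverse the construction using the polar decomposition $V=|V^\ast|U$, where $P=UU^\ast$ is the range projection of $V$. Given $B_1\in\mathcal{B}^\prime\cap\{VV^\ast\}^\prime$, the element $B_1$ commutes with $VV^\ast$, hence with $|V^\ast|$ and $P$ by functional calculus; setting $A:=U^\ast B_1U$ and using $PU=U$ together with these commutations, a short computation gives $VA=|V^\ast|B_1U=B_1V$. The same computation applied to $B_1^\ast$ gives $AV^\ast=V^\ast B_1$, and then $AV^\ast BV=V^\ast B_1 BV=V^\ast BB_1V=V^\ast B VA$ shows $A\in(V^\ast\mathcal{B}V)^\prime$ with $\pi^\prime(A)=B_1$. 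Finally, ultraweak continuity I would check on bounded nets only: for $A_\alpha\to A$ $\sigma$-weakly with a uniform norm bound, the matrix elements $\langle BV\xi,(A_\alpha)_1 B^\prime V\xi^\prime\rangle=\langle (V^\ast B^\ast B^\prime V)^\ast\xi,A_\alpha\xi^\prime\rangle$ converge, so $\pi^\prime$ is weak-operator continuous on the unit ball, which for a $\ast$-homomorphism of von Neumann algebras is normality. The main obstacle I expect is the boundedness estimate of the first step together with the surjectivity argument; once the positivity of $[\,V^\ast B_i^\ast B_j V\,]$ and the identity $V=|V^\ast|U$ are in hand, everything else is bookkeeping driven by the uniqueness clause.
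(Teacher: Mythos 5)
Your proposal is correct, and it coincides with the standard argument: the paper itself gives no proof of this theorem but imports it from Arveson \cite[Theorem 1.3.1]{Arveson69} and Paulsen \cite[Theorem 12.7]{Paulsen02}, whose proof is exactly your construction --- define $A_1$ on $\mathrm{span}(\mathcal{B}V\mathcal{H})$, get well-definedness and the bound $\Vert A_1\Vert\leq\Vert A\Vert$ from positivity of $[\,V^\ast B_i^\ast B_jV\,]_{i,j}$ together with the commutation of $\mathrm{diag}(A,\dots,A)$ with it, derive the $^\ast$-homomorphism property and membership in $\mathcal{B}^\prime\cap\{VV^\ast\}^\prime$ from the uniqueness clause, and obtain surjectivity via $A=U^\ast B_1U$ with $V=|V^\ast|U$ and $B_1$ commuting with $|V^\ast|$ and $UU^\ast$ by functional calculus. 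All the delicate points (adjoint-closedness of $(V^\ast\mathcal{B}V)^\prime$, use of unitality of $\mathcal{B}$ to get $A_1V=VA$, the bounded-ball Krein--Smulian criterion for ultraweak continuity) are handled correctly, so there is nothing to fix.
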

The following theorem holds as a corollary of \cite[Part I, Chapter 4, Theorem 3]{Dixmier},
\cite[Chapter IV, Theorem 5.5]{T79}:
\begin{theorem} \label{vNhom2}
Let $\mathcal{H}_1$ and $\mathcal{H}_2$ be Hilbert spaces.
If $\pi$ is a normal representation of ${\bf B}(\mathcal{H}_1)$ on
$\mathcal{H}_2$, there exist a Hilbert space $\mathcal{K}$
and a unitary operator $U$ of $\mathcal{H}_1\otimes\mathcal{K}$ onto $\mathcal{H}_2$ such that
\begin{equation}
\pi(X)=U(X\otimes 1_\mathcal{K})U^\ast
\end{equation}
for all $X\in{\bf B}(\mathcal{H}_1)$.
\end{theorem}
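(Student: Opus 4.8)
The plan is to exhibit $\pi$ as a spatial amplification $X\mapsto X\otimes 1_\mathcal{K}$ by choosing $\mathcal{K}$ to be the multiplicity space cut out by a minimal projection and transporting the matrix-unit structure of ${\bf B}(\mathcal{H}_1)$ through $\pi$. Concretely, I would fix an orthonormal basis $\{e_i\}_{i\in I}$ of $\mathcal{H}_1$ and let $e_{ij}=|e_i\rangle\langle e_j|$ denote the associated matrix units, which satisfy $e_{ij}^\ast=e_{ji}$, $e_{ij}e_{kl}=\delta_{jk}e_{il}$, and $\sum_i e_{ii}=1$ in the ultraweak topology. Fixing a distinguished index $0\in I$, I set $\mathcal{K}:=\pi(e_{00})\mathcal{H}_2$ and define $U:\mathcal{H}_1\otimes\mathcal{K}\to\mathcal{H}_2$ on elementary tensors by $U(e_i\otimes\eta)=\pi(e_{i0})\eta$, extended by linearity and continuity.

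Next I would verify that $U$ is a well-defined isometry that intertwines. Since $\pi$ is a $\ast$-homomorphism, for $\eta,\xi\in\mathcal{K}$ one computes $\langle U(e_i\otimes\eta)|U(e_j\otimes\xi)\rangle=\langle\eta|\pi(e_{0i}e_{j0})\xi\rangle=\delta_{ij}\langle\eta|\pi(e_{00})\xi\rangle=\delta_{ij}\langle\eta|\xi\rangle$, using $\pi(e_{00})\xi=\xi$; this matches the inner product on $\mathcal{H}_1\otimes\mathcal{K}$, so $U$ is isometric. For the intertwining relation, writing $Xe_j=\sum_i\langle e_i|Xe_j\rangle e_i$ and hence $Xe_{j0}=\sum_i\langle e_i|Xe_j\rangle e_{i0}$, one gets $\pi(X)U(e_j\otimes\eta)=\pi(Xe_{j0})\eta=\sum_i\langle e_i|Xe_j\rangle\pi(e_{i0})\eta=U((X\otimes 1_\mathcal{K})(e_j\otimes\eta))$, i.e. $\pi(X)U=U(X\otimes 1_\mathcal{K})$ on a total set, hence everywhere.

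The remaining, and most delicate, point is surjectivity of $U$, equivalently that $U$ is unitary onto $\mathcal{H}_2$; this is exactly where normality is indispensable. Assuming, as the term ``representation'' here entails, that $\pi$ is nondegenerate so that $\pi(1)=1$, normality lets me pass the ultraweakly convergent expansion $1=\sum_i e_{ii}$ through $\pi$ to obtain $1=\sum_i\pi(e_{ii})$ strongly on $\mathcal{H}_2$. Then any $\zeta\in\mathcal{H}_2$ decomposes as $\zeta=\sum_i\pi(e_{ii})\zeta=\sum_i\pi(e_{i0})\eta_i$ with $\eta_i:=\pi(e_{0i})\zeta\in\mathcal{K}$, so $\zeta$ lies in the closed range of $U$. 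I expect the main obstacle to be bookkeeping the convergence of these possibly infinite sums and justifying the interchange of $\pi$ with the limit, all of which rests squarely on the ultraweak continuity of $\pi$ together with factoriality of ${\bf B}(\mathcal{H}_1)$, so that no nontrivial central reduction intervenes.

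Finally, I would note an alternative, less hands-on route: invoke the structure theorem for normal $\ast$-homomorphisms of von Neumann algebras from \cite{Dixmier,T79}, which expresses $\pi$ as a composition of an amplification, an induction by a projection in the commutant, and a spatial isomorphism. Because ${\bf B}(\mathcal{H}_1)$ is a factor with trivial center, the central/reduction ambiguity collapses and one is left precisely with $X\mapsto U(X\otimes 1_\mathcal{K})U^\ast$.
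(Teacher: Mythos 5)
Your proof is correct, but it is genuinely more self-contained than what the paper does: the paper offers no argument at all for Theorem~\ref{vNhom2}, deducing it directly as a corollary of the structure theorem for normal $^\ast$-homomorphisms in \cite[Part I, Chapter 4, Theorem 3]{Dixmier} and \cite[Chapter IV, Theorem 5.5]{T79} --- which is exactly the ``alternative, less hands-on route'' you sketch in your final paragraph, including the observation that triviality of the center of ${\bf B}(\mathcal{H}_1)$ collapses the induction/reduction step. Your main argument instead builds the unitary by hand from matrix units: $\mathcal{K}=\pi(e_{00})\mathcal{H}_2$, $U(e_i\otimes\eta)=\pi(e_{i0})\eta$, with isometry from $e_{0i}e_{j0}=\delta_{ij}e_{00}$, intertwining from the norm-convergent expansion $Xe_{j0}=\sum_i\langle e_i|Xe_j\rangle e_{i0}$ (note this step needs only norm continuity of $\pi$, since the sum is a rank-one truncation converging in operator norm), and surjectivity from normality applied to $1=\sum_i e_{ii}$, with $\eta_i=\pi(e_{0i})\zeta\in\mathcal{K}$ and $\sum_i\Vert\eta_i\Vert^2=\sum_i\Vert\pi(e_{ii})\zeta\Vert^2\leq\Vert\zeta\Vert^2$ taking care of the convergence bookkeeping you flag; closedness of the range of the isometry $U$ then finishes it. Two small remarks: your explicit assumption $\pi(1)=1$ is the right reading of ``representation'' here and matches how the paper uses the theorem (nondegeneracy of $\Pi_{in}$ and $\Pi_S$ is secured via (MC5) before Theorem~\ref{vNhom2} is invoked); and factoriality of ${\bf B}(\mathcal{H}_1)$ is not actually what your direct computation uses --- what it uses is the existence of matrix units of minimal projections summing ultraweakly to $1$, i.e., the type I structure --- factoriality matters only for the citation route. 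The trade-off is the usual one: the citation is shorter and situates the statement in general structure theory, while your construction is elementary, exhibits $\mathcal{K}$ and $U$ explicitly, and makes visible precisely where normality (and nowhere else) enters.
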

This theorem is also a key to the proof of the main theorem of the paper.

\section{Systems of Measurement Correlations}
\label{sec:3}
In this section, we introduce the concept of system of measurement correlations,
which is a natural, multivariate version of instrument and is defined as a family of multilinear maps
satisfying ``positive-definiteness", ``$\sigma$-additivity" and other conditions.
This is an appropriate abstraction of measurement correlations
in the context of quantum stochastic processes \cite{AFL82}.
It is known that the representation theory of CP instruments
contributed to quantum measurement theory \cite{Ozawa83,Oz84,OO16}.
Hence we adopt a representation-theoretical approach to system of measurement correlations.
The ``positive-definiteness" of systems of measurement correlations
enables us to apply the (minimal) Kolmogorov decomposition to
them, so that provides them with representation-theoretical structures.
As a result, a representation theorem (Theorem \ref{Stinespring})
similar to that for CP instruments \cite[Proposition 4.2]{Oz84} will be shown to hold
for systems of measurement correlations defined on an arbitrary von Neumann algebra.
To precisely understand physics described by systems of measurement correlations
we need a generalization of the Heisenberg picture
which is introduced after the proof of Theorem \ref{Stinespring}
and is called the generalized Heisenberg picture. The introduction of this new picture
is motivated also by the present circumstances that the understanding of the (usual) Heisenberg picture
has not been deepened in contrast to the Schr\"{o}dinger picture.
It should be stressed that the circumstances are never restricted to quantum measurement theory.

We adopt the following notations.
\begin{notation}
Let $\mathcal{T}^{(1)}$ be a set.
We define a set $\mathcal{T}$ by $\mathcal{T}=\cup_{j=1}^\infty (\mathcal{T}^{(1)})^j$.\\
$(i)$ For each $T\in\mathcal{T}$, $|T|$ denotes the natural number $n$ such that $T\in(\mathcal{T}^{(1)})^n$.\\
$(ii)$ For each $T=(t_1,t_2,\cdots,t_{n-1},t_n)\in\mathcal{T}$, we define 
$T^\#\in\mathcal{T}$ by $T^\#=(t_n,t_{n-1},\cdots,t_2,t_1)$.\\
$(iii)$ For any $T=(t_{1,1},\cdots,t_{1,m}),T_2=(t_{2,1},\cdots,t_{2,n})$ $\in\mathcal{T}$,
the product $T_1\times T_2$ is defined by
\begin{equation}
T_1\times T_2=(t_{1,1},\cdots,t_{1,m},t_{2,1},\cdots,t_{2,n}).
\end{equation}
Since it holds that $T_1\times(T_2\times T_3)=(T_1\times T_2)\times T_3$,
$T_1\times(T_2\times T_3)$ is written as $T_1\times T_2\times T_3$.\\
$(iv)$ For any $n\in\mathbb{N}$ and $\overrightarrow{M}=(M_1,M_2,\cdots,M_{n-1},M_n)\in\mathcal{M}^n$,
we define $\overrightarrow{M}^\#\in \mathcal{M}^n$ by
\begin{equation}
\overrightarrow{M}^\#=(M_n^\ast,M_{n-1}^\ast,\cdots,M_2^\ast,M_1^\ast).
\end{equation}
$(v)$ For any $m,n\in\mathbb{N}$, $\overrightarrow{M}_1=(M_{1,1},\cdots,M_{1,m})\in\mathcal{M}^m$
and $\overrightarrow{M}_2=(M_{2,1},\cdots,M_{2,n})\in\mathcal{M}^n$,
the product $\overrightarrow{M}_1\times\overrightarrow{M}_2\in\mathcal{M}^{m+n}$ is defined by
\begin{equation}
\overrightarrow{M}_1\times\overrightarrow{M}_2=(M_{1,1},\cdots,M_{1,m},M_{2,1},\cdots,M_{2,n}).
\end{equation}
Since it holds that $\overrightarrow{M}_1\times(\overrightarrow{M}_2\times \overrightarrow{M}_3)=(\overrightarrow{M}_1\times \overrightarrow{M}_2)\times \overrightarrow{M}_3$,
$\overrightarrow{M}_1\times(\overrightarrow{M}_2\times \overrightarrow{M}_3)$ is written as $\overrightarrow{M}_1\times \overrightarrow{M}_2\times \overrightarrow{M}_3$.\\
\end{notation}

In addition, for every family $\{\Pi_t\}_{t\in\mathcal{T}^{(1)}}$ of representations
of $\mathcal{M}$ on a Hilbert space $\mathcal{L}$, we adopt the notation
\begin{equation}
\Pi_T(\overrightarrow{M})=\Pi_{t_1}(M_1)\cdots \Pi_{t_{|T|}}(M_{|T|})
\end{equation}
for all $T=(t_1,\cdots,t_{|T|})\in\mathcal{T}$ and
$\overrightarrow{M}=(M_1,\cdots,M_{|T|})\in\mathcal{M}^{|T|}$.

Let $(S,\mathcal{F})$ be a measurable space. We define a set $\mathcal{T}_S$ by
\begin{align}
\mathcal{T}_S &= \cup_{j=1}^\infty \;(\mathcal{T}_S^{(1)})^j,\\
\mathcal{T}_S^{(1)} &= \{in\}\cup \mathcal{F},
\end{align}
where $in$ is a symbol.

We shall define the notion of system of measurement correlations, which is a modified version of
projective system of multikernels analyzed in the previous investigations \cite{AFL82,Belavkin85}.
We define and analyze
only the case that systems of measurement correlations do not have explicit time-dependence
for simplicity herein.

\begin{definition}[System of measurement correlations]
A family $\{W_T\}_{T\in\mathcal{T}}$ of maps
$W_T:\mathcal{M}^{|T|}=\overbrace{\mathcal{M}\times\cdots\times\mathcal{M}}^{|T|} \rightarrow\mathcal{M}$
is called a system of measurement correlations
for $(\mathcal{M},S)$ if it satisfies $\mathcal{T}^{(1)}=\mathcal{T}^{(1)}_S$
and the following six conditions:\\
$(\mathrm{MC}1)$ For any $T\in\mathcal{T}$, $W_T(M_1,\cdots,M_{|T|})$ is
separately linear and ultraweakly continuous in each variable $M_1,\cdots,M_{|T|}\in\mathcal{M}$.\\
$(\mathrm{MC}2)$ For any $n\in\mathbb{N}$, $(T_1,\overrightarrow{M}_1),\cdots,(T_n,\overrightarrow{M}_n)$
$\in\cup_{T\in\mathcal{T}}(\{T\}\times\mathcal{M}^{|T|})$,
and $\xi_1,\cdots,\xi_n\in\mathcal{H}$,
\begin{equation}
\sum_{i,j=1}^n
\langle\xi_i|W_{T_i^\#\times T_j}(\overrightarrow{M}_i^\#\times
\overrightarrow{M}_j)\xi_j\rangle \geq 0.
\end{equation}
$(\mathrm{MC}3)$ For any $T=(t_1,\cdots,t_{|T|})\in\mathcal{T}$,
$\overrightarrow{M}=(M_1,\cdots,M_{|T|})\in \mathcal{M}^{|T|}$ and $M\in\mathcal{M}$,
\begin{align}
MW_T(\overrightarrow{M}) &=W_{(in)\times T}((M)\times \overrightarrow{M}),\\
W_T(\overrightarrow{M})M &=W_{T\times (in)}(\overrightarrow{M}\times (M)).
\end{align}
$(\mathrm{MC}4)$ Let $T=(t_1,\cdots,t_{|T|})\in\mathcal{T}$.
If $t_k=t_{k+1}=in$ or $t_k,t_{k+1}\in\mathcal{F}$ for some $1\leq k\leq |T|-1$,
\begin{equation}
W_T(M_1,\cdots,M_k,M_{k+1},\cdots,M_{|T|})= W_{T^\prime}(M_1,\cdots,M_kM_{k+1},\cdots,M_{|T|})
\end{equation}
for all $(M_1,\cdots,M_{|T|})\in \mathcal{M}^{|T|}$,
where $T^\prime=(t_1,\cdots,t_{k-1},t_k\cap t_{k+1},t_{k+2}\cdots,t_{|T|})$ and
\begin{equation*}
t_k\cap t_{k+1}=\left\{
\begin{array}{ll}
in, &\quad  (\mathrm{if}\; t_k=t_{k+1}=in)\\
 t_k\cap t_{k+1}, &\quad (\mathrm{if}\;t_k, t_{k+1}\in\mathcal{F}).
\end{array}
\right.
\end{equation*}
$(\mathrm{MC}5)$ For any $T=(t_1,\cdots,t_{|T|})\in\mathcal{T}$ with
$t_k=in$ or $S$, and $(M_1,\cdots,M_{|T|})\in \mathcal{M}^{|T|}$ with $M_k=1$,
\begin{equation}
W_T(M_1,\cdots,M_{k-1},1,M_{k+1},\cdots,M_{|T|})=W_{\hat{k}T}(M_1,\cdots,M_{k-1},M_{k+1},\cdots,M_{|T|}),
\end{equation}
where $\hat{k}T=(t_1,\cdots,t_{k-1},t_{k+1},t_{k+2}\cdots,t_{|T|})$. In addition,
\begin{equation}
W_{in}(1)=W_S(1)=1.
\end{equation}
$(\mathrm{MC}6)$ For any $n\in\mathbb{N}$, $1\leq k\leq n$,
$t_1,\cdots,t_{k-1},t_{k+1},\cdots,t_n\in\mathcal{T}^{(1)}$,
mutually disjoint sequence $\{t_{k,j}\}_j\subset\mathcal{F}$,
$\overrightarrow{M}\in\mathcal{M}^n$ and $\rho\in\mathcal{M}_{\ast}$,
\begin{equation}
\langle\rho,W_{(t_1,\cdots,t_{k-1},\cup_jt_{k,j},t_{k+1},\cdots,t_n)}(\overrightarrow{M})\rangle
=\sum_j \langle\rho,W_{(t_1,\cdots,t_{k-1},t_{k,j},t_{k+1},\cdots,t_n)}(\overrightarrow{M})\rangle.
\end{equation}
\end{definition}
$in$ and $\Delta\in\mathcal{F}$ are subscripts that specify
the time before the measurement and the time after the measurement, respectively.
In $W_{T}(\overrightarrow{M})$, components of $\overrightarrow{M}$ indexed by $in$ 
and those of $\overrightarrow{M}$ indexed by $\Delta\in\mathcal{F}$,
describe observables before the measurement and those after the measurement, respectively,
for each $T\in\mathcal{T}$ and $\overrightarrow{M}\in\mathcal{M}^{|T|}$.
Especially, the latter represents observables of the system after the measurement
in the situation that values of the output variable of the measuring appratus
are restricted to $\Delta\in\mathcal{F}$.
The discussion in Section \ref{sec:4} will support this interpretation.

It is easy to generalize systems of measurement correlations to the case that
they have explicit time-dependence by modifying the definition.
For this purpose, $\mathcal{T}_S^{(1)}$ is replaced by
$\mathcal{T}_{G,S}^{(1)}=\{in\}\cup (G\times\mathcal{F})$,
where $G$ is the set representing time and is usually assumed to be a subset of $\mathbb{R}$,
and, for instance, the condition $(\mathrm{MC}4)$ is replaced by\\
$(\mathrm{MC}4^\prime)$ Let $T=(t_1,\cdots,t_{|T|})\in\mathcal{T}$.
If $t_k=t_{k+1}=in$ or $t_k=(g,\Delta_k),t_{k+1}=(g,\Delta_{k+1})\in G\times\mathcal{F}$
for some $1\leq k\leq |T|-1$, then
\begin{equation}
W_T(M_1,\cdots,M_k,M_{k+1},\cdots,M_{|T|})= W_{T^\prime}(M_1,\cdots,M_kM_{k+1},\cdots,M_{|T|})
\end{equation}
for all $(M_1,\cdots,M_{|T|})\in \mathcal{M}^{|T|}$,
where $T^\prime=(t_1,\cdots,t_{k-1},t_k\cap t_{k+1},t_{k+2}\cdots,t_{|T|})$ and
\begin{equation*}
t_k\cap t_{k+1}=\left\{
\begin{array}{ll}
in, &\quad  (\mathrm{if}\; t_k=t_{k+1}=in)\\
 (g,\Delta_k\cap \Delta_{k+1}), &\quad
 (\mathrm{if}\;t_k=(g,\Delta_k),t_{k+1}=(g,\Delta_{k+1})\in G\times\mathcal{F}).
\end{array}
\right.
\end{equation*}
Other conditions are also modified in the same manner.

When a system $\{W_T\}_{T\in\mathcal{T}}$ of measurement correlations for $(\mathcal{M},S)$ is given,
an instrument $\mathcal{I}_W$ for $(\mathcal{M},S)$ is defined by
\begin{equation}
\mathcal{I}_W(M,\Delta)=W_\Delta(M)
\end{equation}
for all $\Delta\in\mathcal{F}$ and $M\in\mathcal{M}$,
which is seen to be completely positive by the condition (MC2).

Every system of measurement correlations admits the following representation theorem.
\begin{theorem}\label{Stinespring}
Let $\mathcal{M}$ be a von Neumann algebra
on a Hilbert space $\mathcal{H}$ and $(S,\mathcal{F})$ a measurable space.
For any systems $\{W_T\}_{T\in\mathcal{T}}$ of measurement correlations for $(\mathcal{M},S)$,
there exist a Hilbert space $\mathcal{L}$, a family $\{\Pi_t\}_{t\in\mathcal{T}^{(1)}}$
of normal $(^\ast$-$)$representations of $\mathcal{M}$ on $\mathcal{L}$
and an isometry $V$ from $\mathcal{H}$ to $\mathcal{L}$ such that
\begin{equation}
\Pi_{in}(M)V=VM
\end{equation}
for all $M\in\mathcal{M}$, and that
\begin{equation}
W_T(\overrightarrow{M})=V^\ast \Pi_{T}(\overrightarrow{M})V
\end{equation}
for all $T\in\mathcal{T}$ and $\overrightarrow{M}\in\mathcal{M}^{|T|}$.
\end{theorem}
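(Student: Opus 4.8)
The plan is to realize the whole family $\{W_T\}$ as a single positive-definite kernel, apply the Kolmogorov decomposition of Theorem \ref{minKD} to manufacture the dilation space $\mathcal{L}$, and then build each $\Pi_t$ as the operator that ``prepends the index $t$ and the operator $M$''. Concretely, first I would set $C=\bigcup_{T\in\mathcal{T}}(\{T\}\times\mathcal{M}^{|T|})$ and define a kernel $K\in\mathbb{K}(C;\mathcal{H})$ by $K((T_1,\overrightarrow{M}_1),(T_2,\overrightarrow{M}_2))=W_{T_1^\#\times T_2}(\overrightarrow{M}_1^\#\times\overrightarrow{M}_2)$. Condition $(\mathrm{MC}2)$ is exactly positive-definiteness of $K$, so Theorem \ref{minKD} gives a minimal Kolmogorov decomposition $(\mathcal{L},\Lambda)$ with $K(c,c')=\Lambda(c)^\ast\Lambda(c')$ and $\mathcal{L}=\overline{\mathrm{span}}(\Lambda(C)\mathcal{H})$. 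Putting $V:=\Lambda((in),(1))$, one gets $V^\ast V=K(c_0,c_0)=W_{(in,in)}(1,1)=W_{(in)}(1)=1$ by $(\mathrm{MC}4)$ and $(\mathrm{MC}5)$, so $V$ is an isometry.

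Next, for $t\in\mathcal{T}^{(1)}$ and $M\in\mathcal{M}$ I would define $\Pi_t(M)$ on the dense subspace by $\Pi_t(M)\Lambda(T,\overrightarrow{M})\xi:=\Lambda((t)\times T,(M)\times\overrightarrow{M})\xi$, extended linearly. The heart of the proof is the single estimate $\|\sum_i\Lambda((t)\times T_i,(M)\times\overrightarrow{M}_i)\xi_i\|^2\le\|M\|^2\|\sum_i\Lambda(T_i,\overrightarrow{M}_i)\xi_i\|^2$, which simultaneously gives well-definedness and the bound $\|\Pi_t(M)\|\le\|M\|$. To prove it, expand the left-hand side via $K$ and use $(\mathrm{MC}4)$ (with $t\cap t=t$ and $M^\ast M$) to rewrite it as $\Phi(M^\ast M)$, where $\Phi(A):=\sum_{i,j}\langle\xi_i|W_{T_i^\#\times(t)\times T_j}(\overrightarrow{M}_i^\#\times(A)\times\overrightarrow{M}_j)\xi_j\rangle$. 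The same $(\mathrm{MC}4)$-rewriting shows $\Phi(B^\ast B)=\|\sum_i\Lambda((t)\times T_i,(B)\times\overrightarrow{M}_i)\xi_i\|^2\ge0$, so $\Phi$ is a positive linear functional on $\mathcal{M}$ (linearity from $(\mathrm{MC}1)$); hence $\Phi(M^\ast M)\le\|M\|^2\Phi(1)$. It then remains to bound $\Phi(1)$ by $\|\sum_i\Lambda(T_i,\overrightarrow{M}_i)\xi_i\|^2$, which for $t\in\{in,S\}$ is an equality by $(\mathrm{MC}5)$.

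I expect the genuinely delicate step to be the case $t=\Delta\in\mathcal{F}$ with $\Delta\neq S$, where $(\mathrm{MC}5)$ does not apply to delete the unit. Writing $\Phi_\Delta(1):=\sum_{i,j}\langle\xi_i|W_{T_i^\#\times(\Delta)\times T_j}(\overrightarrow{M}_i^\#\times(1)\times\overrightarrow{M}_j)\xi_j\rangle$, I would decompose $S=\Delta\sqcup\Delta^c$ and invoke $(\mathrm{MC}6)$ to obtain $\Phi_S(1)=\Phi_\Delta(1)+\Phi_{\Delta^c}(1)$; the $(\mathrm{MC}4)$-rewriting shows each of $\Phi_\Delta(1),\Phi_{\Delta^c}(1)$ is a squared norm, hence nonnegative, so $\Phi_\Delta(1)\le\Phi_S(1)=\|\sum_i\Lambda(T_i,\overrightarrow{M}_i)\xi_i\|^2$ by $(\mathrm{MC}5)$. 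This monotonicity $\Pi_\Delta(1)\le\Pi_S(1)=1$ is precisely what forces the use of $\sigma$-additivity, and it is the one place where the argument is not a routine bookkeeping of the relations.

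Finally I would collect the algebraic and continuity properties. Multiplicativity $\Pi_t(M)\Pi_t(N)=\Pi_t(MN)$ and $\Pi_t(M)^\ast=\Pi_t(M^\ast)$ reduce to identities among the $\Lambda(c)$ that follow from $(\mathrm{MC}4)$ and from the adjoints built into $\overrightarrow{M}^\#$ and the definition of $K$ (checked by comparing the two sides through $K$); linearity and normality of each $\Pi_t$ follow from $(\mathrm{MC}1)$ together with the uniform bound $\|\Pi_t(M)\|\le\|M\|$. The intertwining relation $\Pi_{in}(M)V=VM$ uses $W_{(in)}=\mathrm{id}_\mathcal{M}$, which is read off from $(\mathrm{MC}3)$ and $(\mathrm{MC}5)$, via a short computation giving $\|\Lambda((in),(M))\xi-\Lambda((in),(1))M\xi\|^2=0$. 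The reconstruction formula is then immediate: telescoping the definition (collapsing the trailing $(in,1)$ once by $(\mathrm{MC}5)$) yields $\Pi_T(\overrightarrow{M})V\xi=\Lambda(T,\overrightarrow{M})\xi$, so $V^\ast\Pi_T(\overrightarrow{M})V=W_{(in)\times T}((1)\times\overrightarrow{M})=W_T(\overrightarrow{M})$ after deleting the leading $(in,1)$ via $(\mathrm{MC}5)$.
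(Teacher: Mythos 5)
Your proposal is correct and follows essentially the same route as the paper's proof: the kernel $K$ on $\mathcal{C}=\bigcup_{T}(\{T\}\times\mathcal{M}^{|T|})$, the minimal Kolmogorov decomposition, the prepending definition of $\Pi_t(M)$, boundedness via positivity of the functional $\Phi$ (with the $\sqrt{M}$ trick from $(\mathrm{MC}2)$ and $(\mathrm{MC}4)$), and the intertwining and reconstruction identities via $(\mathrm{MC}3)$ and $(\mathrm{MC}5)$. The one point where you go beyond the paper's write-up is the explicit bound $\Phi_\Delta(1)\le\Phi_S(1)$ for $\Delta\neq S$ via $(\mathrm{MC}6)$ and positivity of $\Phi_{\Delta^c}(1)$ --- a step the paper's proof glosses over (it is implicitly what makes $\Delta\mapsto\Pi_\Delta(1)$ a PVM in the subsequent corollary) --- and your treatment of it is correct.
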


\begin{proof}
Let $\{W_T\}_{T\in\mathcal{T}}$ be
a system of measurement correlations for $(\mathcal{M},S)$.
We set $\mathcal{C}=\cup_{T\in\mathcal{T}}(\{T\}\times\mathcal{M}^{|T|})$.
We define a kernel
$K:\mathcal{C}\times \mathcal{C}\rightarrow \mathcal{M}$ by
\begin{equation}
K({\bf a},{\bf b})=
W_{T_1^\#\times T_2}(\overrightarrow{M}_1^\#\times
\overrightarrow{M}_2)
\end{equation}
for all ${\bf a}=(T_1,\overrightarrow{M}_1)$,
${\bf b}=(T_2,\overrightarrow{M}_2)\in \mathcal{C}$.
By the definition of a system of measurement correlations,
$K$ is positive definite.
By Theorem \ref{minKD}, there exists 
the minimal Kolmogorov decomposition $(\mathcal{L},\Lambda)$ of $K$ such that
\begin{equation}
K({\bf a},{\bf b})=\Lambda({\bf a})^\ast \Lambda({\bf b})
\end{equation}
for all ${\bf a}=(T_1,\overrightarrow{M}_1)$,
${\bf b}=(T_2,\overrightarrow{M}_2)\in \mathcal{C}$. We remark that we use the fact that
$\mathrm{span}(\Lambda(\mathcal{C})\mathcal{H})$ is dense in $\mathcal{L}$ many times in this proof.

For each $t\in\mathcal{T}^{(1)}$ and $M\in\mathcal{M}$,
we define a map $\Pi_t(M)$ on $\mathrm{span}\;\Lambda(\mathcal{C})\mathcal{H}$ by
\begin{equation}
\Pi_t(M)\Lambda({\bf a})\xi=\Lambda({\bf a^\prime})\xi
\end{equation}
for all ${\bf a}=(T,\overrightarrow{M})=
((t_1,\cdots,t_{|T|}),(M_1,\cdots,M_{|T|}))\in \mathcal{C}$ and $\xi\in\mathcal{H}$,
where
\begin{equation}
{\bf a^\prime}=((t)\times T,(M)\times\overrightarrow{M})=((t,t_1,\cdots,t_{|T|}),(M,M_1,\cdots,M_{|T|})).
\end{equation}
For all $t\in\mathcal{T}^{(1)}$, we show that $\Pi_t:M\mapsto \Pi_t(M)$ is a normal $^\ast$-representation of
$\mathcal{M}$.
By the condition (MC1), it holds that
\begin{align}
 &\hspace{5mm} \langle \Lambda({\bf a})\xi_1|\Pi_t(\alpha M+\beta N) \Lambda({\bf b})\xi_2\rangle\nonumber\\
&=\langle \xi_1|\Lambda({\bf a})^\ast \Pi_t(\alpha M+\beta N) \Lambda({\bf b})\xi_2\rangle \nonumber\\
 &=\langle \xi_1|W_{T_1^\#\times(t)\times T_2}
 (\overrightarrow{M}_1^\#\times(\alpha M+\beta N)\times \overrightarrow{M}_2)\xi_2\rangle \nonumber \\
 &=\alpha\langle \xi_1|W_{T_1^\#\times(t)\times T_2}
 (\overrightarrow{M}_1^\#\times( M)\times \overrightarrow{M}_2)\xi_2\rangle \nonumber \\
 &\hspace{7mm}+\beta\langle \xi_1|W_{T_1^\#\times(t)\times T_2}
 (\overrightarrow{M}_1^\#\times(N)\times \overrightarrow{M}_2)\xi_2\rangle \nonumber \\
 &=\alpha\langle \xi_1|\Lambda({\bf a})^\ast \Pi_t(M) \Lambda({\bf b})\xi_2\rangle
 +\beta\langle \xi_1|\Lambda({\bf a})^\ast \Pi_t(N) \Lambda({\bf b})\xi_2\rangle \nonumber \\
 &=\langle \xi_1|\Lambda({\bf a})^\ast (\alpha\Pi_t(M)+\beta\Pi_t(N)) \Lambda({\bf b})\xi_2\rangle \nonumber \\
 &=\langle \Lambda({\bf a})\xi_1| (\alpha\Pi_t(M)+\beta\Pi_t(N)) \Lambda({\bf b})\xi_2\rangle 
\end{align}
for any $t\in\mathcal{T}^{(1)}$, $\alpha,\beta\in\mathbb{C}$, $M,N\in\mathcal{M}$,
${\bf a}=(T_1,\overrightarrow{M}_1)$, ${\bf b}=(T_2,\overrightarrow{M}_2)\in \mathcal{C}$ and $\xi_1,\xi_2\in\mathcal{H}$,
so that $\Pi_t(\alpha M+\beta N)=\alpha\Pi_t(M)+\beta\Pi_t(N)$
for all $t\in\mathcal{T}^{(1)}$, $\alpha,\beta\in\mathbb{C}$ and $M,N\in\mathcal{M}$.

Similarly, by the condition (MC4) it holds that
\begin{align}
 &\hspace{5mm} \langle \Lambda({\bf a})\xi_1|\Pi_t(M)\Pi_t(N) \Lambda({\bf b})\xi_2\rangle
 =\langle \xi_1|\Lambda({\bf a})^\ast\Pi_t(M)\Pi_t(N) \Lambda({\bf b})\xi_2\rangle \nonumber \\
 &=\langle \xi_1|W_{T_1^\#\times(t,t)\times T_2}
 (\overrightarrow{M}_1^\#\times(M, N)\times \overrightarrow{M}_2)\xi_2\rangle \nonumber \\
 &=\langle \xi_1|W_{T_1^\#\times(t)\times T_2}
 (\overrightarrow{M}_1^\#\times(MN)\times \overrightarrow{M}_2)\xi_2\rangle  \nonumber \\
 &=\langle \xi_1|\Lambda({\bf a})^\ast\Pi_t(MN) \Lambda({\bf b})\xi_2\rangle
 = \langle \Lambda({\bf a})\xi_1|\Pi_t(MN) \Lambda({\bf b})\xi_2\rangle 
\end{align}
for any $t\in\mathcal{T}^{(1)}$, $M,N\in\mathcal{M}$,
${\bf a}=(T_1,\overrightarrow{M}_1)$, ${\bf b}=(T_2,\overrightarrow{M}_2)\in \mathcal{C}$ and $\xi_1,\xi_2\in\mathcal{H}$,
so that $\Pi_t(MN)=\Pi_t(M)\Pi_t(N)$ for all $t\in\mathcal{T}^{(1)}$ and $M,N\in\mathcal{M}$.

For any $t\in\mathcal{T}^{(1)}$, $n\in\mathbb{N}$, ${\bf a}_1=(T_1,\overrightarrow{M}_1)$, ${\bf a}_2=(T_2,\overrightarrow{M}_2)$, $\cdots$,
${\bf a}_n=(T_n,\overrightarrow{M}_n)\in \mathcal{C}$ and $\xi_1,\xi_2,\cdots,\xi_n\in\mathcal{H}$,
the map 
\begin{equation}
\mathcal{M}\ni M\mapsto \sum_{i,j=1}^n \langle \Lambda({\bf a}_i)\xi_i|
\Pi_t(M)\Lambda({\bf a}_j)\xi_j \rangle\in\mathbb{C}
\end{equation}
is normal linear functional on $\mathcal{M}$, which is also positive since it holds
by the conditions (MC2) and (MC4) that
\begin{align}
&\hspace{5mm}\sum_{i,j=1}^n \langle \Lambda({\bf a}_i)\xi_i|\Pi_t(M)\Lambda({\bf a}_j)\xi_j \rangle
=\sum_{i,j=1}^n \langle \xi_i|W_{T_i^\#\times(t)\times T_j}
 (\overrightarrow{M}_i^\#\times(M)\times \overrightarrow{M}_j)\xi_j\rangle \nonumber\\
 &=\sum_{i,j=1}^n \langle \xi_i|W_{T_i^\#\times(t,t)\times T_j}
 (\overrightarrow{M}_i^\#\times(\sqrt{M},\sqrt{M})\times \overrightarrow{M}_j)\xi_j\rangle\nonumber \\
 &=\sum_{i,j=1}^n \langle \xi_i|W_{((t)\times T_i)^\#\times((t)\times T_j)}
 (((\sqrt{M})\times\overrightarrow{M}_i)^\#
 \times((\sqrt{M})\times \overrightarrow{M}_j))\xi_j\rangle \geq 0
\end{align}
for all $M\in\mathcal{M}_+=\{M\in\mathcal{M}\;|\;M\geq 0\}$, $t\in\mathcal{T}^{(1)}$,
$n\in\mathbb{N}$, ${\bf a}_1=(T_1,\overrightarrow{M}_1)$, ${\bf a}_2=(T_2,\overrightarrow{M}_2)$, $\cdots$,
${\bf a}_n=(T_n,\overrightarrow{M}_n)\in \mathcal{C}$ and $\xi_1,\xi_2,\cdots,\xi_n\in\mathcal{H}$.
Thus, for any $t\in\mathcal{T}^{(1)}$, $M\in\mathcal{M}$,
$n\in\mathbb{N}$, ${\bf a}_1=(T_1,\overrightarrow{M}_1)$, ${\bf a}_2=(T_2,\overrightarrow{M}_2)$, $\cdots$,
${\bf a}_n=(T_n,\overrightarrow{M}_n)\in \mathcal{C}$ and $\xi_1,\xi_2,\cdots,\xi_n\in\mathcal{H}$
we have
\begin{equation}
\left\Vert \sum_{i=1}^n \Pi_t(M)\Lambda({\bf a}_i)\xi_i\right\Vert
\leq \Vert M\Vert \cdot \left\Vert \sum_{i=1}^n \Lambda({\bf a}_i)\xi_i\right\Vert.
\end{equation}
For every $t\in\mathcal{T}^{(1)}$ and $M\in\mathcal{M}$, $\Pi_t(M)$ is a bounded operator on $\mathcal{L}$.
In addition, for all $t\in\mathcal{T}^{(1)}$, $M\in\mathcal{M}$ and
${\bf a}=(T_1,\overrightarrow{M}_1)$, ${\bf b}=(T_2,\overrightarrow{M}_2)\in \mathcal{C}$,
\begin{align}
 &\hspace{5mm}\langle \Lambda({\bf a})\xi_1|\Pi_t(M)^\ast \Lambda({\bf b})\xi_2\rangle 
 =\langle \xi_1| (\Pi_t(M)\Lambda({\bf a}))^\ast\Lambda({\bf b})\xi_2\rangle  \nonumber \\
 &=\langle \xi_1|W_{((t)\times T_1)^\#\times T_2}
 (((M)\times\overrightarrow{M}_1)^\#\times \overrightarrow{M}_2)\xi_2 \rangle \nonumber \\
 &=\langle \xi_1|W_{T_1^\#\times(t)\times T_2}
 (\overrightarrow{M}_1^\#\times(M^\ast)\times \overrightarrow{M}_2)\xi_2 \rangle \nonumber \\
 &=\langle \xi_1| \Lambda({\bf a})^\ast\Pi_t(M^\ast)\Lambda({\bf b})\xi_2\rangle
 =\langle \Lambda({\bf a})\xi_1|\Pi_t(M^\ast) \Lambda({\bf b})\xi_2\rangle.
\end{align}
Thus, for every $t\in\mathcal{T}^{(1)}$
$\Pi_t$ is a normal $^\ast$-representation of $\mathcal{M}$ on $\mathcal{L}$.
By the condition (MC5), $\Pi_{in}$ and $\Pi_S$ are nondegenerate,
i.e., $\Pi_{in}(1)=\Pi_S(1)=1_{{\bf B}(\mathcal{L})}$.

For every $t\in\mathcal{T}$ and and $\overrightarrow{M}\in\mathcal{M}^{|T|}$, it then holds that
\begin{align}
&\quad\;\;V^\ast \Pi_{T}(\overrightarrow{M})V =\Lambda((in,1))^\ast
\Pi_{T}(\overrightarrow{M})\Lambda((in,1)) \nonumber \\
 &=W_{(in)\times T\times (in)}((1)\times\overrightarrow{M}\times(1))=W_T(\overrightarrow{M}).
\end{align}
By the above relation and the condition (MC3),
we have $V^\ast \Pi_{in}(M)V=M$ for all $M\in\mathcal{M}$, and
\begin{align}
&\quad\vspace{5mm} (\Pi_{in}(M)V-VM)^\ast(\Pi_{in}(M)V-VM)\nonumber\\
&=V^\ast\Pi_{in}(M)^\ast\Pi_{in}(M)V-
V^\ast\Pi_{in}(M)^\ast VM-M^\ast V^\ast\Pi_{in}(M)V +M^\ast V^\ast VM \nonumber\\
 &=V^\ast\Pi_{in}(M^\ast M)V-
 V^\ast\Pi_{in}(M^\ast) VM-M^\ast V^\ast\Pi_{in}(M)V +M^\ast V^\ast VM \nonumber \\
 &= M^\ast M- M^\ast M-M^\ast M+M^\ast M=0
\end{align}
for all $M\in\mathcal{M}$, which implies $\Pi_{in}(M)V=VM$ for all $M\in\mathcal{M}$.
\end{proof}
\begin{remark}
By the above proof, we see the following.
For every family $\{W_T\}_{T\in\mathcal{T}}$ of maps
$W_T:\mathcal{M}^{|T|}\rightarrow\mathcal{M}$ satisfying the conditions
$(\mathrm{MC1})$, $(\mathrm{MC2})$, $(\mathrm{MC4})$, $(\mathrm{MC5})$ and $(\mathrm{MC6})$,
there exist a Hilbert space $\mathcal{L}$, a family $\{\Pi_t\}_{t\in\mathcal{T}^{(1)}}$
of normal $(^\ast$-$)$representations of $\mathcal{M}$ on $\mathcal{L}$
and an isometry $V$ from $\mathcal{H}$ to $\mathcal{L}$ such that
\begin{equation}\label{SRep}
W_T(\overrightarrow{M})=V^\ast \Pi_{T}(\overrightarrow{M})V
\end{equation}
for all $T\in\mathcal{T}$ and $\overrightarrow{M}\in\mathcal{M}^{|T|}$.
Eq.(\ref{SRep}) then implies
\begin{equation}
W_T(\overrightarrow{M})^\ast=W_{T^\#}(\overrightarrow{M}^\#)
\end{equation}
for all $T\in\mathcal{T}$ and $\overrightarrow{M}\in\mathcal{M}^{|T|}$.
\end{remark}

As seen the proof of Theorem \ref{Stinespring}, the following fact holds,
which will be used in the next section.
\begin{corollary}
Let $\mathcal{M}$ be a von Neumann algebra
on a Hilbert space $\mathcal{H}$ and $(S,\mathcal{F})$ a measurable space.
For any systems $\{W_T\}_{T\in\mathcal{T}}$ of measurement correlations for $(\mathcal{M},S)$,
let $(\mathcal{L}$, $\{\Pi_t\}_{t\in\mathcal{T}^{(1)}}$, $V)$ be a triplet in Theorem \ref{Stinespring}.
Then the map $\mathcal{F}\ni\Delta\mapsto
\Pi_\Delta(1)\in{\bf B}(\mathcal{L})$ is a projection-valued measure (PVM).
\end{corollary}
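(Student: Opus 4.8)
The plan is to verify the defining properties of a projection-valued measure for the operators $E(\Delta):=\Pi_\Delta(1)$, working on the dense subspace $\mathrm{span}(\Lambda(\mathcal{C})\mathcal{H})$ of $\mathcal{L}$ furnished by the minimal Kolmogorov decomposition $(\mathcal{L},\Lambda)$ of $K$ from the proof of Theorem \ref{Stinespring}, and then extending operator identities to all of $\mathcal{L}$ by boundedness. First, since each $\Pi_\Delta$ was shown in that proof to be a normal $^\ast$-representation of $\mathcal{M}$, the operator $E(\Delta)=\Pi_\Delta(1)$ satisfies $E(\Delta)^\ast=\Pi_\Delta(1^\ast)=E(\Delta)$ and $E(\Delta)^2=\Pi_\Delta(1\cdot 1)=E(\Delta)$, so it is an orthogonal projection; moreover $E(S)=\Pi_S(1)=1_{{\bf B}(\mathcal{L})}$ by the nondegeneracy of $\Pi_S$ recorded there from (MC5).

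Next I would establish multiplicativity. Using the defining action $\Pi_\Delta(1)\Lambda({\bf b})\xi=\Lambda(((\Delta)\times T_2,(1)\times\overrightarrow{M}_2))\xi$ twice, one computes for ${\bf a}=(T_1,\overrightarrow{M}_1)$ and ${\bf b}=(T_2,\overrightarrow{M}_2)$
\begin{align}
\langle\Lambda({\bf a})\xi_1|E(\Delta_1)E(\Delta_2)\Lambda({\bf b})\xi_2\rangle
=\langle\xi_1|W_{T_1^\#\times(\Delta_1,\Delta_2)\times T_2}(\overrightarrow{M}_1^\#\times(1,1)\times\overrightarrow{M}_2)\xi_2\rangle. \nonumber
\end{align}
Applying (MC4) to the consecutive pair $(\Delta_1,\Delta_2)$, both carrying the argument $1$, collapses the right-hand side to $\langle\xi_1|W_{T_1^\#\times(\Delta_1\cap\Delta_2)\times T_2}(\overrightarrow{M}_1^\#\times(1)\times\overrightarrow{M}_2)\xi_2\rangle$, which is exactly the matrix element of $E(\Delta_1\cap\Delta_2)$. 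Since $\mathrm{span}(\Lambda(\mathcal{C})\mathcal{H})$ is dense and both sides are bounded, this gives $E(\Delta_1)E(\Delta_2)=E(\Delta_1\cap\Delta_2)$.

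For $\sigma$-additivity and $E(\emptyset)=0$ I would use (MC6). The crucial observation is that each vector functional $\omega_{\xi_1,\xi_2}\colon M\mapsto\langle\xi_1|M\xi_2\rangle$ is a normal functional on $\mathcal{M}$, hence an element of $\mathcal{M}_\ast$, so (MC6) applies with $\rho=\omega_{\xi_1,\xi_2}$ placed at the slot occupied by the union. This yields, for a mutually disjoint sequence $\{\Delta_j\}$,
\begin{align}
\langle\Lambda({\bf a})\xi_1|E(\cup_j\Delta_j)\Lambda({\bf b})\xi_2\rangle
=\sum_j\langle\Lambda({\bf a})\xi_1|E(\Delta_j)\Lambda({\bf b})\xi_2\rangle. \nonumber
\end{align}
Feeding in the decomposition $S=S\sqcup\emptyset\sqcup\emptyset\sqcup\cdots$ forces all matrix elements of $E(\emptyset)$ to vanish, so $E(\emptyset)=0$; combined with multiplicativity this shows $E(\Delta_i)E(\Delta_j)=E(\emptyset)=0$ for $i\neq j$, i.e. the $E(\Delta_j)$ are mutually orthogonal.

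Finally I would upgrade this weak, dense-subspace convergence to strong convergence on all of $\mathcal{L}$. Because the partial sums $P_N=\sum_{j=1}^N E(\Delta_j)$ form an increasing sequence of projections, they converge strongly to a projection $P$, and the displayed matrix-element identity, extended from the dense subspace to $\mathcal{L}$ by uniform boundedness, identifies $P$ with $E(\cup_j\Delta_j)$. Hence $E(\cup_j\Delta_j)=\sum_j E(\Delta_j)$ in the strong operator topology, which together with the previous steps shows that $E$ is a PVM. I expect the last passage to be the main point requiring care: (MC6) only delivers weak convergence tested against the generating vectors $\Lambda({\bf a})\xi$, and one must invoke the mutual orthogonality of the $E(\Delta_j)$, and hence all the prior steps, to conclude genuine strong $\sigma$-additivity.
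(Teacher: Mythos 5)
Your proof is correct and follows exactly the route the paper intends: its own proof is the one-line remark that the claim ``can be easily done in terms of the conditions (MC4), (MC5) and (MC6)'', and you have supplied precisely those details --- projections and $E(S)=1$ from the representation property and (MC5), multiplicativity from (MC4), and weak $\sigma$-additivity on the dense span $\mathrm{span}(\Lambda(\mathcal{C})\mathcal{H})$ from (MC6), upgraded to strong $\sigma$-additivity via orthogonality of the $E(\Delta_j)$. Nothing is missing; your final remark correctly identifies the only step needing care, the passage from weak convergence on generating vectors to strong convergence of the increasing partial sums.
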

\begin{proof}
The proof can be easily done in terms of the conditions (MC4), (MC5) and (MC6).
\end{proof}

In \cite{AFL82}, a (noncommutative) stochastic process over
a C$^\ast$-algebra $\mathcal{B}$, indexed by a set $\mathbb{T}$,
is defined by a pair $(\mathcal{A},\{j_t\}_{t\in\mathbb{T}})$ of a C$^\ast$-algebra $\mathcal{A}$ and
a family $\{j_t\}_{t\in\mathbb{T}}$ of $^\ast$-homomorphisms from $\mathcal{B}$ into $\mathcal{A}$.
Obviously, a pair $({\bf B}(\mathcal{L}),\{\Pi_t\}_{t\in\mathcal{T}^{(1)}})$ in Theorem \ref{Stinespring} is
nothing but a stochastic process over a von Neumann algebra $\mathcal{M}$ indexed by $\mathcal{T}^{(1)}$
in this sense.

Let $\mathcal{M}$ be a von Neumann algebra on a Hilbert space.
Let $\mathbb{T}$ be a set. We set $\mathcal{T}_\mathbb{T}=\cup_{j=1}^\infty (\{in\}\cup\mathbb{T})^j$.
Let $(\mathcal{L},\{\Pi_t\}_{t\in\{in\}\cup\mathbb{T}},V)$ be a triplet consisting of a Hilbert space $\mathcal{L}$,
a family $\{\Pi_t\}_{t\in\{in\}\cup\mathbb{T}}$ of normal representations of $\mathcal{M}$ on $\mathcal{L}$
and $V$ an isometry from $\mathcal{H}$ to $\mathcal{L}$ such that
$\Pi_{in}(M)V=VM$ for all $M\in\mathcal{M}$ and that $V^\ast \Pi_T(\overrightarrow{M})V\in\mathcal{M}$
for all $T\in \mathcal{T}_\mathbb{T}$ and $\overrightarrow{M}\in \mathcal{M}^{|T|}$.
The generalized Heisenberg picture
is then formulated by this triple $(\mathcal{L},\{\Pi_t\}_{t\in\{in\}\cup\mathbb{T}},V)$,
which enables us to compare the situtation before the change, specified by a representation $\Pi_{in}$,
with the situtation after the change, specified by $\{\Pi_t\}_{t\in\mathbb{T}}$.
This interpretation naturally follows from the intertwining relation $\Pi_{in}(M)V=VM$ for all $M\in\mathcal{M}$
and from the generation of correlation functions $\mathcal{W}_T(\overrightarrow{M})=V^\ast \Pi_T(\overrightarrow{M})V$
for all $T\in \mathcal{T}_\mathbb{T}$ and $\overrightarrow{M}\in \mathcal{M}^{|T|}$.
For example, in a triplet $(\mathcal{L},\{\Pi_t\}_{t\in\mathcal{T}^{(1)}},V)$ in Theorem \ref{Stinespring},
$\Pi_{in}$ and $\{\Pi_t\}_{t\in\mathcal{F}}$ correspond to a representation before the measurement
and those after the measurement, respectively.
The author believes that the generalized Heisenberg picture introduced here
gives a right extension of the description of dynamical processes in the standard formulation of
quantum mechanics since it succeeds to the advantage of the (usual) Heisenberg picture
that we can calculate correlation functions of observables at different times.
This topic will be discussed in detail in the succeeding paper of the author.

\section{Unitary Dilation Theorem}
\label{sec:4}
As previously mentioned, the introduction of the concept of measuring process
was cruicial for the progress of the theory of quantum measurement and of instruments.
Measuring processes redefined as follows also play
the central role in quantum measurement theory based on the generalized Heisenberg picture.
\begin{definition}\label{MP2}
A measuring process $\mathbb{M}$ for $(\mathcal{M},S)$
is a 4-tuple $\mathbb{M}=(\mathcal{K},\sigma,E,U)$ which
consists of a Hilbert space $\mathcal{K}$, a normal state $\sigma$ on ${\bf B}(\mathcal{K})$,
a spectral measure $E:\mathcal{F} \rightarrow {\bf B}(\mathcal{K})$,
and a unitary operator $U$ on $\mathcal{H}\otimes\mathcal{K}$
and defines a system of measurement correlations $\{W_T^\mathbb{M}\}_{T\in\mathcal{T}_S}$ for
$(\mathcal{M},S)$ as follows: We define a representation $\pi_{in}$ of $\mathcal{M}$ and 
a family $\{\pi_\Delta\}_{\Delta\in\mathcal{F}}$ of those
of $\mathcal{M}$ on $\mathcal{H}\otimes\mathcal{K}$ by 
\begin{equation}
\pi_{in}(M) =M\otimes 1_\mathcal{K},\hspace{3mm}
\pi_\Delta(M) =U^\ast (M\otimes E(\Delta))U
\end{equation}
for all $M\in \mathcal{M}$ and $\Delta\in\mathcal{F}$, respectively. We use the notation
\begin{equation}
\pi_T(\overrightarrow{M})=\pi_{t_1}(M_1)\cdots \pi_{t_{|T|}}(M_{|T|})
\end{equation}
for all $T=(t_1,\cdots,t_{|T|})\in\mathcal{T}$ and
$\overrightarrow{M}=(M_1,\cdots,M_{|T|})\in\mathcal{M}^{|T|}$.
For each $T\in\mathcal{T}_S$, $W_T^\mathbb{M}:\mathcal{M}^{|T|}\rightarrow\mathcal{M}$ is defined by
\begin{equation}
W_T^\mathbb{M}(\overrightarrow{M})=(\mathrm{id}\otimes\sigma)(\pi_T(\overrightarrow{M}))
\end{equation}
for all $\overrightarrow{M}\in\mathcal{M}^{|T|}$.
\end{definition}
It is easily seen that two definitions of measuring processes
for $({\bf B}(\mathcal{H}),S)$ are equivalent.

We say that a CP instrument $\mathcal{I}$ for $(\mathcal{M},S)$
is \textit{realized} by a measuring process $\mathbb{M}$ for $(\mathcal{M},S)$
in the sense of Definition \ref{MP2},
or $\mathbb{M}$ \textit{realizes} $\mathcal{I}$ if $\mathcal{I}=\mathcal{I}_\mathbb{M}$.
$\mathrm{CPInst}_{\mathrm{RE}}(\mathcal{M},S)$ denotes the set of
CP instruments for $(\mathcal{M},S)$ realized by measuring processes for $(\mathcal{M},S)$
in the sense of Definition \ref{MP2}.
Then we have $\mathrm{CPInst}_{\mathrm{RE}}(\mathcal{M},S)
\subseteq\mathrm{CPInst}_{\mathrm{NE}}(\mathcal{M},S)$.
It will be shown in Section \ref{sec:6} that
\begin{equation}
\mathrm{CPInst}_{\mathrm{RE}}(\mathcal{M},S)=\mathrm{CPInst}_{\mathrm{NE}}(\mathcal{M},S).
\end{equation}

\begin{definition}
Let $n\in\mathbb{N}$.
Two measuring processes $\mathbb{M}_1$ and $\mathbb{M}_2$ for $(\mathcal{M},S)$
are said to be $n$-equivalent if
$W_T^{\mathbb{M}_1}=W_T^{\mathbb{M}_2}$ for all $T\in\mathcal{T}$ such that $|T|\leq n$.
Two measuring processes $\mathbb{M}_1$
and $\mathbb{M}_2$ for $(\mathcal{M},S)$
are said to be completely equivalent if
they are $n$-equivalent for all $n\in\mathbb{N}$.
\end{definition}
The $n$-equivalence class of a measuring process $\mathbb{M}$ for $(\mathcal{M},S)$
is nothing but the set of measuring processes $\mathbb{M}^\prime$ for $(\mathcal{M},S)$ whose
correlation functions of order less or equal to $n$ are identical to those defined by $\mathbb{M}$, i.e.,
$W_T^\mathbb{M}=W_T^{\mathbb{M}^\prime}$ for all $T\in\mathcal{T}$ such that $|T|\leq n$.
Since a measuring process $\mathbb{M}$ for $(\mathcal{M},S)$ in the sense of Definition \ref{MP2}
is also that in the sense of Definition \ref{MP1}, the statistical equivalence works for the former.
Of course, the $2$-equivalence is the same as the statistical equivalence.
In practical situations, dynamical aspects of physical systems
are usually analyzed in terms of correlation functions of finite order.
Thus it is natural to consider that the classification of measuring processes
by the $n$-equivalence for not so large $n\in\mathbb{N}$ is valid in the same way.
It should be stressed here that causal relations cannot be verified without using
correlation functions (of observables at different times)
and that situations concerned with measurements are not the exception.
A successful example of causal relations in the context of measurement
has been already given by the notion of perfect correlation introduced in \cite{Ozawa06},
which uses correlation functions of order $2$.
One may consider that the complete equivalence of measuring processes is unrealistic and useless,
but we believe that it is much useful since the following theorem holds.
\begin{theorem}\label{OTOC}
Let $\mathcal{H}$ be a Hilbert space and $(S,\mathcal{F})$ a measurable space.
Then there is a one-to-one correpondence
between complete equivalence classes of measuring processes $\mathbb{M}=(\mathcal{K},\sigma,E,U)$
for $({\bf B}(\mathcal{H}),S)$
and systems $\{W_T\}_{T\in\mathcal{T}}$ of measurement correlations for $({\bf B}(\mathcal{H}),S)$,
which is given by the relation
\begin{equation}\label{SEMPandCP}
W_T(\overrightarrow{M})=W_T^\mathbb{M}(\overrightarrow{M})
\end{equation}
for all $T\in\mathcal{T}$ and $\overrightarrow{M}\in\mathcal{M}^{|T|}$.
\end{theorem}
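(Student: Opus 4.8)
The correspondence in the direction ``measuring process $\to$ system of measurement correlations'' is essentially built into the definitions, so the real content of the theorem is surjectivity. Indeed, by Definition \ref{MP2} every measuring process $\mathbb{M}$ for $({\bf B}(\mathcal{H}),S)$ defines a family $\{W_T^\mathbb{M}\}_{T\in\mathcal{T}}$, and a direct check of $(\mathrm{MC}1)$--$(\mathrm{MC}6)$ from the definitions of $\pi_{in},\pi_\Delta$ and the PVM $E$ shows it is a system of measurement correlations; moreover two measuring processes are completely equivalent \emph{precisely} when $W_T^{\mathbb{M}_1}=W_T^{\mathbb{M}_2}$ for all $T$, so the assignment $[\mathbb{M}]\mapsto\{W_T^\mathbb{M}\}$ is well defined and injective on complete equivalence classes. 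The plan is therefore to prove that \emph{every} system $\{W_T\}$ of measurement correlations for $({\bf B}(\mathcal{H}),S)$ arises as $\{W_T^\mathbb{M}\}$ for a suitable $\mathbb{M}$.

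First I would apply Theorem \ref{Stinespring} to $\{W_T\}$ to obtain a Hilbert space $\mathcal{L}$, normal unital representations $\{\Pi_t\}_{t\in\mathcal{T}^{(1)}}$ of ${\bf B}(\mathcal{H})$ on $\mathcal{L}$, and an isometry $V:\mathcal{H}\to\mathcal{L}$ with $\Pi_{in}(M)V=VM$ and $W_T(\overrightarrow{M})=V^\ast\Pi_T(\overrightarrow{M})V$; by the corollary to that theorem, $F(\Delta):=\Pi_\Delta(1)$ is a PVM. The decisive structural step is to read $(\mathrm{MC}4)$--$(\mathrm{MC}5)$ off the representation: by the same computation as in the proof of Theorem \ref{Stinespring}, the merging rule for adjacent $\mathcal{F}$-indices gives $\Pi_\Delta(M)\Pi_{\Delta'}(N)=\Pi_{\Delta\cap\Delta'}(MN)$, and specializing one of $M,N$ to $1$ and one of $\Delta,\Delta'$ to $S$ yields
\begin{equation}
\Pi_\Delta(M)=\Pi_S(M)F(\Delta)=F(\Delta)\Pi_S(M)
\end{equation}
for all $M\in{\bf B}(\mathcal{H})$ and $\Delta\in\mathcal{F}$. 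Thus the entire family $\{\Pi_\Delta\}_{\Delta\in\mathcal{F}}$ is encoded by a single normal unital representation $\Pi_S$ together with a PVM $F$ commuting with $\Pi_S({\bf B}(\mathcal{H}))$ --- precisely the skeleton of a measuring process.

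Next I would pass to a tensor-product picture (after the ampliation discussed below, so that multiplicities match). Applying Theorem \ref{vNhom2} to $\Pi_{in}$ identifies $\mathcal{L}\cong\mathcal{H}\otimes\mathcal{K}$ with $\Pi_{in}(M)=M\otimes 1_\mathcal{K}$; the relation $\Pi_{in}(M)V=VM$ then forces $V\xi=\xi\otimes\zeta$ for a unit vector $\zeta\in\mathcal{K}$, and with $\sigma:=\omega_\zeta$ one has $(\mathrm{id}\otimes\sigma)(Z)=V^\ast ZV$. Applying Theorem \ref{vNhom2} to $\Pi_S$ gives a unitary $U$ on $\mathcal{H}\otimes\mathcal{K}$ with $\Pi_S(M)=U^\ast(M\otimes 1_\mathcal{K})U$; since $F(\Delta)$ commutes with $\Pi_S({\bf B}(\mathcal{H}))$, the operator $UF(\Delta)U^\ast$ lies in $({\bf B}(\mathcal{H})\otimes 1_\mathcal{K})'=1_\mathcal{H}\otimes{\bf B}(\mathcal{K})$, so $UF(\Delta)U^\ast=1_\mathcal{H}\otimes E(\Delta)$ for a PVM $E$ on $\mathcal{K}$. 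Combining, $\Pi_\Delta(M)=\Pi_S(M)F(\Delta)=U^\ast(M\otimes E(\Delta))U$, which together with $\Pi_{in}(M)=M\otimes 1_\mathcal{K}$ are exactly the representations $\pi_{in},\pi_\Delta$ attached to $\mathbb{M}=(\mathcal{K},\sigma,E,U)$; hence $W_T^\mathbb{M}(\overrightarrow{M})=V^\ast\Pi_T(\overrightarrow{M})V=W_T(\overrightarrow{M})$, giving surjectivity.

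The hard part is the compatibility of the two applications of Theorem \ref{vNhom2}: a priori they produce \emph{different} multiplicity spaces $\mathcal{K}_{in}$ and $\mathcal{K}_S$, so $U$ would only be a unitary from $\mathcal{H}\otimes\mathcal{K}_S$ onto $\mathcal{H}\otimes\mathcal{K}_{in}$ rather than a unitary on a single $\mathcal{H}\otimes\mathcal{K}$ as Definition \ref{MP2} requires. When $\dim\mathcal{H}<\infty$ the two multiplicities coincide automatically from $\dim\mathcal{L}=\dim\mathcal{H}\cdot\dim\mathcal{K}_{in}=\dim\mathcal{H}\cdot\dim\mathcal{K}_S$. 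In general I would first \emph{ampliate} the data, replacing $(\mathcal{L},\Pi_t,V)$ by $(\mathcal{L}\otimes\mathcal{N},\Pi_t\otimes 1_\mathcal{N},V\otimes e_0)$ for a fixed unit vector $e_0\in\mathcal{N}$ and a sufficiently large auxiliary space $\mathcal{N}$. This leaves every correlation $V^\ast\Pi_T(\overrightarrow{M})V$ unchanged while multiplying both multiplicities by $\dim\mathcal{N}$; choosing $\dim\mathcal{N}\geq\max(\dim\mathcal{K}_{in},\dim\mathcal{K}_S)$ infinite makes them equal by cardinal arithmetic, after which the argument above produces a genuine unitary $U$ on $\mathcal{H}\otimes\mathcal{K}$ and the required measuring process.
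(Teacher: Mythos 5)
Your proposal is correct, and its skeleton is the paper's own: the whole content is surjectivity; you start from Theorem \ref{Stinespring} and its PVM corollary; and your structural identity $\Pi_\Delta(M)=\Pi_S(M)F(\Delta)=F(\Delta)\Pi_S(M)$, derived from $(\mathrm{MC}4)$--$(\mathrm{MC}5)$ via matrix elements and minimality, is exactly what the paper uses implicitly when it writes, in effect, $U_2^\ast(M\otimes E_0(\Delta))U_2=\Pi_\Delta(M)$ inside the computation of $\pi_\Delta(M)Q$; likewise your use of Theorem \ref{vNhom2} on $\Pi_{in}$ and $\Pi_S$, of Theorem \ref{CommLift} to lift $F$ into $1\otimes{\bf B}(\mathcal{K})$, and of the intertwining relation to get $V=V_\zeta$ and $\sigma=\omega_\zeta$ (the paper's Lemma \ref{intertwine}) all match. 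Where you genuinely diverge is the reconciliation of the two multiplicity spaces, and your route buys something real there. The paper never equalizes them: it sets $\mathcal{K}=\mathcal{L}_1\otimes\mathcal{L}_2$, defines $U_3$ only on the corner $\mathcal{H}\otimes\mathcal{L}_1\otimes\mathbb{C}\eta_2$ (essentially $U_2U_1^\ast$ with $\eta_2$ exchanged for $\eta_1$), and completes $U=U_3Q+U_4(1-Q)$ with an \emph{arbitrary} unitary $U_4$ between the orthocomplements, which is harmless because $\sigma$ is the vector state at $\eta_1\otimes\eta_2$ and $UQ=RU$, so the correlations never see $U_4$. That extension step needs $\dim((\mathcal{H}\otimes\mathcal{L}_1\otimes\mathbb{C}\eta_2)^\bot)=\dim((\mathcal{H}\otimes\mathbb{C}\eta_1\otimes\mathcal{L}_2)^\bot)$, and in infinite dimensions equal dimensions of two subspaces do not by themselves force equal dimensions of their complements (e.g.\ $\dim\mathcal{L}_1=1$, $\dim\mathcal{L}_2=2$, $\dim\mathcal{H}$ infinite is consistent with $\dim(\mathcal{H}\otimes\mathcal{L}_1)=\dim(\mathcal{H}\otimes\mathcal{L}_2)$ yet makes one complement zero and the other infinite-dimensional); your ampliation by $\mathcal{N}$ inflates both multiplicities to the common cardinal $\dim\mathcal{N}$, removes this corner case uniformly, and lets you take $U$ to be a conjugate of $U_2U_1^\ast$ outright, at the price of a non-minimal $\mathcal{K}$ --- though $\mathcal{N}$ can be chosen separable whenever the multiplicity spaces are, so the separability remark following the paper's proof survives. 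Two bookkeeping points to make explicit in a final write-up: only $\Pi_{in}$ and $\Pi_S$ are unital (in general $\Pi_\Delta(1)=F(\Delta)$), which is harmless since you apply Theorem \ref{vNhom2} only to those two; and when you identify $\mathcal{K}_S\otimes\mathcal{N}$ with $\mathcal{K}_{in}\otimes\mathcal{N}$ by a unitary $u$, the PVM must be transported to $uE_0(\cdot)u^\ast$ and the covariance of $\Pi_S$ rechecked, which holds because $1\otimes u$ commutes with $M\otimes 1$.
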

Let $\mathcal{H}_1$ and $\mathcal{H}_2$ be Hilbert spaces. For each $\eta\in\mathcal{H}_2$,
we define a linear map $V_\eta:\mathcal{H}_1\rightarrow \mathcal{H}_1\otimes\mathcal{H}_2$
by $V_\eta\xi=\xi\otimes\eta$ for all $\xi\in\mathcal{H}_1$.
It is easily seen that, for each $\eta\in\mathcal{H}_2$, $V_\eta$ satisfies
$(X\otimes 1)V_\eta=V_\eta X$ for all $X\in {\bf B}(\mathcal{H}_1)$.
For any $x\in\mathcal{H}_1\backslash \{0\}$, $P_x$ denotes
the projection from $\mathcal{H}_1$ onto the linear subspace $\mathbb{C}x$
of $\mathcal{H}_1$ linearly spanned by $x$. For any $x,y\in\mathcal{H}_1$,
we define $|y\rangle\langle x| \in {\bf B}(\mathcal{H}_1)$ by
$|y\rangle\langle x|z=\langle x|z \rangle y$ for all $z\in\mathcal{H}_1$.

\begin{lemma}\label{intertwine}
Let $\mathcal{H}_1$ and $\mathcal{H}_2$ be Hilbert spaces.
Let $V$ be an isometry from $\mathcal{H}_1$ to $\mathcal{H}_1\otimes\mathcal{H}_2$.
If $V$ satisfies $(X\otimes 1)V=VX$ for all $X\in {\bf B}(\mathcal{H}_1)$,
then there exists $\eta\in\mathcal{H}_2$ such that $V=V_\eta$.
\end{lemma}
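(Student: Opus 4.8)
The plan is to exploit the intertwining relation $(X\otimes 1)V=VX$ at rank-one operators, which will force $V$ to be a single elementary-tensor map. First I would fix a unit vector $x\in\mathcal{H}_1$ and expand $Vx$ with respect to an orthonormal basis $\{e_k\}$ of $\mathcal{H}_2$, writing $Vx=\sum_k \zeta_k\otimes e_k$ with $\zeta_k\in\mathcal{H}_1$ and $\sum_k\|\zeta_k\|^2=\|Vx\|^2=\|x\|^2=1$. I then set $\eta=\sum_k\langle x|\zeta_k\rangle e_k\in\mathcal{H}_2$, the vector obtained by contracting the first tensor leg of $Vx$ against $x$; its convergence in $\mathcal{H}_2$ is guaranteed by Cauchy--Schwarz, since $\sum_k|\langle x|\zeta_k\rangle|^2\leq\|x\|^2\sum_k\|\zeta_k\|^2<\infty$.

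The decisive step is to apply the hypothesis to the rank-one operator $X=|y\rangle\langle x|$ for an arbitrary $y\in\mathcal{H}_1$, evaluated at $x$. Since $|y\rangle\langle x|x=\langle x|x\rangle y=y$, the right-hand side is $V|y\rangle\langle x|x=Vy$. For the left-hand side, using $(|y\rangle\langle x|\otimes 1)(u\otimes v)=\langle x|u\rangle\,y\otimes v$, I would compute
\[
(|y\rangle\langle x|\otimes 1)Vx=\sum_k\langle x|\zeta_k\rangle\,y\otimes e_k=y\otimes\eta .
\]
Comparing the two sides yields $Vy=y\otimes\eta=V_\eta y$ for every $y\in\mathcal{H}_1$, that is $V=V_\eta$, which is exactly the assertion. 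The normalization $\|\eta\|=1$ then follows automatically, since $V$ is an isometry and $\|V_\eta y\|=\|y\|\,\|\eta\|$.

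The hard part will be nothing more than the infinite-dimensional bookkeeping in the partial contraction, namely verifying that $\eta$ is a well-defined element of $\mathcal{H}_2$ and that $(|y\rangle\langle x|\otimes 1)$ genuinely collapses the first leg of $Vx$ onto $y\otimes\eta$; this is handled cleanly by the orthonormal-basis expansion above (equivalently by a slice-map argument), so no analytic subtlety survives. Everything else is an immediate consequence of the defining relation $|y\rangle\langle x|z=\langle x|z\rangle y$ together with the intertwining hypothesis.
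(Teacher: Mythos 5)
Your proof is correct and follows essentially the same route as the paper's: the decisive step in both arguments is applying the intertwining relation to the rank-one operators $|y\rangle\langle x|$ at a fixed unit vector $x$, which forces $Vy=y\otimes\eta$ for all $y$. The only cosmetic difference is that you obtain $\eta$ directly as the partial contraction of $Vx$ against $x$ (with the convergence check you supply), whereas the paper first applies $P_x\otimes 1$ to conclude $Vx\in\mathbb{C}x\otimes\mathcal{H}_2$ and then verifies that the resulting vector $\eta_x$ is independent of $x$; your streamlining is sound.
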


\begin{proof}
Let $x\in\mathcal{H}_1\backslash \{0\}$.
Since $(P_x\otimes 1)Vx= VP_x x=Vx$, it holds that $Vx\in \mathbb{C}x\otimes\mathcal{H}_2$.
Hence, for any $x\in\mathcal{H}_1\backslash \{0\}$,
there is $\eta_x\in\mathcal{H}_2$ such that $Vx=x\otimes\eta_x$ and that $\Vert\eta_x\Vert=1$.

For any $x,y\in\mathcal{H}_1\backslash \{0\}$,
\begin{align}
 &\quad \langle x|x \rangle y\otimes \eta_y= \langle x|x \rangle Vy= V(\langle x|x \rangle y)
= V(|y\rangle\langle x| x) \nonumber\\
 &= (|y\rangle\langle x|\otimes 1)Vx=(|y\rangle\langle x|\otimes 1)(x\otimes \eta_x)
=\langle x|x \rangle y\otimes \eta_x
\end{align}
Thus $\eta_x=\eta_y$ for all $x,y\in\mathcal{H}_1\backslash \{0\}$.
This fact implies that the range of the map $\mathcal{H}_1\backslash \{0\}\ni x\mapsto\eta_x\in\mathcal{H}_2$
is one point.
We put $\eta=\eta_x$ for some $x\in\mathcal{H}_1\backslash \{0\}$.
By the linearity of $V$, $V0=0=0\otimes \eta$.
Thus we have $V=V_\eta$.
\end{proof}

\begin{proof}[Proof of Theorem \ref{OTOC}]
Let $\{W_T\}_{T\in\mathcal{T}}$ be
a system of measurement correlations for $({\bf B}(\mathcal{H}),S)$.
By Theorem \ref{Stinespring},
there exist a Hilbert space $\mathcal{L}_0$, a family $\{\Pi_t\}_{t\in\mathcal{T}^{(1)}}$
of normal representations of ${\bf B}(\mathcal{H})$ on $\mathcal{L}_0$
and an isometry $V_0$ from $\mathcal{H}$ to $\mathcal{L}_0$ such that
\begin{equation}
W_T(\overrightarrow{M})=V_0^\ast \Pi_{T}(\overrightarrow{M})V_0
\end{equation}
for all $T\in\mathcal{T}$ and $\overrightarrow{M}\in{\bf B}(\mathcal{H})^{|T|}$.

By Theorem \ref{vNhom2}, there exist a Hilbert space $\mathcal{L}_1$ and
a unitary operator $U_1:\mathcal{L}_0\rightarrow \mathcal{H}\otimes\mathcal{L}_1$ such that
\begin{equation}
\Pi_{in}(M)=U_1^\ast(M\otimes 1)U_1
\end{equation}
for all $M\in{\bf B}(\mathcal{H})$.
Similarly, by Theorem \ref{vNhom2}, there exist a Hilbert space $\mathcal{L}_2$ and
a unitary operator $U_2:\mathcal{L}_0\rightarrow \mathcal{H}\otimes\mathcal{L}_2$ such that
\begin{equation}
\Pi_S(M)=U_2^\ast(M\otimes 1)U_2
\end{equation}
for all $M\in{\bf B}(\mathcal{H})$,
and by Theorem \ref{CommLift} there exist a PVM $E_0:\mathcal{F}\rightarrow{\bf B}(\mathcal{L}_2)$
such that
\begin{equation}
\Pi_\Delta(1)=U_2^\ast(1\otimes E_0(\Delta))U_2
\end{equation}
for all $\Delta\in\mathcal{F}$.

We define a linear map $V:\mathcal{H}\rightarrow \mathcal{H}\otimes\mathcal{L}_1$
by $V=U_1V_0$, which is obviously seen to be an isometry.
Here, it holds that $V^\ast(M\otimes 1)V=M$ for all $M\in{\bf B}(\mathcal{H})$.
For all $M\in{\bf B}(\mathcal{H})$,
\begin{align}
&\;\hspace{5mm}((M\otimes 1)V-VM)^\ast ((M\otimes 1)V-VM) \nonumber\\
 &=V^\ast(M^\ast M\otimes 1)V-V^\ast(M^\ast \otimes 1)VM-M^\ast V^\ast(M \otimes 1)V+M^\ast V^\ast VM \nonumber\\
 &=M^\ast M-M^\ast\cdot M-M^\ast\cdot M+M^\ast M=0.
\end{align}
Thus we have $(M\otimes 1)V=VM$ for all $M\in{\bf B}(\mathcal{H})$.
By Lemma \ref{intertwine}, there is $\eta_1\in\mathcal{L}_1$ such that $V=V_{\eta_1}$.

Let $\eta_2\in\mathcal{L}_2$ such that $\Vert\eta_2\Vert=1$.
Let $\zeta$ be an isomorphism from $\mathcal{L}_1\otimes\mathcal{L}_2$ to
$\mathcal{L}_2\otimes\mathcal{L}_1$ defined by $\zeta(\xi_1\otimes\xi_2)=\xi_2\otimes\xi_1$
for all $\xi_1\in\mathcal{L}_1$ and $\xi_2\in\mathcal{L}_2$.
We define a unitary operator $U_3$ from $\mathcal{H}\otimes\mathcal{L}_1\otimes\mathbb{C}\eta_2$
to $\mathcal{H}\otimes\mathbb{C}\eta_1\otimes\mathcal{L}_2$ by
\begin{equation}
U_3(\xi\otimes\eta_2)=(1\otimes \zeta)(U_2U_1^\ast\xi\otimes\eta_1)
\end{equation}
for all $\xi\in\mathcal{H}\otimes\mathcal{L}_1$.
We define a unitary operator $U_5$ from $\mathbb{C}\eta_2$ to $\mathbb{C}\eta_1$
by $U_5x=\langle \eta_2|x \rangle \eta_1$ for all $x\in\mathbb{C}\eta_2$. Then $U_3$ has the following form:
\begin{equation}
U_3=(1\otimes \zeta)(U_2U_1^\ast\otimes U_5).
\end{equation}
Since both $\mathcal{H}\otimes\mathcal{L}_1\otimes\mathbb{C}\eta_2$
and $\mathcal{H}\otimes\mathbb{C}\eta_1\otimes\mathcal{L}_2$
are subspaces of $\mathcal{H}\otimes\mathcal{L}_1\otimes\mathcal{L}_2$ and satisfies
$\dim(\mathcal{H}\otimes\mathcal{L}_1\otimes\mathbb{C}\eta_2)=
\dim(\mathcal{H}\otimes\mathbb{C}\eta_1\otimes\mathcal{L}_2)$
by the above observation, it holds that
\begin{equation}
\dim((\mathcal{H}\otimes\mathcal{L}_1\otimes\mathbb{C}\eta_2)^\bot)=
\dim((\mathcal{H}\otimes\mathbb{C}\eta_1\otimes\mathcal{L}_2)^\bot).
\end{equation}
This fact implies that
there is a unitary operator $U_4$ from $(\mathcal{H}\otimes\mathcal{L}_1\otimes\mathbb{C}\eta_2)^\bot$
to $(\mathcal{H}\otimes\mathbb{C}\eta_1\otimes\mathcal{L}_2)^\bot$.
Let $Q$ be a projection operator from $\mathcal{H}\otimes\mathcal{L}_1\otimes\mathcal{L}_2$
onto $\mathcal{H}\otimes\mathcal{L}_1\otimes\mathbb{C}\eta_2$, i.e., $Q=1\otimes 1 \otimes P_{\eta_2}$.
Let $R$ be a projection operator from $\mathcal{H}\otimes\mathcal{L}_1\otimes\mathcal{L}_2$
onto $\mathcal{H}\otimes\mathbb{C}\eta_1\otimes\mathcal{L}_2$,
i.e., $R=1\otimes P_{\eta_1}\otimes 1$.
We then define a unitary operator $U$ on $\mathcal{H}$ by $U=U_3Q+U_4(1-Q)$.
It is obvious that $U$ satisfies $UQ=U_3Q=RU_3=RU$.

We define a Hilbert space $\mathcal{K}$ by
$\mathcal{K}=\mathcal{L}_1\otimes\mathcal{L}_2$, a normal state $\sigma$ on ${\bf B}(\mathcal{K})$ by
\begin{equation}
\sigma(Y)=\langle \eta_1\otimes\eta_2|Y(\eta_1\otimes\eta_2) \rangle
\end{equation}
for all $Y\in {\bf B}(\mathcal{K})$, and a spectral measure $E:\mathcal{F}\rightarrow
{\bf B}(\mathcal{K})$ by
\begin{equation}
E(\Delta)=1\otimes E_0(\Delta)
\end{equation}
for all $\Delta\in\mathcal{F}$. 

We show that the 4-tuple $\mathbb{M}:=(\mathcal{K},\sigma,E,U)$ is a measuring process
for $({\bf B}(\mathcal{H}),S)$ such that
\begin{equation}
W_T(\overrightarrow{M})=W_T^\mathbb{M}(\overrightarrow{M})
\end{equation}
for all $T\in\mathcal{T}$ and $\overrightarrow{M}\in{\bf B}(\mathcal{H})^{|T|}$.
Since $Q=1\otimes 1 \otimes P_{\eta_2}$ and
\begin{equation}
\pi_{in}(M)=U_1\Pi_{in}(M)U_1^\ast \otimes 1_{{\bf B}(\mathcal{L}_2)}
\end{equation}
for all $M\in{\bf B}(\mathcal{H})$, we have
\begin{equation} \label{In}
\pi_{in}(M)Q=Q\pi_{in}(M)
\end{equation}
for all $M\in{\bf B}(\mathcal{H})$. Similarly, we have
\begin{align}
\pi_\Delta(M)Q &=U^\ast(M\otimes E(\Delta))UQ \nonumber\\
 &=U^\ast(M\otimes E(\Delta))RU_3Q \nonumber \\
 &=U^\ast R(M\otimes E(\Delta))U_3Q  \nonumber \\
 &= U_3^\ast R(M\otimes E(\Delta))U_3Q \nonumber \\
 &= U_3^\ast (M\otimes E(\Delta))U_3Q \nonumber \\
 &= ((1\otimes \zeta)(U_2U_1^\ast\otimes U_5))^\ast (M\otimes E(\Delta))((1\otimes \zeta)(U_2U_1^\ast\otimes U_5))
 \nonumber \\
 &= (U_1U_2^\ast\otimes U_5^\ast)(M\otimes \zeta^\ast E(\Delta)\zeta)(U_2U_1^\ast\otimes U_5)Q  \nonumber \\
 &= (U_1U_2^\ast\otimes U_5^\ast)(M\otimes E_0(\Delta)\otimes 1_{{\bf B}(\mathcal{L}_1)})
 (U_2U_1^\ast\otimes U_5))Q  \nonumber \\
 &= (U_1\Pi_\Delta(M)U_1^\ast\otimes P_{\eta_2})Q\nonumber \\
 &= (U_1\Pi_\Delta(M)U_1^\ast\otimes 1_{{\bf B}(\mathcal{L}_2)})Q,
\end{align}
and
\begin{equation}\label{Out}
\pi_\Delta(M)Q=Q\pi_\Delta(M)
\end{equation}
for all $M\in{\bf B}(\mathcal{H})$ and $\Delta\in\mathcal{F}$.
By Eqs. (\ref{In}) and (\ref{Out}), it holds that
\begin{equation}
Q\pi_T(\overrightarrow{M})Q=
Q(U_1\Pi_{t_1}(M_1)\cdots \Pi_{t_{|T|}}(M_{|T|})U_1^\ast
\otimes 1_{{\bf B}(\mathcal{L}_2)})Q
\end{equation}
for all $T=(t_1,\cdots,t_{|T|})\in\mathcal{T}$ and
$\overrightarrow{M}=(M_1,\cdots,M_{|T|})\in{\bf B}(\mathcal{H})^{|T|}$.

For all $\xi\in\mathcal{H}$, $T=(t_1,\cdots,t_{|T|})\in\mathcal{T}$ and
$\overrightarrow{M}=(M_1,\cdots,M_{|T|})\in{\bf B}(\mathcal{H})^{|T|}$.
\begin{align}
\langle \xi| W_T^\mathbb{M}(\overrightarrow{M})\xi\rangle
&=\langle \xi| (\mathrm{id}\otimes\sigma)(\pi_T(\overrightarrow{M}))\xi\rangle \nonumber\\
 &= \langle \xi\otimes\eta_1\otimes\eta_2|\pi_T(\overrightarrow{M})
 (\xi\otimes\eta_1\otimes\eta_2)\rangle\nonumber \\
 &= \langle Q(\xi\otimes\eta_1\otimes\eta_2)|\pi_T(\overrightarrow{M})
 Q(\xi\otimes\eta_1\otimes\eta_2)\rangle\nonumber \\
 &= \langle V\xi\otimes\eta_2|Q\pi_T(\overrightarrow{M})
 Q(V\xi\otimes\eta_2)\rangle\nonumber \\
 &= \langle V\xi\otimes\eta_2|Q(U_1\Pi_{t_1}(M_1)\cdots \Pi_{t_{|T|}}(M_{|T|})U_1^\ast
\otimes 1_{{\bf B}(\mathcal{L}_2)})Q(V\xi\otimes\eta_2)\rangle\nonumber \\
 &= \langle V\xi\otimes\eta_2|(U_1\Pi_{t_1}(M_1)\cdots \Pi_{t_{|T|}}(M_{|T|})U_1^\ast
\otimes 1_{{\bf B}(\mathcal{L}_2)})(V\xi\otimes\eta_2)\rangle\nonumber \\
 &= \langle V\xi|U_1\Pi_{t_1}(M_1)\cdots \Pi_{t_{|T|}}(M_{|T|})U_1^\ast
 V\xi\rangle\nonumber \\
 &= \langle \xi|V^\ast U_1\Pi_{t_1}(M_1)\cdots \Pi_{t_{|T|}}(M_{|T|})U_1^\ast
 V\xi\rangle\nonumber \\
 &= \langle \xi|V_0^\ast \Pi_{t_1}(M_1)\cdots \Pi_{t_{|T|}}(M_{|T|})V_0
 \xi\rangle\nonumber \\
 &= \langle \xi| W_T(\overrightarrow{M})\xi\rangle,
\end{align}
which completes the proof.
\end{proof}
\begin{remark}
We adopt here the same notations as in the proof of the above theorem.
Suppose that $\mathcal{H}$ is separable and $(S,\mathcal{F})$ is a standard Borel space.
Let $\{\Delta_n\}_{n\in\mathbb{N}}$ be a countable generator of $\mathcal{F}$,
$\{M_n\}_{n\in\mathbb{N}}$ a dense subset of ${\bf B}(\mathcal{H})$ in the strong topology,
and $\{\xi_n\}_{n\in\mathbb{N}}$ a dense subset of $\mathcal{H}$.
Let $\{C_n\}_{n\in\mathbb{N}}$ be a well-ordering of
the countable set $\{\Pi_{in}(M_n)\;|\;n\in\mathbb{N}\}\cup\{\Pi_{\Delta_m}(M_n)\;|\;m,n\in\mathbb{N}\}$.
$\mathcal{L}_0$ has the increasing sequence $\{\mathcal{L}_{0,n}\}_{n\in\mathbb{N}}$
of separable closed subspaces, defined by
\begin{equation}
\mathcal{L}_{0,n}=\overline{\mathrm{span}}\{C_{f(1)}\cdots C_{f(n)}V_0\xi_k
\;|\;f\in\mathbb{N}^{\{1,\cdots,n\}},k\in\mathbb{N}\}
\end{equation}
for all $n\in\mathbb{N}$, such that $\mathcal{L}_0=\overline{\mathrm{span}}(\cup_{n}\mathcal{L}_{0,n})$,
where $\mathbb{N}^{\{1,\cdots,n\}}$ is the set of maps from
$\{1,\cdots,n\}$ to $\mathbb{N}$.
Hence, $\mathcal{L}_0$ is separable because we have
\begin{equation}
\mathcal{L}_0= \overline{\mathrm{span}}\left\{
\bigoplus_{n=1}^\infty \;(\mathcal{L}_{0,n-1})^\bot\cap \mathcal{L}_{0,n}\right\},
\end{equation}
where $\mathcal{L}_{0,0}=\{0\}$.
It is immediately seen that $\mathcal{L}_1$, $\mathcal{L}_2$
and $\mathcal{K}=\mathcal{L}_1\otimes\mathcal{L}_2$ are also separable.
\end{remark}


\section{Extendability of CP instruments to systems of measurement correlations}
\label{sec:5}
To begin with, the following theorem similar to \cite[Theorem 3.4]{OO16}
holds for arbitrary von Neumann algebras $\mathcal{M}$.
\begin{corollary}
Let $\mathcal{M}$ be a von Neumann algebra on a Hilbert space $\mathcal{H}$
and $(S,\mathcal{F})$ a measurable space.
For a system $\{W_T\}_{T\in\mathcal{T}}$ of measurement correlations for $(\mathcal{M},S)$,
the following conditions are equivalent:\\
$(1)$ There is a system $\{\widetilde{W}_T\}_{T\in\mathcal{T}}$
of measurement correlations for $({\bf B}(\mathcal{H}),S)$ such that
\begin{equation}
W_T(\overrightarrow{M})=\widetilde{W}_T(\overrightarrow{M})
\end{equation}
for all $T\in\mathcal{T}$ and $\overrightarrow{M}\in \mathcal{M}^{|T|}$.\\
$(2)$ There is a measuring process $\mathbb{M}=(\mathcal{K},\sigma,E,U)$
for $(\mathcal{M},S)$ such that
\begin{equation}\label{SEMPandCP2}
W_T(\overrightarrow{M})=W_T^\mathbb{M}(\overrightarrow{M})
\end{equation}
for all $T\in\mathcal{T}$ and $\overrightarrow{M}\in\mathcal{M}^{|T|}$.
\end{corollary}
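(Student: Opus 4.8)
The plan is to read this off directly from Theorem \ref{OTOC}, exploiting the fact that over ${\bf B}(\mathcal{H})$ the membership constraint in the definition of a measuring process is vacuous, so that the correspondence between measuring processes and systems of measurement correlations is available in full. Both implications are short: the heart of the matter is matching domains, since the defining formula $W_T^\mathbb{M}(\overrightarrow{M})=(\mathrm{id}\otimes\sigma)(\pi_T(\overrightarrow{M}))$ with $\pi_{in}(M)=M\otimes 1_\mathcal{K}$ and $\pi_\Delta(M)=U^\ast(M\otimes E(\Delta))U$ does not depend on whether the ambient algebra is $\mathcal{M}$ or ${\bf B}(\mathcal{H})$.

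For $(1)\Rightarrow(2)$, I would start from an extension $\{\widetilde{W}_T\}_{T\in\mathcal{T}}$ of $\{W_T\}_{T\in\mathcal{T}}$ to a system of measurement correlations for $({\bf B}(\mathcal{H}),S)$ and apply Theorem \ref{OTOC} to obtain a measuring process $\mathbb{M}=(\mathcal{K},\sigma,E,U)$ for $({\bf B}(\mathcal{H}),S)$ with $\widetilde{W}_T(\overrightarrow{M})=W_T^\mathbb{M}(\overrightarrow{M})$ for all $T\in\mathcal{T}$ and $\overrightarrow{M}\in{\bf B}(\mathcal{H})^{|T|}$. The remaining task is to upgrade this same $4$-tuple to a measuring process for $(\mathcal{M},S)$ in the sense of Definition \ref{MP2}, i.e. to check that $\{W_T^\mathbb{M}\}$ restricts to a system of measurement correlations for $(\mathcal{M},S)$. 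Since the formula for $W_T^\mathbb{M}$ is insensitive to the ambient algebra, for $\overrightarrow{M}\in\mathcal{M}^{|T|}$ one has $W_T^\mathbb{M}(\overrightarrow{M})=\widetilde{W}_T(\overrightarrow{M})=W_T(\overrightarrow{M})\in\mathcal{M}$, so the range containment holds, and the restricted family is exactly $\{W_T\}_{T\in\mathcal{T}}$, which is a system of measurement correlations for $(\mathcal{M},S)$ by hypothesis. Hence $\mathbb{M}$ is a measuring process for $(\mathcal{M},S)$ realizing $\{W_T\}_{T\in\mathcal{T}}$, establishing Eq.(\ref{SEMPandCP2}).

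For $(2)\Rightarrow(1)$, given a measuring process $\mathbb{M}=(\mathcal{K},\sigma,E,U)$ for $(\mathcal{M},S)$ with $W_T=W_T^\mathbb{M}$ on $\mathcal{M}^{|T|}$, I would note that the identical $4$-tuple is automatically a measuring process for $({\bf B}(\mathcal{H}),S)$, since there the membership requirement $(\mathrm{id}\otimes\sigma)[U^\ast(M\otimes E(\Delta))U]\in{\bf B}(\mathcal{H})$ is satisfied trivially. By Theorem \ref{OTOC} it therefore defines a system $\{\widetilde{W}_T\}_{T\in\mathcal{T}}$ of measurement correlations for $({\bf B}(\mathcal{H}),S)$ with $\widetilde{W}_T=W_T^\mathbb{M}$ on all of ${\bf B}(\mathcal{H})^{|T|}$; restricting to $\mathcal{M}^{|T|}$ yields $\widetilde{W}_T(\overrightarrow{M})=W_T^\mathbb{M}(\overrightarrow{M})=W_T(\overrightarrow{M})$, which is precisely condition $(1)$.

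I expect no genuine obstacle here beyond Theorem \ref{OTOC}; the statement is essentially a repackaging of that correspondence. The one point demanding care is the upgrade in $(1)\Rightarrow(2)$ from a measuring process for $({\bf B}(\mathcal{H}),S)$ to one for $(\mathcal{M},S)$, and this reduces to the range-containment check $W_T^\mathbb{M}(\overrightarrow{M})\in\mathcal{M}$ for $\overrightarrow{M}\in\mathcal{M}^{|T|}$, which is immediate from the chain of equalities $W_T^\mathbb{M}=\widetilde{W}_T=W_T$ on $\mathcal{M}^{|T|}$.
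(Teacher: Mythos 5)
Your proposal is correct and follows essentially the same route as the paper, which simply declares the corollary obvious from Theorem \ref{OTOC}; you have merely made explicit the two routine checks (that over ${\bf B}(\mathcal{H})$ every $4$-tuple is a measuring process, and that the range containment $W_T^\mathbb{M}(\overrightarrow{M})\in\mathcal{M}$ on $\mathcal{M}^{|T|}$ follows from $W_T^\mathbb{M}=\widetilde{W}_T=W_T$ there) that the paper leaves implicit.
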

The proof of this corollary is obvious by Theorem \ref{OTOC}.
It is not known that how large the set of systems of measurement correlations
for $(\mathcal{M},S)$ satisfying the above equivalent conditions in
the set of systems of measurement correlations for $(\mathcal{M},S)$ at the present time.


Going back to the starting point of quantum measurement theory,
we do not have to rack our brain to resolve the above difficulty.
This is because we should recall that each CP instrument statistically
corresponds to an appratus measuring the system under consideration in the sense of the Davies-Lewis proposal.
In addition, the introduction of systems of measurement correlations was motivated by
the necessity of the counterpart of CP instruments
in the (generalized) Heisenberg picture in order to systematically treat measurement correlations.
Hence it is natural to consider that an instrument $\mathcal{I}$ for $(\mathcal{M},S)$
describing a physically realizable measurement should be defined by a system of measurement correlations
$\{W_T\}_{T\in\mathcal{T}_S}$ for $(\mathcal{M},S)$, i.e.,
\begin{equation}
\mathcal{I}(M,\Delta)=\mathcal{I}_W(M,\Delta)
\end{equation}
for all $M\in\mathcal{M}$ and $\Delta\in\mathcal{F}$.
\begin{question}\label{Q1}
Let $\mathcal{M}$ be a von Neumann algebra on a Hilbert space
$\mathcal{H}$ and $(S,\mathcal{F})$ a measurable space.
For any CP instrument $\mathcal{I}$ for $(\mathcal{M},S)$, does there exist
a system of measurement correlations $\{W_T\}_{T\in\mathcal{T}_S}$ for $(\mathcal{M},S)$
which defines $\mathcal{I}$?
\end{question}
In the case of ${\bf B}(\mathcal{H})$, this question is already affirmatively answered by
the existence of measuring processes for $({\bf B}(\mathcal{H}),S)$
for every CP instrument for $({\bf B}(\mathcal{H}),S)$ (Theorem \ref{CPIMP}).
Surprisingly, Question \ref{Q1} is affirmatively resolved for all CP instruments defined on
arbitrarily given von Neumann algebras.
\begin{theorem}
Let $\mathcal{M}$ be a von Neumann algebra on a Hilbert space $\mathcal{H}$
and $(S,\mathcal{F})$ a measurable space.
For every CP instrument $\mathcal{I}$ for $(\mathcal{M},S)$, 
there exists a system of measurement correlations $\{W_T\}_{T\in\mathcal{T}_S}$ for $(\mathcal{M},S)$
such that
\begin{equation}
\mathcal{I}(M,\Delta)=\mathcal{I}_W(M,\Delta)
\end{equation}
for all $M\in\mathcal{M}$ and $\Delta\in\mathcal{F}$.
\end{theorem}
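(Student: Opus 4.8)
The plan is to produce the system $\{W_T\}_{T\in\mathcal{T}_S}$ directly in the represented form of Theorem \ref{Stinespring}: I would construct a Hilbert space $\mathcal{L}$, a family $\{\Pi_t\}_{t\in\mathcal{T}^{(1)}}$ of normal representations of $\mathcal{M}$ on $\mathcal{L}$ and an isometry $V:\mathcal{H}\to\mathcal{L}$ with $\Pi_{in}(M)V=VM$, and then set $W_T(\overrightarrow{M})=V^\ast\Pi_{t_1}(M_1)\cdots\Pi_{t_{|T|}}(M_{|T|})V$. Once such representations are in hand the verification is almost automatic. The very form $W_T=V^\ast\Pi_T V$ makes $(\mathrm{MC}2)$ a sum of squares, since $W_{T_i^\#\times T_j}(\overrightarrow{M}_i^\#\times\overrightarrow{M}_j)=(\Pi_{T_i}(\overrightarrow{M}_i)V)^\ast(\Pi_{T_j}(\overrightarrow{M}_j)V)$; $(\mathrm{MC}1)$ follows from normality of the $\Pi_t$, $(\mathrm{MC}3)$ from the intertwining relation in the adjoint form $V^\ast\Pi_{in}(M)=MV^\ast$, and $(\mathrm{MC}4)$–$(\mathrm{MC}6)$ from the multiplicative and $\sigma$-additive structure of the ``output'' letters. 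The only thing that must be engineered is that $W_{(\Delta)}(M)=V^\ast\Pi_\Delta(M)V=\mathcal{I}(M,\Delta)$.

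I would build the output part from the instrument itself. Form the unital binormal CP map $\Psi_\mathcal{I}:\mathcal{M}\otimes_{\mathrm{min}}L^\infty(S,\mathcal{I})\to\mathcal{M}$ with $\Psi_\mathcal{I}(M\otimes[\chi_\Delta])=\mathcal{I}(M,\Delta)$ and take its minimal Stinespring dilation $(\mathcal{K}_0,\Pi,V_0)$, so that $V_0$ is an isometry. Writing $\rho(M)=\Pi(M\otimes 1)$ and $E(\Delta)=\Pi(1\otimes[\chi_\Delta])$ yields a representation $\rho$ of $\mathcal{M}$, a PVM $E$ commuting with $\rho(\mathcal{M})$, and $\mathcal{I}(M,\Delta)=V_0^\ast\rho(M)E(\Delta)V_0$. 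The essential point is that, although $\Psi_\mathcal{I}$ is only binormal, $\rho$ is \emph{normal}: pairing $\langle\Pi(A_1)V_0\xi_1|\rho(M_\alpha)\Pi(A_2)V_0\xi_2\rangle=\langle\xi_1|\Psi_\mathcal{I}(A_1^\ast(M_\alpha\otimes 1)A_2)\xi_2\rangle$ against a bounded ultraweakly convergent net $M_\alpha\to M$, and approximating $A_1,A_2$ in norm by elementary tensors, reduces normality of $\rho$ to normality of each $\mathcal{I}(\cdot,\Delta)$ and ultraweak continuity of multiplication. This is the representation theorem for CP instruments (cf. \cite[Proposition 4.2]{Oz84}). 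I then set $\Pi_\Delta(M)=\rho(M)E(\Delta)$; these are normal representations with $\Pi_\Delta(M)\Pi_{\Delta'}(N)=\Pi_{\Delta\cap\Delta'}(MN)$ and $\Pi_S(1)=1$, which deliver $(\mathrm{MC}4)$–$(\mathrm{MC}6)$.

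It remains to adjoin the input representation $\Pi_{in}$ with $\Pi_{in}(M)V_0=V_0M$. Crucially, I would \emph{not} try to realize $\rho$ spatially as $U^\ast(\,\cdot\otimes 1)U$: that would amount to a measuring process and is impossible when $\mathcal{I}$ lacks the NEP (Examples \ref{NoNEP1}, \ref{NoNEP2}). The extra flexibility of a system of measurement correlations is exactly that $\rho$ and $E$ need only generate a representation of $\mathcal{M}\otimes_{\mathrm{min}}L^\infty(S,\mathcal{I})$, never a normal one of $\mathcal{M}\overline{\otimes}L^\infty(S,\mathcal{I})$. Instead I enlarge to $\mathcal{L}=\mathcal{K}_0\oplus\mathcal{N}$, where $\mathcal{N}$ is a large enough multiple of a faithful normal representation of $\mathcal{M}$ chosen so that $(\mathcal{K}_0\ominus V_0\mathcal{H})\oplus\mathcal{N}$ again carries a normal unital representation and $\mathcal{N}$ carries a commuting PVM; I extend $\rho,E$ by this auxiliary data on $\mathcal{N}$, put $V=V_0$, and define $\Pi_{in}$ to act as the identity representation on the reducing subspace $V_0\mathcal{H}$ and as the auxiliary representation on its complement. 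This keeps $\Pi_{in}(M)V=VM$ and $V^\ast\Pi_\Delta(M)V=\mathcal{I}(M,\Delta)$, hence $\mathcal{I}_W=\mathcal{I}$.

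The hard part is this last gluing. The representations $\Pi_{in}$ and $\Pi_\Delta$ are genuinely unrelated — $V_0$ intertwines the former but only dilates the latter through the CP map $\mathcal{I}(\cdot,S)$ — so they are not conjugate under any single unitary, and the whole problem is to house both on one Hilbert space while $V_0\mathcal{H}$ stays reducing for $\Pi_{in}$. The summand $\mathcal{N}$ exists solely to furnish a normal unital representation (and a PVM) on the orthogonal complement of $V_0\mathcal{H}$; this is vacuous in commutative situations like Example \ref{NoNEP1}, where $\mathcal{K}_0=V_0\mathcal{H}$ and $\Pi_{in}=\rho$, but in general needs the absorption argument for Hilbert modules. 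The normality of $\rho$ sketched above is the other ingredient that could fail for a merely binormal map and must be confirmed.
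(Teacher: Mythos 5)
Your overall strategy---realize $\{W_T\}$ in the represented form $W_T=V^\ast\Pi_T V$ of Theorem \ref{Stinespring}, starting from the dilation $\mathcal{I}(M,\Delta)=V_0^\ast\rho(M)E(\Delta)V_0$ of \cite[Proposition 4.2]{Oz84}---is also the paper's strategy, and your normality argument for $\rho$ is essentially the content of that cited proposition. But the gluing step, which you correctly flag as the hard part, is where the whole proof lives, and your construction leaves open the one requirement that is \emph{not} automatic: the values of $W_T$ must lie in $\mathcal{M}$, not merely in ${\bf B}(\mathcal{H})$ (this membership is part of the definition of a system of measurement correlations). With $\Pi_{in}(N)=V_0NV_0^\ast\oplus\pi'(N)$ block-diagonal with respect to $V_0\mathcal{H}\oplus(V_0\mathcal{H})^\perp$ and $\pi'$ a generic auxiliary representation, a mixed word such as $T=(\Delta_1,in,\Delta_2)$ gives, with $P=V_0V_0^\ast$,
\begin{equation*}
W_T(M_1,N,M_2)=\mathcal{I}(M_1,\Delta_1)\,N\,\mathcal{I}(M_2,\Delta_2)
+V_0^\ast\rho(M_1)E(\Delta_1)(1-P)\,\pi'(N)\,(1-P)\rho(M_2)E(\Delta_2)V_0,
\end{equation*}
and nothing forces the second summand into $\mathcal{M}$. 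The absorption argument you invoke only yields \emph{existence} of a normal unital representation (and a PVM) on the complement, which is not the issue; the issue is that compressions of words with interior $in$-letters must close up in $\mathcal{M}$, and for an arbitrary $\pi'$ they do not.

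The paper resolves exactly this point by a non-generic choice. It works on $\mathcal{H}\oplus\mathcal{K}$, takes $\Pi_{in}(M)=M\oplus\pi_0(M)$---so the auxiliary part of the input representation is the dilation representation $\pi_0$ itself, not an unrelated one---together with $E(\Delta)=\delta_s(\Delta)1\oplus E_0(\Delta)$ and the concrete unitary
$U=\left(\begin{smallmatrix}0&-V^\ast\\ V&Q\end{smallmatrix}\right)$, $Q=1-VV^\ast$,
and sets $\Pi_\Delta(M)=U^\ast\Pi_{in}(M)E(\Delta)U$. Every word $\Pi_T(\overrightarrow{M})$ then lies in the $^\ast$-subalgebra $\mathcal{D}$ whose $(1,1)$-corner is $\mathcal{M}$, $(2,2)$-corner is the algebra $\mathcal{A}$ generated by $V\mathcal{M}V^\ast$, $\pi_0(\mathcal{M})E_0(\mathcal{F})$ and $Q$, and off-diagonal corners are $\mathrm{span}(\mathcal{M}V^\ast\mathcal{A})$ and $\mathrm{span}(\mathcal{A}V\mathcal{M})$; the closure property $V^\ast\mathcal{A}V\subset\mathcal{M}$---checked by expanding $Q=1-VV^\ast$, so that every compression reduces to products of terms $V^\ast\pi_0(M)E_0(\Delta)V=\mathcal{I}(M,\Delta)$---shows that the $(1,1)$-corner $W_T(\overrightarrow{M})=P_{11}[\Pi_T(\overrightarrow{M})]$ is in $\mathcal{M}$ for \emph{all} words, including those with interior $in$-letters. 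To repair your argument you would have to replace the generic $\pi'$ by $\pi_0$ and tie the input and output representations together through such a unitary; as it stands, the proposal has a genuine gap at precisely the step the theorem turns on.
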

\begin{proof}
By \cite[Proposition 4.2]{Oz84} (or \cite[Proposition 3.2]{OO16}),
there exist a Hilbert space $\mathcal{K}$, a normal representation
$\pi_0$ of $\mathcal{M}$ on $\mathcal{K}$, a PVM $E_0:\mathcal{F}\rightarrow{\bf B}(\mathcal{K})$
and an isometry $V:\mathcal{H}\rightarrow\mathcal{K}$ such that
\begin{align}
\mathcal{I}(M,\Delta) &= V^\ast \pi_0(M)E_0(\Delta)V, \label{CPrep} \\
\pi_0(M)E_0(\Delta) &= E_0(\Delta)\pi_0(M)
\end{align}
for all $M\in\mathcal{M}$ and $\Delta\in\mathcal{F}$, and that
$\mathcal{K}=\overline{\mathrm{span}}(\pi_0(\mathcal{M})E_0(\mathcal{F})V\mathcal{H})$.

We follow the identification
\begin{equation}
{\bf B}(\mathcal{H}\oplus\mathcal{K})=\left(
\begin{array}{cc}
{\bf B}(\mathcal{H}) & {\bf B}(\mathcal{K},\mathcal{H}) \\
{\bf B}(\mathcal{H},\mathcal{K}) & {\bf B}(\mathcal{K})
\end{array}
\right)
\end{equation}
with multiplication and involution compatible with the usual matrix operations.
We define a normal represetation $\Pi_{in}$ of $\mathcal{M}$ on $\mathcal{H}\oplus\mathcal{K}$ by
\begin{equation}
\Pi_{in}(M)=\left(
\begin{array}{cc}
M & 0 \\
0 & \pi_0(M)
\end{array}
\right) \label{inrepthm5.1}
\end{equation}
for all $M\in\mathcal{M}$, a PVM
$E:\mathcal{F}\rightarrow{\bf B}(\mathcal{H}\oplus\mathcal{K})$ by
\begin{equation}
E(\Delta)=\left(
\begin{array}{cc}
\delta_{s}(\Delta)1 & 0 \\
0 & E_0(\Delta)
\end{array}
\right)
\end{equation}
for all $\Delta\in\mathcal{F}$, where $s\in S$
and $\delta_s$ is a delta measure on $(S,\mathcal{F})$ concentrated on $s$,
and a unitary operator $U$ of $\mathcal{H}\oplus\mathcal{K}$ by
\begin{equation}
U=\left(
\begin{array}{cc}
0 & -V^\ast \\
V & Q
\end{array}
\right),
\end{equation}
where $Q=1-VV^\ast$.
For every $\Delta\in\mathcal{F}$, we define a representation $\Pi_\Delta$
of $\mathcal{M}$ on $\mathcal{H}\oplus\mathcal{K}$ by
\begin{align}
\Pi_{\Delta}(M)&=U^\ast \Pi_{in}(M)E(\Delta)U \nonumber\\
 &=\left(
\begin{array}{cc}
\mathcal{I}(M,\Delta) & -V^\ast\pi_0(M)E_0(\Delta)Q \\
-Q\pi_0(M)E_0(\Delta)V & \delta_s(\Delta)VMV^\ast+Q\pi_0(M)E_0(\Delta)Q
\end{array}
\right) \label{outrepthm5.1}
\end{align}
for all $M\in\mathcal{M}$. We define a unital normal CP linear map
$P_{11}:{\bf B}(\mathcal{H}\oplus\mathcal{K})\rightarrow{\bf B}(\mathcal{H})$ by
\begin{equation}
\left(
\begin{array}{cc}
{\bf B}(\mathcal{H}) & {\bf B}(\mathcal{K},\mathcal{H}) \\
{\bf B}(\mathcal{H},\mathcal{K}) & {\bf B}(\mathcal{K})
\end{array}
\right)\ni
\left(
\begin{array}{cc}
X_{11} &X_{12}   \\
 X_{21} & X_{22} 
\end{array}
\right)\mapsto X_{11}\in {\bf B}(\mathcal{H}).
\end{equation}
For every $T=(t_1,\cdots,t_{|T|})\in\mathcal{T}_S$,
we define a map $W_T:\mathcal{M}^{|T|}\rightarrow {\bf B}(\mathcal{H})$ by
\begin{equation}
W_T(\overrightarrow{M})=P_{11}[\Pi_T(\overrightarrow{M})]=P_{11}[\Pi_{t_1}(M_1)\cdots \Pi_{t_{|T|}}(M_{|T|})]
\end{equation}
for all $\overrightarrow{M}=(M_1,\cdots,M_{|T|})\in \mathcal{M}^{|T|}$.

We show that the family $\{W_T\}_{T\in\mathcal{T}_S}$ is a system of measurement correlations
for $(\mathcal{M},S)$ such that
\begin{equation}
\mathcal{I}(M,\Delta)=W_\Delta(M)
\end{equation}
for all $M\in\mathcal{M}$ and $\Delta\in\mathcal{F}$. For this purpose,
it suffices to show $W_T(\overrightarrow{M})\in\mathcal{M}$ for all $T\in\mathcal{T}_S$ and
$\overrightarrow{M}\in \mathcal{M}^{|T|}$.
Then the set
\begin{equation}
\mathcal{D}=\left(
\begin{array}{cc}
\mathcal{M} & \mathrm{span}(\mathcal{M}V^\ast \mathcal{A}) \\
\mathrm{span}(\mathcal{A}V \mathcal{M}) & \mathcal{A}
\end{array}
\right)
\end{equation}
is a $^\ast$-subalgebra of ${\bf B}(\mathcal{H}\oplus\mathcal{K})$,
where $\mathcal{A}$ is a $^\ast$-subalgebra of ${\bf B}(\mathcal{K})$
algebraically generated by $V\mathcal{M}V^\ast$, $\pi_0(\mathcal{M})E_0(\mathcal{F})$ and $Q=1-VV^\ast$.
This fact follows from the usual matrix operations
and $V^\ast\mathcal{A}V\subset\mathcal{M}$\footnote{To show this, we use $Q=1-VV^\ast$ and Eq.(\ref{CPrep}).}.
Since it is obvious that $\Pi_{in}(M), \Pi_\Delta(M^\prime)\in\mathcal{D}$
for all $M,M^\prime\in\mathcal{M}$ and $\Delta\in\mathcal{F}$,
we have $\Pi_T(\overrightarrow{M})\in\mathcal{D}$ for all $T\in\mathcal{T}_S$ and $\overrightarrow{M}\in \mathcal{M}^{|T|}$.
Therefore,
for every $T\in\mathcal{T}_S$ and $\overrightarrow{M}\in \mathcal{M}^{|T|}$,
the $(1,1)$-component of $\Pi_{T}(\overrightarrow{M})$ is also an element of $\mathcal{M}$, which completes the proof.
\end{proof}

\begin{remark}
In the case of an atomic von Neumann algebra $\mathcal{M}$ on a Hilbert space $\mathcal{H}$,
we have another construction of a system of measurement correlations which defines a given CP instrument.

Let $\mathcal{E}:{\bf B}(\mathcal{H})\rightarrow\mathcal{M}$ be a normal conditional expectation.
We define a CP instrument $\widetilde{\mathcal{I}}$ for $({\bf B}(\mathcal{H}),S)$ by
\begin{equation}
\widetilde{\mathcal{I}}(X,\Delta)=\mathcal{I}(\mathcal{E}(X),\Delta)
\end{equation}
for all $X\in{\bf B}(\mathcal{H})$ and $\Delta\in\mathcal{F}$. By Theorem \ref{CPIMP},
there exists a measuring process $\mathbb{M}=(\mathcal{K},\sigma,E,U)$ for $({\bf B}(\mathcal{H}),S)$
such that 
\begin{equation}
\widetilde{\mathcal{I}}(X,\Delta)=\mathcal{I}_\mathbb{M}(X,\Delta)
\end{equation}
for all $X\in{\bf B}(\mathcal{H})$ and $\Delta\in\mathcal{F}$.
A system of measurement correlations $\{W_T\}_{T\in\mathcal{T}_S}$ for $(\mathcal{M},S)$ is defined by
\begin{equation}
W_T(\overrightarrow{M})=\mathcal{E}(W_T^\mathbb{M}(\overrightarrow{M}))
\end{equation}
for all $T\in\mathcal{T}$ and $\overrightarrow{M}\in\mathcal{M}^{|T|}$. Then $\mathcal{I}_W$ satisfies
\begin{equation}
\mathcal{I}_W(M,\Delta)=\mathcal{E}(W_\Delta^\mathbb{M}(M))=\mathcal{E}(\widetilde{\mathcal{I}}(M,\Delta))
=\mathcal{E}(\mathcal{I}(\mathcal{E}(M),\Delta))=\mathcal{I}(M,\Delta)
\end{equation}
for all $M\in\mathcal{M}$ and $\Delta\in\mathcal{F}$.

We should remark that the above construction does not show
the existence of measuring processes for $(\mathcal{M},S)$
for every CP instrument for $(\mathcal{M},S)$.
\end{remark}

\section{Approximate realization of CP instruments by measuring processes}
\label{sec:6}

We discuss the realizability of CP instruments by measuring processes in this section.
Here, we shall start from the following question similar to Question \ref{Q1}.

\begin{question}\label{Q4}
Let $\mathcal{M}$ be a von Neumann algebra on a Hilbert space
$\mathcal{H}$ and $(S,\mathcal{F})$ a measurable space.
For any CP instrument $\mathcal{I}$ for $(\mathcal{M},S)$, does there exist
a measuring process $\mathbb{M}$ for $(\mathcal{M},S)$ 
which realizes $\mathcal{I}$ within arbitrarily given error limits $\varepsilon>0$?
\end{question}
We say that a CP instrument $\mathcal{I}$ for $(\mathcal{M},S)$ is \textit{approximately realized} by
a net of measuring processes $\{\mathbb{M}_\alpha\}_{\alpha\in A}$ for $(\mathcal{M},S)$, or
$\{\mathbb{M}_\alpha\}_{\alpha\in A}$ \textit{approximately realizes} $\mathcal{I}$ if,
for every $\varepsilon>0$, $n\in\mathbb{N}$, $\rho_1,\cdots,\rho_n\in\mathcal{S}_n(\mathcal{M})$,
$\Delta_1,\cdots,\Delta_n\in\mathcal{F}$ and $M_1,\cdots,M_n\in\mathcal{M}$,
 there is $\alpha\in A$ such that $|\langle \rho_i, \mathcal{I}(M_i,\Delta_i)\rangle-\langle \rho_i,
 \mathcal{I}_{\mathbb{M}_\alpha}(M_i,\Delta_i)\rangle|<\varepsilon$ for all $i=1,\cdots,n$.
$\mathrm{CPInst}_{\mathrm{AR}}(\mathcal{M},S)$ denotes the set of 
CP instruments for $(\mathcal{M},S)$ approximately realized
by nets of measuring processes for $(\mathcal{M},S)$.

Before answering to Question \ref{Q4}, we shall extend the program,
advocated and developed by many researchers \cite{HK64,Sch59,Lud67,Kraus},
which states that physical processes should be described by (inner) CP maps
usually called operations \cite{HK64} or effects \cite{Lud67}.

\begin{definition}[\cite{Mingo89,ADH90}]
Let $\mathcal{M}$ be a von Neumann algebra on a Hilbert space $\mathcal{H}$.\\
$(1)$ A positive linear map $\Psi$ of $\mathcal{M}$ is said to be finitely inner if
there is a finite sequence $\{V_j\}_{j=1,\cdots,m}$ of $\mathcal{M}$ such that
\begin{equation}
\Psi(M)=\sum_{j=1}^m V_j^\ast MV_j
\end{equation}
for all $M\in\mathcal{M}$\\
$(2)$ A positive linear map $\Psi$ of $\mathcal{M}$ is said to be inner if
there is a sequence $\{V_j\}_{j\in\mathbb{N}}$ of $\mathcal{M}$ such that
\begin{equation}
\Psi(M)=\sum_{j=1}^\infty V_j^\ast MV_j
\end{equation}
for all $M\in\mathcal{M}$, where the convergence is ultraweak.\\
$(3)$ A positive linear map $\Psi$ of $\mathcal{M}$ is said to be approximately inner if
it is the pointwise ultraweak limit of a net $\{\Psi_\alpha\}_{\alpha\in A}$
of finitely inner positive linear maps such that $\Psi_\alpha(1)\leq \Psi(1)$ for all $\alpha\in A$.
\end{definition}
In \cite{ADH90}, finite innerness and approximate innerness of CP maps are called factorization
through the identity map $\mathrm{id}_\mathcal{M}:\mathcal{M}\rightarrow\mathcal{M}$ and
approximate factorization through $\mathrm{id}_\mathcal{M}$, respectively.
We refer the reader to \cite{Mingo89,Mingo90,ADH90,AD95} for more detailed discussions.
It is obvious that every finitely inner positive linear map $\Psi$ of $\mathcal{M}$ is inner.
Similarly, every inner positive linear map $\Psi(M)=\sum_{j=1}^\infty V_j^\ast M V_j$,
$M\in\mathcal{M}$, is approximately inner since it is the ultraweak limit of
a sequence $\{\Psi_j\}$ of finitely inner positive maps
$\Psi_j(M)=\sum_{k=1}^j V_k^\ast M V_k$, $M\in\mathcal{M}$,
such that $\Psi_j(1)=\sum_{k=1}^j V_k^\ast V_k\leq \Psi(1)$ for all $j\in\mathbb{N}$.
Every approximately inner positive linear map $\Psi$ of $\mathcal{M}$ is always completely positive.
\begin{definition}
An instrument $\mathcal{I}$ for $(\mathcal{M},S)$ is said to be finitely inner 
[inner, or approximately inner, respectively] if
 $\mathcal{I}(\cdot,\Delta)$ is finitely inner
 [inner, or approximately inner, respectively] for every $\Delta\in\mathcal{F}$.
$\mathrm{CPInst}_{\mathrm{FI}}(\mathcal{M},S)$ 
[$\mathrm{CPInst}_{\mathrm{IN}}(\mathcal{M},S)$, or $\mathrm{CPInst}_{\mathrm{AI}}(\mathcal{M},S)$,
respectively] denotes
the set of finitely inner [inner, or approximately inner, respectively] CP instruments for $(\mathcal{M},S)$.
\end{definition}
The following relation holds.
\begin{equation}
\mathrm{CPInst}_{\mathrm{FI}}(\mathcal{M},S)\subset\mathrm{CPInst}_{\mathrm{IN}}(\mathcal{M},S)
\subset\mathrm{CPInst}_{\mathrm{AI}}(\mathcal{M},S).
\end{equation}

\begin{definition}[Inner measuring process]
A measuring process $\mathbb{M}=(\mathcal{K},\sigma,E,U)$ for $(\mathcal{M},S)$
is said to be inner if $U$ is contained in $\mathcal{M}\overline{\otimes} {\bf B}(\mathcal{K})$.
\end{definition}
We then have the following theorem. 
\begin{theorem} \label{INJINNApp}
Let $\mathcal{M}$ be a von Neumann algebra on a Hilbert space $\mathcal{H}$
and $(S,\mathcal{F})$ a measurable space.
For every approximately inner (hence CP) instrument $\mathcal{I}$ for $(\mathcal{M},S)$, $\varepsilon>0$,
$n\in\mathbb{N}$, $\rho_1,\cdots,\rho_n\in\mathcal{S}_n(\mathcal{M})$,
$M_1,\cdots,M_n\in\mathcal{M}$ and $\Delta_1,\cdots,\Delta_n\in\mathcal{F}$,
there exists an inner measuring process $\mathbb{M}=(\mathcal{K},\sigma,E,U)$ for $(\mathcal{M},S)$
in the sense of Definition \ref{MP2} such that
\begin{equation}
|\langle\rho_j,\mathcal{I}(M_j,\Delta_j)\rangle
-\langle\rho_j, \mathcal{I}_\mathbb{M}(M_j,\Delta_j)\rangle|<\varepsilon
\end{equation}
for all $j=1,\cdots,n$.
\end{theorem}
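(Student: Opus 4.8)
The plan is to imitate the reduction in the proof of Theorem~\ref{GVN1}, replacing the appeal to discreteness by an explicit \emph{inner} dilation whose unitary is forced to stay inside $\mathcal{M}\overline{\otimes}{\bf B}(\mathcal{K})$. First I would fix the data $\varepsilon,n,\rho_1,\dots,\rho_n,M_1,\dots,M_n,\Delta_1,\dots,\Delta_n$, let $\mathcal{F}'$ be the finite $\sigma$-field generated by $\Delta_1,\dots,\Delta_n,S$, and choose the maximal partition $\{\Gamma_1,\dots,\Gamma_m\}\subset\mathcal{F}'$ of $S$, so that each $\Delta_j$ is a union of cells and $\mathcal{I}(M_j,\Delta_j)=\sum_{\Gamma_i\subset\Delta_j}\mathcal{I}(M_j,\Gamma_i)$. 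Since $\mathcal{I}$ is approximately inner, each $\mathcal{I}(\cdot,\Gamma_i)$ is the pointwise ultraweak limit of finitely inner maps $\Psi_{i,\alpha}(\cdot)=\sum_k V_{i,k,\alpha}^\ast\,(\cdot)\,V_{i,k,\alpha}$ with $V_{i,k,\alpha}\in\mathcal{M}$ and $\Psi_{i,\alpha}(1)\le\mathcal{I}(1,\Gamma_i)$. As there are finitely many pairs $(i,j)$ and the net is directed, I would pick one index $\alpha$ for which every $|\langle\rho_j,\Psi_{i,\alpha}(M_j)-\mathcal{I}(M_j,\Gamma_i)\rangle|$ is as small as desired, and abbreviate $\Psi_i:=\Psi_{i,\alpha}$.

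Next I would restore exact normalization. Summing $\Psi_i(1)\le\mathcal{I}(1,\Gamma_i)$ and using $\sum_i\mathcal{I}(1,\Gamma_i)=\mathcal{I}(1,S)=1$ produces a defect $D:=1-\sum_i\Psi_i(1)\in\mathcal{M}_+$; since $\sum_i\Psi_{i,\alpha}(1)\to 1$ ultraweakly, I may simultaneously demand that each $\langle\rho_j,D\rangle$ be small. Adjoining $D^{1/2}\in\mathcal{M}$ as one further Kraus operator attached to $\Gamma_1$ yields a discrete finitely inner CP instrument $\mathcal{I}'(M,\Delta)=\sum_{i:\,s_i\in\Delta}\Phi_i(M)$, where $s_i\in\Gamma_i$ are fixed points and the full family of Kraus operators $\{W_\ell\}_{\ell\in L}\subset\mathcal{M}$ (each carrying a label $i(\ell)$ recording its cell) now satisfies $\sum_{\ell\in L}W_\ell^\ast W_\ell=1$. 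The only new error is $\langle\rho_j,D^{1/2}M_jD^{1/2}\rangle$, and the Cauchy--Schwarz inequality for the state $\rho_j$ bounds it by $\|M_j\|\,\langle\rho_j,D\rangle$; hence $|\langle\rho_j,\mathcal{I}(M_j,\Delta_j)-\mathcal{I}'(M_j,\Delta_j)\rangle|<\varepsilon$ for all $j$.

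It then remains to realize the normalized instrument $\mathcal{I}'$ \emph{exactly} by an inner measuring process. I would take $\mathcal{K}$ infinite-dimensional with an orthonormal system $\{e_\ell\}_{\ell\in L}$ together with a distinguished unit vector $e_0$ orthogonal to them, and form the partial isometry $W=\sum_{\ell\in L}W_\ell\otimes|e_\ell\rangle\langle e_0|\in\mathcal{M}\overline{\otimes}{\bf B}(\mathcal{K})$, which satisfies $W^\ast W=1\otimes|e_0\rangle\langle e_0|$ and $WW^\ast\le 1\otimes P_L$ with $P_L=\sum_{\ell\in L}|e_\ell\rangle\langle e_\ell|$ of finite rank. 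With $\sigma=\omega_{e_0}$ and $E=E_0$ the PVM on $\mathcal{K}$ sending $e_\ell\mapsto s_{i(\ell)}$ (all remaining basis vectors assigned to a fixed point), a direct slice-map computation gives, for any unitary $U$ extending $W$ inside $\mathcal{M}\overline{\otimes}{\bf B}(\mathcal{K})$, the identity $\mathcal{I}_\mathbb{M}(M,\Delta)=\sum_{\ell:\,s_{i(\ell)}\in\Delta}W_\ell^\ast MW_\ell=\mathcal{I}'(M,\Delta)$. Because $U\in\mathcal{M}\overline{\otimes}{\bf B}(\mathcal{K})$ and $M\otimes E_0(\Delta)\in\mathcal{M}\overline{\otimes}{\bf B}(\mathcal{K})$, the slice map $\mathrm{id}\otimes\sigma$ returns values in $\mathcal{M}$, so all the $W_T^\mathbb{M}$ lie in $\mathcal{M}$ and $\mathbb{M}=(\mathcal{K},\sigma,E,U)$ is indeed an inner measuring process in the sense of Definition~\ref{MP2}. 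Combined with the previous paragraph this gives the required estimate.

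The main obstacle is the last step: extending $W$ to a \emph{unitary} that still lies in $\mathcal{M}\overline{\otimes}{\bf B}(\mathcal{K})$, which is precisely what makes the measuring process inner. Here I would use the comparison theory of projections in the properly infinite algebra $\mathcal{M}\overline{\otimes}{\bf B}(\mathcal{K})$ with $\dim\mathcal{K}=\infty$: from $WW^\ast\le 1\otimes P_L$ one gets $1-WW^\ast\ge 1\otimes(1-P_L)\sim 1$, so $1\precsim 1-WW^\ast\precsim 1$ and hence $1-WW^\ast\sim 1$; likewise $1-1\otimes|e_0\rangle\langle e_0|=1\otimes(1-|e_0\rangle\langle e_0|)\sim 1$. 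A partial isometry $W'\in\mathcal{M}\overline{\otimes}{\bf B}(\mathcal{K})$ implementing $1-1\otimes|e_0\rangle\langle e_0|\sim 1-WW^\ast$ then makes $U:=W+W'$ the desired unitary. This is the inner analogue of the dilation used for injective $\mathcal{M}$ in \cite{OO16}, and the verification that $\{W_T^\mathbb{M}\}$ genuinely satisfies the axioms of a system of measurement correlations is routine given Theorem~\ref{Stinespring}.
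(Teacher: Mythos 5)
Your proposal is correct, and its overall strategy coincides with the paper's: reduce to the finite $\sigma$-field generated by $\Delta_1,\dots,\Delta_n,S$, take the maximal partition $\{\Gamma_i\}$, approximate each $\mathcal{I}(\cdot,\Gamma_i)$ by finitely inner maps $\Psi_i$ with $\Psi_i(1)\leq\mathcal{I}(1,\Gamma_i)$, absorb the normalization defect $D=1-\sum_i\Psi_i(1)$ via the extra Kraus operator $D^{1/2}$ (the paper writes $L_{\bm\theta}=\sqrt{1-\sum_i\Psi_{i,\theta_i}(1)}$ and attaches it to $s_m$ rather than $\Gamma_1$, which is immaterial), bound the defect term by $\Vert M_j\Vert\,\rho_j(D)$ exactly as the paper does, and then realize the resulting discrete finitely inner instrument exactly by an inner measuring process built from the partial isometry $W=\sum_\ell W_\ell\otimes|e_\ell\rangle\langle e_0|$ with $W^\ast W=1\otimes|e_0\rangle\langle e_0|$. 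The one genuine divergence is the final unitary-extension step. The paper avoids comparison theory entirely: it keeps the ancilla finite-dimensional, $\mathcal{K}=\mathbb{C}^{\Vert\bm{\bar\theta}\Vert+2}\otimes\mathbb{C}^2$, and extends $V$ to a unitary by the explicit Halmos-type $2\times 2$ doubling $U=V\otimes G_{11}+(1-VV^\ast)\otimes G_{12}+(1-V^\ast V)\otimes G_{12}^\ast-V^\ast\otimes G_{22}$, which manifestly lies in $\mathcal{M}\overline{\otimes}M_{\Vert\bm{\bar\theta}\Vert+2}(\mathbb{C})\overline{\otimes}M_2(\mathbb{C})$. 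You instead take $\dim\mathcal{K}=\infty$ and invoke Murray--von Neumann comparison in the properly infinite algebra $\mathcal{M}\overline{\otimes}{\bf B}(\mathcal{K})$: from $WW^\ast\leq 1\otimes P_L$ and the Schr\"{o}der--Bernstein theorem for projections you get $1-WW^\ast\sim 1\sim 1-W^\ast W$, whence $U=W+W'$ is unitary; this is valid (the cross terms $W^\ast W'$ and $W'^\ast W$ vanish since $W'=(1-WW^\ast)W'$), and $U$ agrees with $W$ on $\mathcal{H}\otimes\mathbb{C}e_0$, so the slice-map computation goes through. Your method costs an infinite-dimensional ancilla and a nontrivial appeal to comparison theory where the paper's doubling trick is explicit, elementary, and finite-dimensional; on the other hand, your argument is the same device the paper itself uses in the Remark following the proof (there via an AFD type III factor $\mathcal{N}$ and a partial isometry with $W^\ast W=1-V^\ast V\otimes Y^\ast Y$), so nothing is lost in generality. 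Two cosmetic points: the nets $\{\Psi_{i,\theta_i}\}$ for distinct $i$ have distinct index sets $\Theta_i$, so one should choose $\bar\theta_i\in\Theta_i$ separately (or pass to the product directed set) rather than ``one index $\alpha$''; and your phrase ``any unitary $U$ extending $W$'' should be read as any unitary agreeing with $W$ on its initial space, which your explicit $U=W+W'$ satisfies.
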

\begin{corollary}
Let $\mathcal{M}$ be a von Neumann algebra on a Hilbert space $\mathcal{H}$
and $(S,\mathcal{F})$ a measurable space. We have
\begin{equation}
\mathrm{CPInst}_{\mathrm{AI}}(\mathcal{M},S)\subset\mathrm{CPInst}_{\mathrm{AR}}(\mathcal{M},S).
\end{equation}
\end{corollary}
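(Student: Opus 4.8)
The plan is to reduce the required approximation to a single \emph{finitely inner, discrete} instrument and then to realize that instrument \emph{exactly} by an explicitly constructed inner measuring process. First I would discretize the outcome space exactly as in the proof of Theorem \ref{GVN1}: let $\mathcal{F}^\prime$ be the finite $\sigma$-subfield generated by $\Delta_1,\dots,\Delta_n,S$, let $\{\Gamma_i\}_{i=1}^m$ be its maximal partition, and fix $s_i\in\Gamma_i$. Since $\Delta_j=\cup_{i:\,\Gamma_i\subseteq\Delta_j}\Gamma_i$, any discrete instrument supported on $\{s_1,\dots,s_m\}$ whose value at $s_i$ is close to $\mathcal{I}(\cdot,\Gamma_i)$ on the data is automatically close to $\mathcal{I}$ on the data. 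Because $\mathcal{I}$ is approximately inner, each $\mathcal{I}(\cdot,\Gamma_i)$ is a pointwise ultraweak limit of finitely inner maps $\Psi$ with $\Psi(1)\le\mathcal{I}(1,\Gamma_i)$; applying this to the finite set of functionals $\rho_1,\dots,\rho_n$ and operators $M_1,\dots,M_n,1$, I would select, for each $i$, one finitely inner map $\Psi_i(M)=\sum_{j=1}^{m_i}V_{ij}^\ast MV_{ij}$ with $V_{ij}\in\mathcal{M}$ such that both $|\langle\rho_k,\Psi_i(M_k)\rangle-\langle\rho_k,\mathcal{I}(M_k,\Gamma_i)\rangle|$ and $|\langle\rho_k,\Psi_i(1)\rangle-\langle\rho_k,\mathcal{I}(1,\Gamma_i)\rangle|$ are below a small $\varepsilon^\prime$ for all $k$.

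Because the $\Psi_i$ need not be normalized, I would set $D=\sum_{i,j}V_{ij}^\ast V_{ij}$ and note $0\le 1-D$ (from $\Psi_i(1)\le\mathcal{I}(1,\Gamma_i)$ and $\sum_i\mathcal{I}(1,\Gamma_i)=1$). Writing $C=(1-D)^{1/2}\in\mathcal{M}$ and attaching the extra Kraus operator $C$ to the outcome $s_1$ yields a genuine normalized finitely inner discrete CP instrument $\mathcal{I}^\prime$. Its deficit contribution is controlled by Cauchy--Schwarz, $|\langle\rho_k,CM_kC\rangle|\le\|M_k\|\,\langle\rho_k,1-D\rangle$, and $\langle\rho_k,1-D\rangle=\sum_i(\langle\rho_k,\mathcal{I}(1,\Gamma_i)\rangle-\langle\rho_k,\Psi_i(1)\rangle)<m\varepsilon^\prime$, so choosing $\varepsilon^\prime$ small relative to $\varepsilon$, $m$ and $\max_k\|M_k\|$ gives $|\langle\rho_k,\mathcal{I}(M_k,\Delta_k)\rangle-\langle\rho_k,\mathcal{I}^\prime(M_k,\Delta_k)\rangle|<\varepsilon$ for all $k$.

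It then remains to realize $\mathcal{I}^\prime$ by an inner measuring process. Relabel the Kraus operators as a finite family $\{V_\ell\}$ with labels $\lambda_\ell\in\{s_1,\dots,s_m\}$ and $\sum_\ell V_\ell^\ast V_\ell=1$. I would take an infinite-dimensional $\mathcal{K}$ with orthonormal basis $\{e_0,e_1,e_2,\dots\}$, the vector state $\sigma=\langle e_0|\,\cdot\,e_0\rangle$, and the PVM $E(\Delta)$ equal to the projection onto $\overline{\mathrm{span}}\{e_\ell:\lambda_\ell\in\Delta\}$ (assigning the unused basis vectors to $s_1$ so that $E(S)=1$). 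The partial isometry $W=\sum_\ell V_\ell\otimes|e_\ell\rangle\langle e_0|$ lies in $\mathcal{M}\overline{\otimes}{\bf B}(\mathcal{K})$, satisfies $W^\ast W=1\otimes|e_0\rangle\langle e_0|$, and a direct slice-map computation gives $W^\ast(M\otimes E(\Delta))W=\sum_{\ell:\,\lambda_\ell\in\Delta}V_\ell^\ast MV_\ell=\mathcal{I}^\prime(M,\Delta)$, which is exactly $\mathcal{I}_\mathbb{M}(M,\Delta)$ once $W$ is extended to a unitary $U$ with $U(\xi\otimes e_0)=W\xi$.

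The crux is to carry out this unitary extension \emph{inside} $\mathcal{M}\overline{\otimes}{\bf B}(\mathcal{K})$, for it is precisely $U\in\mathcal{M}\overline{\otimes}{\bf B}(\mathcal{K})$ that makes $\mathbb{M}$ inner (and then the defining condition of Definition \ref{MP2} is automatic, since the normal slice map $\mathrm{id}\otimes\sigma$ carries $\mathcal{M}\overline{\otimes}{\bf B}(\mathcal{K})$ into $\mathcal{M}$). This is where I expect the main obstacle: over a general von Neumann algebra a finite isometric column need not complete to a unitary matrix. I would circumvent this by exploiting that $\mathcal{K}$ is infinite-dimensional. Each complementary projection, $1-W^\ast W=1\otimes(1-|e_0\rangle\langle e_0|)$ and $1-WW^\ast$, dominates a projection of the form $1\otimes P$ with $P$ of infinite rank, hence dominates a copy of the identity $1$ of $\mathcal{M}\overline{\otimes}{\bf B}(\mathcal{K})$; since each is also dominated by $1$, the Schr\"{o}der--Bernstein theorem for von Neumann algebras yields $1-W^\ast W\sim 1\sim 1-WW^\ast$. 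Adjoining to $W$ a partial isometry in the algebra that implements this equivalence produces the desired unitary $U\in\mathcal{M}\overline{\otimes}{\bf B}(\mathcal{K})$, and hence the inner measuring process $\mathbb{M}=(\mathcal{K},\sigma,E,U)$ realizing $\mathcal{I}^\prime$, which completes the proof.
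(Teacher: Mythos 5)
Your proposal is correct, and its first two stages coincide with the paper's proof of Theorem \ref{INJINNApp} (from which the corollary follows): the same discretization through the maximal partition $\{\Gamma_i\}$ of the finite $\sigma$-subfield generated by the data, the same selection of finitely inner maps $\Psi_i$ with $\Psi_i(1)\leq\mathcal{I}(1,\Gamma_i)$ from the defining net of an approximately inner instrument, the same square-root deficit Kraus operator ($C=(1-D)^{1/2}$, the paper's $L_{\bm\theta}$, attached to one of the atoms), and the same Cauchy--Schwarz bound $|\rho(CMC)|\leq\Vert M\Vert\rho(1-D)$ controlling the normalization error. Where you genuinely diverge is the final dilation step, and your alternative is sound: the paper keeps the ancilla finite-dimensional, $\mathcal{K}=\mathbb{C}^{\Vert{\bm\theta}\Vert+2}\otimes\mathbb{C}^2$, and completes the isometric column $V$ to a unitary by the explicit Halmos-type $2\times 2$ block matrix $U=V\otimes G_{11}+(1-VV^\ast)\otimes G_{12}+(1-V^\ast V)\otimes G_{12}^\ast-V^\ast\otimes G_{22}$, which is unitary precisely because $V$ is a partial isometry; you instead take $\mathcal{K}$ infinite-dimensional and invoke comparison theory, observing that $1-W^\ast W=1\otimes(1-|e_0\rangle\langle e_0|)$ and $1-WW^\ast\geq 1\otimes(1-Q_F)$ (with $Q_F$ the finite-rank projection onto the used basis vectors) each dominate a projection Murray--von Neumann equivalent to $1$ in $\mathcal{M}\overline{\otimes}{\bf B}(\mathcal{K})$, so Schr\"{o}der--Bernstein gives $1-W^\ast W\sim 1-WW^\ast$ inside the algebra and the implementing partial isometry completes $W$ to a unitary $U\in\mathcal{M}\overline{\otimes}{\bf B}(\mathcal{K})$. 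Your check that $W'V_{e_0}=0$ (so the slice against the vector state at $e_0$ sees only $W$) and your remark that innerness of $U$ makes the condition of Definition \ref{MP2} automatic via $\mathrm{id}\otimes\sigma:\mathcal{M}\overline{\otimes}{\bf B}(\mathcal{K})\rightarrow\mathcal{M}$ are both accurate. The trade-off: the paper's construction is fully explicit and yields a finite-dimensional ancilla, which is aesthetically and computationally cleaner; your route costs an infinite-dimensional $\mathcal{K}$ and the comparison theorem for projections, but it is a reusable general principle (any isometric Kraus column over any von Neumann algebra completes to a unitary once the ancilla has enough room), so it would adapt unchanged to countably infinite Kraus families, where the finite block trick needs reformulation.
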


Only for injective factors the following holds as a corollary of the above theorem.
\begin{theorem} \label{INJFac}
Let $\mathcal{M}$ be an injective factor on a Hilbert space $\mathcal{H}$
and $(S,\mathcal{F})$ a measurable space.
For every CP instrument $\mathcal{I}$ for $(\mathcal{M},S)$, $\varepsilon>0$,
$n\in\mathbb{N}$, $\rho_1,\cdots,\rho_n\in\mathcal{S}_n(\mathcal{M})$,
$M_1,\cdots,M_n\in\mathcal{M}$ and $\Delta_1,\cdots,\Delta_n\in\mathcal{F}$,
there exists an inner measuring process $\mathbb{M}=(\mathcal{K},\sigma,E,U)$ for $(\mathcal{M},S)$
in the sense of Definition \ref{MP2} such that
\begin{equation}
|\langle\rho_j,\mathcal{I}(M_j,\Delta_j)\rangle
-\langle\rho_j, \mathcal{I}_\mathbb{M}(M_j,\Delta_j)\rangle|<\varepsilon
\end{equation}
for all $j=1,\cdots,n$.
\end{theorem}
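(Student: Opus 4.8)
The plan is to obtain Theorem~\ref{INJFac} as a special case of Theorem~\ref{INJINNApp}: it suffices to prove that when $\mathcal{M}$ is an injective factor, \emph{every} CP instrument $\mathcal{I}$ for $(\mathcal{M},S)$ is approximately inner. Indeed, for each $\Delta\in\mathcal{F}$ the map $M\mapsto\mathcal{I}(M,\Delta)$ is a normal CP map of $\mathcal{M}$ with $\mathcal{I}(1,\Delta)\le\mathcal{I}(1,S)=1$ (by positivity and $\sigma$-additivity of $\mathcal{I}(1,\cdot)$), so once each of these maps is shown to be approximately inner we have $\mathcal{I}\in\mathrm{CPInst}_{\mathrm{AI}}(\mathcal{M},S)$, and feeding the finite data $\varepsilon,n,\rho_1,\dots,\rho_n,M_1,\dots,M_n,\Delta_1,\dots,\Delta_n$ into Theorem~\ref{INJINNApp} yields the required inner measuring process $\mathbb{M}=(\mathcal{K},\sigma,E,U)$.

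Thus the whole content is the one-variable statement that a normal CP map $\Psi$ of an injective factor $\mathcal{M}$ is approximately inner. First I would use that injectivity is equivalent to semidiscreteness (\cite{Connes,T02}): there exist nets of normal unital CP maps $\phi_\alpha\colon\mathcal{M}\to M_{k(\alpha)}(\mathbb{C})$ and $\psi_\alpha\colon M_{k(\alpha)}(\mathbb{C})\to\mathcal{M}$ with $\psi_\alpha\circ\phi_\alpha\to\mathrm{id}_\mathcal{M}$ pointwise ultraweakly. Then $\Psi\circ\psi_\alpha\circ\phi_\alpha\to\Psi$ pointwise ultraweakly, and each approximant is a normal CP map of $\mathcal{M}$ factoring through the finite-dimensional algebra $M_{k(\alpha)}(\mathbb{C})$ as the composition of $\phi_\alpha$ with the normal CP map $\Psi\circ\psi_\alpha\colon M_{k(\alpha)}(\mathbb{C})\to\mathcal{M}$.

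Next I would show that any normal CP map of $\mathcal{M}$ factoring through a matrix algebra is approximately inner, and this is where the factor hypothesis is used. By Stinespring's theorem a normal CP map out of $M_{k}(\mathbb{C})$ is finitely inner through the standard amplification $X\mapsto X\otimes 1$, so the composite above is built from the matrix entries of $\phi_\alpha$ sandwiched between fixed operators determined by $\Psi\circ\psi_\alpha$. Because $\mathcal{M}$ is a factor the relevant matrix units can be realized inside $\mathcal{M}$: an injective factor is hyperfinite, and, apart from the type $\mathrm{I}_n$ case in which every normal CP map is already finitely inner (Choi, Kraus), it absorbs matrix algebras, $\mathcal{M}\cong\mathcal{M}\,\overline{\otimes}\,M_{k}(\mathbb{C})$. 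Transporting the factorization along such an identification rewrites each approximant as a genuinely inner map $M\mapsto\sum_j V_j^\ast MV_j$ with $V_j\in\mathcal{M}$, normalized so that $\sum_j V_j^\ast V_j\le\Psi(1)$. This is precisely the substance of the approximate-factorization results of \cite{Mingo89,ADH90,AD95}, which I would invoke rather than reprove.

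The main obstacle is exactly this last conversion: semidiscreteness gives, for an arbitrary von Neumann algebra, an approximate factorization \emph{through matrix algebras}, but turning it into an approximate factorization \emph{through} $\mathrm{id}_\mathcal{M}$ by inner maps with coefficients in $\mathcal{M}$ is obstructed by the centre of $\mathcal{M}$ — in general one only obtains innerness relative to $\mathcal{Z}(\mathcal{M})$-valued data. The factor hypothesis $\mathcal{Z}(\mathcal{M})=\mathbb{C}1$ removes this obstruction and is the reason the inner (rather than merely approximately realizing) conclusion holds only for factors; everything downstream — the assembly of $\mathbb{M}$ and the $\varepsilon$-estimate — is already supplied by Theorem~\ref{INJINNApp}.
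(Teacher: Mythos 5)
Your proposal is correct and is essentially the paper's own argument: the paper proves Theorem \ref{INJFac} in two lines by noting that, since $\mathcal{M}$ is an injective factor, each $\mathcal{I}(\cdot,\Delta)$ is approximately inner by Proposition \ref{injinn} (Anantharaman-Delaroche--Havet), so that the proof of Theorem \ref{INJINNApp} applies verbatim. Your additional sketch of why CP maps on injective factors are approximately inner (semidiscreteness plus the factor hypothesis to convert matrix-algebra factorizations into inner maps) is a reasonable outline of the cited result, but since you ultimately invoke \cite{Mingo89,ADH90,AD95} rather than reprove it, your route coincides with the paper's.
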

\begin{corollary}
Let $\mathcal{M}$ be an injective factor on a Hilbert space $\mathcal{H}$
and $(S,\mathcal{F})$ a measurable space. Then we have
\begin{equation}
\mathrm{CPInst}_{\mathrm{AI}}(\mathcal{M},S)=\mathrm{CPInst}_{\mathrm{AR}}(\mathcal{M},S)
=\mathrm{CPInst}(\mathcal{M},S).
\end{equation}
\end{corollary}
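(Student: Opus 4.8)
The plan is to close a short cycle of inclusions among the three sets, two of which are essentially free and one of which carries all the content. Directly from the definitions one has $\mathrm{CPInst}_{\mathrm{AI}}(\mathcal{M},S)\subseteq\mathrm{CPInst}(\mathcal{M},S)$, since every approximately inner positive map is completely positive, and $\mathrm{CPInst}_{\mathrm{AR}}(\mathcal{M},S)\subseteq\mathrm{CPInst}(\mathcal{M},S)$, since by definition only CP instruments are declared approximately realizable. The general inclusion $\mathrm{CPInst}_{\mathrm{AI}}(\mathcal{M},S)\subseteq\mathrm{CPInst}_{\mathrm{AR}}(\mathcal{M},S)$ is the corollary to Theorem \ref{INJINNApp}, which holds for arbitrary von Neumann algebras. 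Hence it remains only to supply the two reverse inclusions $\mathrm{CPInst}(\mathcal{M},S)\subseteq\mathrm{CPInst}_{\mathrm{AR}}(\mathcal{M},S)$ and $\mathrm{CPInst}(\mathcal{M},S)\subseteq\mathrm{CPInst}_{\mathrm{AI}}(\mathcal{M},S)$, after which the chain $\mathrm{CPInst}_{\mathrm{AI}}\subseteq\mathrm{CPInst}_{\mathrm{AR}}\subseteq\mathrm{CPInst}$ forces all three to coincide.

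The first reverse inclusion is exactly the statement of Theorem \ref{INJFac}: for an injective factor $\mathcal{M}$, every CP instrument $\mathcal{I}$ is approximated, in the sense defining $\mathrm{CPInst}_{\mathrm{AR}}$, by (in fact inner) measuring processes, so $\mathcal{I}\in\mathrm{CPInst}_{\mathrm{AR}}(\mathcal{M},S)$. Combined with the trivial inclusion this already yields $\mathrm{CPInst}_{\mathrm{AR}}(\mathcal{M},S)=\mathrm{CPInst}(\mathcal{M},S)$, so no further work is needed for the second equality of the corollary.

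For the remaining inclusion $\mathrm{CPInst}(\mathcal{M},S)\subseteq\mathrm{CPInst}_{\mathrm{AI}}(\mathcal{M},S)$ I would invoke the fact that drives the passage from Theorem \ref{INJINNApp} to Theorem \ref{INJFac}: on an injective factor every normal unital CP map of $\mathcal{M}$ is approximately inner, the triviality of the center eliminating the obstruction present for non-factor injective algebras. This is the approximate-factorization theory of \cite{ADH90}. I would apply it operation by operation: for $\Delta\in\mathcal{F}$ the map $\mathcal{I}(\cdot,\Delta)$ is CP and, since $\mathcal{I}(\cdot,S)-\mathcal{I}(\cdot,\Delta)=\mathcal{I}(\cdot,S\setminus\Delta)$ is CP, it is dominated by the unital map $\mathcal{I}(\cdot,S)$ with $\mathcal{I}(1,S)=1$. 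The normalization $\Psi_\alpha(1)\le\mathcal{I}(1,\Delta)$ built into the definition of approximate innerness is then exactly what this domination supplies, so each $\mathcal{I}(\cdot,\Delta)$ is approximately inner and $\mathcal{I}\in\mathrm{CPInst}_{\mathrm{AI}}(\mathcal{M},S)$.

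The main obstacle is precisely this last step, the approximate innerness of every CP instrument on an injective factor: it is where the factor hypothesis is indispensable and where the deep structure theory of injective (equivalently semidiscrete, amenable) von Neumann algebras enters. The delicate point is to realize the approximating finitely inner maps coherently in $\Delta\in\mathcal{F}$ while respecting the uniform domination by $\mathcal{I}(\cdot,S)$, rather than only for a single fixed operation; once this is granted, the three inclusions above close the cycle and the stated equalities follow.
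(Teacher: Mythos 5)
Your overall route is exactly the paper's intended derivation: the two trivial inclusions, the corollary of Theorem \ref{INJINNApp} giving $\mathrm{CPInst}_{\mathrm{AI}}(\mathcal{M},S)\subseteq\mathrm{CPInst}_{\mathrm{AR}}(\mathcal{M},S)$, Theorem \ref{INJFac} giving $\mathrm{CPInst}(\mathcal{M},S)\subseteq\mathrm{CPInst}_{\mathrm{AR}}(\mathcal{M},S)$, and the Anantharaman-Delaroche--Havet approximate-factorization result for the remaining inclusion $\mathrm{CPInst}(\mathcal{M},S)\subseteq\mathrm{CPInst}_{\mathrm{AI}}(\mathcal{M},S)$; indeed the paper itself proves Theorem \ref{INJFac} by observing that Proposition \ref{injinn} makes each $\mathcal{I}(\cdot,\Delta)$ approximately inner and then rerunning the proof of Theorem \ref{INJINNApp}.

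Two local points need repair, however. First, your handling of the last inclusion misapplies the key fact: you quote it for \emph{unital} CP maps and then try to recover the normalization $\Psi_\alpha(1)\leq\mathcal{I}(1,\Delta)$ from the domination $\mathcal{I}(\cdot,\Delta)\leq\mathcal{I}(\cdot,S)$. As written this step fails --- approximating the unital map $\mathcal{I}(\cdot,S)$ by finitely inner maps with $\Psi_\alpha(1)\leq 1$ produces no approximants of $\mathcal{I}(\cdot,\Delta)$ at all, let alone ones obeying $\Psi_\alpha(1)\leq\mathcal{I}(1,\Delta)$. The fix is immediate: Proposition \ref{injinn} is stated for \emph{every} CP map $\Psi$ of an injective factor, with the bound $\Psi_\theta(1)\leq\Psi(1)$ built in, so you simply apply it directly to each (generally non-unital) operation $\mathcal{I}(\cdot,\Delta)$, and the domination detour is superfluous. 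Second, the ``delicate point'' you flag at the end --- realizing the finitely inner approximants coherently across $\Delta\in\mathcal{F}$ --- is not required for membership in $\mathrm{CPInst}_{\mathrm{AI}}(\mathcal{M},S)$: by the paper's definition an instrument is approximately inner iff each $\mathcal{I}(\cdot,\Delta)$ is approximately inner separately. Coherence across outcomes genuinely matters only inside the proof of Theorem \ref{INJINNApp}, where finitely inner pieces over a maximal partition $\{\Gamma_i\}$ are assembled into a single instrument via the residual term $L_{\bm\theta}ML_{\bm\theta}$ --- and you are entitled to cite that theorem as proved. With these corrections your cycle of inclusions closes and coincides with the paper's argument.
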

Theorem \ref{INJFac} is a stronger result than \cite[Theorem 4.4]{OO16} for factors,
so that Question \ref{Q4} is affirmatively resolved for injective factors.
We use the following proposition for the proof of Theorem \ref{INJFac}.
\begin{proposition}[Anantharaman-Delaroche and Havet \text{\cite[Lemma 2.2, Remarks 5.4]{ADH90}}]\label{injinn}
Let $\mathcal{M}$ be an injective factor on on a Hilbert space $\mathcal{H}$.
Every CP map $\Psi$ of $\mathcal{M}$ is approximately inner, i.e.,
it is the pointwise ultraweak limit of
a net of CP maps $\{\Psi_\theta\}_{\theta\in\Theta}$
of the form $\Psi_\theta(M)=\sum_{j=1}^{n_\theta}V_{\theta,j}^\ast MV_{\theta,j}$, $M\in\mathcal{M}$, with
$n_\theta\in\mathbb{N}$, $V_{\theta,1},\cdots, V_{\theta,n_\theta}\in\mathcal{M}$ such that
$\Psi_\theta(1)=\sum_{j=1}^{n_\theta}V_{\theta,j}^\ast V_{\theta,j}\leq \Psi(1)$ for all $\theta\in\Theta$.
\end{proposition}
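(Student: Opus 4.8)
The plan is to deduce the statement from the characterization of injective von Neumann algebras as semidiscrete ones, thereby reducing it to a statement about maps that factor through full matrix algebras. First I would record the standard fact that an injective $\mathcal{M}$ is semidiscrete: $\mathrm{id}_\mathcal{M}$ is the point-ultraweak limit of a net of normal unital CP maps $\psi_\theta\circ\phi_\theta$, where $\phi_\theta:\mathcal{M}\rightarrow M_{n_\theta}(\mathbb{C})$ and $\psi_\theta:M_{n_\theta}(\mathbb{C})\rightarrow\mathcal{M}$ are normal and unital CP. Since $\Psi$ is normal it is ultraweakly continuous on bounded sets, and the $\psi_\theta\circ\phi_\theta$ are contractive, so $\Psi\circ\psi_\theta\circ\phi_\theta$ converges to $\Psi$ pointwise ultraweakly. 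Thus it suffices to approximate, by finitely inner maps, each $\Theta_\theta:=\Psi\circ\psi_\theta\circ\phi_\theta$, a normal CP map factoring as $\beta\circ\gamma$ with $\gamma=\phi_\theta:\mathcal{M}\rightarrow M_n(\mathbb{C})$ normal UCP and $\beta=\Psi\circ\psi_\theta:M_n(\mathbb{C})\rightarrow\mathcal{M}$ CP. The normalization $\Psi_\theta(1)\le\Psi(1)$ I would arrange by first treating the unital case (where $\Theta_\theta$ is unital and its finitely inner approximants can be taken subunital) and then reducing the general case to it via the sandwich $M\mapsto\Psi(1)^{1/2}\,\widetilde{\Psi}(M)\,\Psi(1)^{1/2}$ on the support of $\Psi(1)$; since $\Psi(1)\in\mathcal{M}$, this keeps all Kraus operators inside $\mathcal{M}$ and forces the desired inequality.

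Next I would decompose $\Theta_\theta$ explicitly. Using the matrix units $e_{ij}$ of $M_n(\mathbb{C})$ we have $\Theta_\theta(M)=\sum_{i,j}\gamma(M)_{ij}\,\beta(e_{ij})$. The matrix $[\beta(e_{ij})]_{i,j}$ is a positive element of $M_n(\mathcal{M})$, so its square root lies in $M_n(\mathcal{M})$ and furnishes Kraus operators $C_{ki}\in\mathcal{M}$ with $\beta(e_{ij})=\sum_k C_{ki}^\ast C_{kj}$. This is the easy half: the operators coming from the map out of the matrix algebra sit in $\mathcal{M}$ automatically. For $\gamma$ I would use its Stinespring dilation $\gamma(M)_{ij}=\langle\xi_i,\pi(M)\xi_j\rangle$, with $\pi$ a normal representation of $\mathcal{M}$ and $\xi_i$ vectors, so that $\Theta_\theta(M)=\sum_{i,j,k}C_{ki}^\ast\,\langle\xi_i,\pi(M)\xi_j\rangle\,C_{kj}$.

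The hard part will be eliminating the representation $\pi$, i.e. replacing the scalars $\langle\xi_i,\pi(M)\xi_j\rangle$ by expressions in which $M$ appears directly and which are implemented by elements of $\mathcal{M}$. Here the factor hypothesis is essential: any nonzero normal representation of a factor is faithful, hence quasi-equivalent to the identity representation, so after a suitable amplification $\pi\otimes 1$ is unitarily equivalent to $\mathrm{id}_\mathcal{M}\otimes 1$ and one may rewrite $\langle\xi_i,\pi(M)\xi_j\rangle=\langle\zeta_i,(M\otimes 1)\zeta_j\rangle$ for vectors $\zeta_i\in\mathcal{H}\otimes\ell^2$. The remaining obstacle is genuinely the content of the theorem: the resulting normal functionals must be approximated, on the finitely many relevant elements, by functionals of the form $M\mapsto\sum_l\langle\Omega,w_l^\ast M w_l\,\Omega\rangle$ with $w_l\in\mathcal{M}$. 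I would obtain this from injectivity, for instance via the norm-one projection $\mathcal{E}:{\bf B}(\mathcal{H})\rightarrow\mathcal{M}$ afforded by injectivity together with Dixmier-type averaging in the factor $\mathcal{M}$, replacing the external rank-one operators $|\zeta_j\rangle\langle\zeta_i|$ by inner averages at the cost of one further point-ultraweak approximation. Feeding these $\mathcal{M}$-valued building blocks back into $\Theta_\theta(M)=\sum_{i,j,k}C_{ki}^\ast\langle\xi_i,\pi(M)\xi_j\rangle C_{kj}$ and absorbing the $C_{ki}\in\mathcal{M}$ yields a finitely inner map $\sum_l V_l^\ast M V_l$ with $V_l\in\mathcal{M}$ approximating $\Theta_\theta$.

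Finally I would combine the two approximations. Both the semidiscrete approximation $\Theta_\theta\to\Psi$ and the internalization $\sum_l V_l^\ast(\cdot)V_l\to\Theta_\theta$ are point-ultraweak, so indexing by the product of the two directed sets produces a net of finitely inner maps converging to $\Psi$ pointwise ultraweakly, while tracking the normalization as above keeps $\Psi_\theta(1)\le\Psi(1)$ throughout; this is exactly approximate innerness. I expect the decisive difficulty to be the internalization step, since it is precisely there that both injectivity (to approximate arbitrary normal functionals by inner ones) and factoriality (to trivialize the representation via quasi-equivalence) are used in an essential and intertwined way.
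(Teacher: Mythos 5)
The first thing to note is that the paper contains no proof of this proposition: it is imported verbatim from Anantharaman-Delaroche and Havet \cite{ADH90}, where the argument runs through the theory of correspondences --- one shows that for an injective factor the correspondence attached to $\Psi$ is weakly contained in multiples of the identity correspondence (Remarks 5.4 of \cite{ADH90}), and Lemma 2.2 of \cite{ADH90} converts weak containment into approximate factorization through $\mathrm{id}_\mathcal{M}$ with exactly the normalization $\Psi_\theta(1)\leq\Psi(1)$. Your route (semidiscreteness plus explicit matrix manipulation) is different in flavor, and its first half is sound: the reduction of $\Psi$ to maps $\beta\circ\gamma$ factoring through $M_n(\mathbb{C})$ via $\Psi\circ\psi_\theta\circ\phi_\theta\to\Psi$ is correct for normal $\Psi$, and the Kraus decomposition $\beta(e_{ij})=\sum_k C_{ki}^\ast C_{kj}$ with $C_{ki}\in\mathcal{M}$ is standard. (One small correction: the rewriting $\langle\xi_i,\pi(M)\xi_j\rangle=\langle\zeta_i,(M\otimes 1)\zeta_j\rangle$ needs no factoriality --- every normal representation of any von Neumann algebra is an amplification--reduction of the identity representation. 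Factoriality is needed later, and in a non-factor the proposition is simply false, the center being the obstruction.)

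The genuine gap is the internalization step, and it is twofold. First, your stated target is a category error: approximating the scalar coefficients $\langle\xi_i,\pi(M)\xi_j\rangle$ by other scalar-valued functionals $M\mapsto\sum_l\langle\Omega,w_l^\ast Mw_l\Omega\rangle$ accomplishes nothing, since substituting scalars for scalars in $\sum_{i,j,k}C_{ki}^\ast\,\gamma(M)_{ij}\,C_{kj}$ still leaves $M$-dependent numerical coefficients and the map is not inner. What is actually required is an operator-level, matricially positive approximation $\bigl[\gamma(M)_{ij}\cdot 1_\mathcal{M}\bigr]\approx\sum_l W_l^\ast M W_l$ in $M_n(\mathcal{M})$, with $W_l\in M_{1,n}(\mathcal{M})$, jointly in $(i,j)$. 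Second, the tools you invoke for this --- the (generally non-normal) norm-one projection $\mathcal{E}:{\bf B}(\mathcal{H})\rightarrow\mathcal{M}$ and Dixmier averaging --- succeed only in finite factors: there Dixmier averaging converges in norm to the trace, and the density-matrix sandwich $\tau(h\,\cdot\,)1\approx\sum_l\lambda_l u_l^\ast h^{1/2}Mh^{1/2}u_l$ indeed internalizes arbitrary normal functionals, with no injectivity needed at that point. In a properly infinite (type $\mathrm{II}_\infty$ or $\mathrm{III}$) injective factor there is no trace, the Dixmier set of a self-adjoint element is an interval rather than a prescribed scalar, and applying $\mathcal{E}$ to the rank-one operators $|\zeta_j\rangle\langle\zeta_i|$ yields elements of $\mathcal{M}$ with no quantitative relation to the original coefficients. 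So the proposal fails exactly on the types where the theorem has content; your own remark that this step is ``genuinely the content of the theorem'' is accurate --- as written, the argument reduces the proposition to itself at the crux, which is precisely what the correspondence machinery of \cite{ADH90} (or Mingo's weak-containment arguments \cite{Mingo90}) is there to supply. The normalization $\Psi_\theta(1)\leq\Psi(1)$ is likewise asserted rather than arranged, but that is a repairable bookkeeping matter compared with the internalization gap.
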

The following proof is inspired by \cite{PT15}.
\begin{proof}[Proof of Theorem \ref{INJINNApp}]
Let $n\in\mathbb{N}$, $\rho_1,\cdots,\rho_n\in\mathcal{S}_n(\mathcal{M})$,
$M_1,\cdots,M_n\in\mathcal{M}\backslash\{0\}$ and $\Delta_1,\cdots,\Delta_n\in\mathcal{F}\backslash\{\emptyset\}$.
Let $\mathcal{F}^\prime$ be a $\sigma$-subfield of
$\mathcal{F}$ generated by $\Delta_1,\cdots,\Delta_n,S$.
Let $\{\Gamma_i\}_{i=1}^m \subset \mathcal{F}^\prime\backslash\{\emptyset\}$
be a maximal partition of $\cup_{i=1}^n\Delta_i$, i.e.,
$\{\Gamma_i\}_{i=1}^m$ satisfies the following conditions:\\
$(1)$ For every $i=1,\cdots,m$, if $\Delta\in\mathcal{F}^\prime$ satisfies $\Delta\subset\Gamma_i$,
then $\Delta$ is $\Gamma_i$ or $\emptyset$;\\
$(2)$ $\cup_{i=1}^m \Gamma_i=\cup_{i=1}^n\Delta_i$;\\
$(3)$ $\Gamma_i \cap\Gamma_j=\emptyset$ if $i\neq j$.\\
For every $i=1,\cdots,m$, there is a net of finitely inner CP maps $\{\Psi_{\theta_i}\}_{\theta_i\in\Theta_i}$
of the form $\Psi_{i,\theta_i}(M)=
\sum_{j=1}^{n_{\theta_i}}V_{i,\theta_i,j}^\ast MV_{i,\theta_i,j}$, $M\in\mathcal{M}$, with
$n_{\theta_i}\in\mathbb{N}$, $V_{i,\theta_i,1},\cdots, V_{i,\theta_i,n_{\theta_i}}\in\mathcal{M}$
such that is pointwisely convergent to $\mathcal{I}(\cdot,\Gamma_i)$ in the ultraweak topology
and $\Psi_{i,\theta_i}(1)\leq\mathcal{I}(1,\Gamma_i)$.

We fix $s_0,s_1,\cdots.s_m\in S$ such that $s_i\in\Gamma_i$ for all $i=1,\cdots,m$.
For every ${\bm \theta}=(\theta_1,\cdots,\theta_m)$ $\in\Theta_1\times\cdots\times\Theta_m$,
we define a finitely inner CP instrument $\mathcal{I}_{\bm \theta}$ for $(\mathcal{M},S)$ by
\begin{equation}
\mathcal{I}_{\bm \theta}(M,\Delta)=\sum_{i=1}^m \delta_{s_i}(\Delta) \Psi_{\theta_i}(M)
+\delta_{s_m}(\Delta) L_{\bm \theta}ML_{\bm \theta}
\end{equation}
for all $M\in\mathcal{M}$ and $\Delta\in\mathcal{F}$,
where $\delta_s$ is a delta measure on $(S,\mathcal{F})$ concentrated on $s$ and
$L_{\bm \theta}=\sqrt{1-\sum_{i=1}^m \Psi_{i,\theta_i}(1)}$.

Let $\varepsilon$ be a positive real number.
For every $i=1,\cdots,m$, there is $\bar{\theta}_i\in\Theta_i$ such that
\begin{equation}\label{mainterm}
|\langle \rho_j,\mathcal{I}(M_j,\Gamma_i)\rangle-\langle \rho_j,\Psi_{i,\bar{\theta}_i}(M_j)\rangle|
<\frac{\varepsilon}{2m}
\end{equation}
for all $j=1,\cdots,n$, and that
\begin{equation}\label{residual}
\langle \rho_j, \mathcal{I}(1,\Gamma_i)-\Psi_{i,\bar{\theta}_i}(1) \rangle
<\frac{\varepsilon}{2m\sum_{k=1}^n \Vert M_k\Vert}
\end{equation}
for all $j=1,\cdots,n$.

By Eqs.(\ref{mainterm}), (\ref{residual}), we have
\begin{align}
&\quad\; |\langle \rho_j,\mathcal{I}(M_j,\Delta_j)\rangle-
\sum_{i=1}^m\delta_{s_i}(\Delta_j)\langle \rho_j,\Psi_{i,\bar{\theta}_i}(M_j)\rangle| \nonumber\\
 &=| \sum_{i=1}^m\delta_{s_i}(\Delta_j)\langle\rho_j,\mathcal{I}(M_j,\Gamma_i)\rangle-
\sum_{i=1}^m\delta_{s_i}(\Delta_j)\langle \rho_j,\Psi_{i,\bar{\theta}_i}(M_j)\rangle| \nonumber \\
 &\leq \sum_{i=1}^m\delta_{s_i}(\Delta_j)| \langle\rho_j,\mathcal{I}(M_j,\Gamma_i)\rangle
 -\langle \rho_j,\Psi_{i,\bar{\theta}_i}(M_j)\rangle|
 < \sum_{i=1}^m\delta_{s_i}(\Delta_j)\cdot\frac{\varepsilon}{2m} \leq \frac{\varepsilon}{2}
\end{align}
for all $j=1,\cdots,n$, and
\begin{align}
|\rho_j(L_{\bm \bar{\theta}}M_jL_{\bm \bar{\theta}})| 
\leq \Vert M_j\Vert \rho_j(L_{\bm \bar{\theta}}^2)
&=\Vert M_j\Vert\cdot \langle\rho_j,\sum_{i=1}^m \left(\mathcal{I}(1,\Gamma_i)-
 \Psi_{i,\bar{\theta}_i}(1)\right)\rangle \nonumber \\
 &= \Vert M_j\Vert\cdot \sum_{i=1}^m \langle\rho_j,\mathcal{I}(1,\Gamma_i)-
 \Psi_{i,\bar{\theta}_i}(1)\rangle \nonumber \\
 &< \Vert M_j\Vert\cdot m\cdot \frac{\varepsilon}{2m\sum_{k=1}^n \Vert M_k\Vert}\leq\frac{\varepsilon}{2}.
\end{align}
for all $j=1,\cdots,n$.
Then the CP instrument $\mathcal{I}_{\bm \bar{\theta}}$ for $(\mathcal{M},S)$ 
with ${\bm \bar{\theta}}=(\bar{\theta}_1,\cdots,\bar{\theta}_m)$ satisfies
\begin{equation}
|\langle\rho_j,\mathcal{I}(M_j,\Delta_j)\rangle
-\langle\rho_j, \mathcal{I}_{\bm \bar{\theta}}(M_j,\Delta_j)\rangle|<\varepsilon
\end{equation}
for all $j=1,\cdots,n$.

Next, we shall define an inner measuring process
$\mathbb{M}=(\mathcal{K},\sigma,E,U)$ for $(\mathcal{M},S)$ that realizes $\mathcal{I}_{\bm \bar{\theta}}$.
Let $\eta=\{\eta_j\}_{j=0,1,\cdots,\Vert{\bm \bar{\theta}}\Vert+1}$
be a complete orthonormal system of $\mathbb{C}^{\Vert{\bm \bar{\theta}}\Vert+2}$.
A partial isometry $V:\mathcal{H}\otimes \mathbb{C}^{\Vert{\bm \bar{\theta}}\Vert+2}\rightarrow
\mathcal{H}\otimes \mathbb{C}^{\Vert{\bm \bar{\theta}}\Vert+2}$ is defined by
\begin{equation}
V=\sum_{i=1}^m \sum_{j=\sum_{k=1}^{i-1}\bar{\theta}_k+1}^{\sum_{k=1}^{i}\bar{\theta}_k}
V_{i,\bar{\theta}_i,j}\otimes |\eta_j\rangle\langle \eta_0|
+L_{\bm \bar{\theta}}\otimes |\eta_{\Vert{\bm \bar{\theta}}\Vert+1}\rangle\langle \eta_0|
\end{equation}
It is obvious that $V$ satisfies $V^\ast V =1\otimes |\eta_0\rangle\langle \eta_0|$.
We define a PVM $E:\mathcal{F}\rightarrow M_{\Vert{\bm \bar{\theta}}\Vert+2}(\mathbb{C})$ by
\begin{equation}
E_\eta(\Delta)=
\delta_{s_0}(\Delta)|\eta_0\rangle\langle \eta_0|+
\sum_{i=1}^m \delta_{s_i}(\Delta) \sum_{j=\sum_{k=1}^{i-1}\bar{\theta}_k+1}^{\sum_{k=1}^{i}\bar{\theta}_k}
|\eta_j\rangle\langle \eta_j|
+\delta_{s_m}(\Delta)|\eta_{\Vert{\bm \bar{\theta}}\Vert+1}\rangle\langle \eta_{\Vert{\bm \bar{\theta}}\Vert+1}|
\end{equation}
for all $\Delta\in\mathcal{F}$.

We define a Hilbert space $\mathcal{K}=\mathbb{C}^{\Vert{\bm \bar{\theta}}\Vert+2}\otimes\mathbb{C}^2$,
a normal state $\sigma$ on ${\bf B}(\mathcal{K})=M_{\Vert{\bm \bar{\theta}}\Vert+2}(\mathbb{C})
\otimes M_2(\mathbb{C})$,
a PVM $E:\mathcal{F}\rightarrow{\bf B}(\mathcal{K})$
and a unitary operator $U$ on $\mathcal{H}\otimes\mathcal{K}$ by
\begin{align}
\sigma(X)&=\mathrm{Tr}\left[X(|\eta_0\rangle\langle \eta_0|\otimes G_{11})\right],
\hspace{5mm}X\in{\bf B}(\mathcal{K}),\\
E(\Delta) &=E_\eta(\Delta)\otimes 1,\hspace{5mm}\Delta\in\mathcal{F},\\
U =V\otimes G_{11}&+(1-VV^\ast)\otimes G_{12}+(1-V^\ast V)\otimes G_{12}^\ast-V^\ast\otimes G_{22},
\end{align}
respectively, where $\mathrm{Tr}$ is the trace on $M_{\Vert{\bm \bar{\theta}}\Vert+2}(\mathbb{C})
\otimes M_2(\mathbb{C})$ and
\begin{equation}
G_{11}=\left(
\begin{array}{cc}
1 & 0 \\
0 & 0
\end{array}
\right),\hspace{3mm}
G_{12}=\left(
\begin{array}{cc}
0 & 1 \\
0 & 0
\end{array}
\right),\hspace{3mm}
G_{22}=\left(
\begin{array}{cc}
0 & 0 \\
0 & 1
\end{array}
\right).
\end{equation}

Since $U\in\mathcal{M}\overline{\otimes} M_{\Vert{\bm \bar{\theta}}\Vert+2}(\mathbb{C})
\overline{\otimes}M_2(\mathbb{C})$, the 4-tuple
$\mathbb{M}=(\mathcal{K},\sigma,E,U)$ is an inner measuring process for $(\mathcal{M},S)$ satisfying
\begin{equation}
|\langle\rho_j,\mathcal{I}(M_j,\Delta_j)\rangle
-\langle\rho_j, \mathcal{I}_\mathbb{M}(M_j,\Delta_j)\rangle|<\varepsilon
\end{equation}
for all $j=1,\cdots,n$.
\end{proof}

\begin{proof}[Proof of Theorem \ref{INJFac}]
Let $\mathcal{I}$ be a CP instrument for $(\mathcal{M},S)$.
Since $\mathcal{M}$ is an injective factor, $\mathcal{I}(\cdot,\Delta)$ is approximately inner
for every $\Delta\in\mathcal{F}$ by Proposition \ref{injinn}.
Thus the proof of Theorem \ref{INJINNApp} works.
\end{proof}

\begin{remark}
We use the same notations as in the proof of Theorem \ref{INJINNApp}.
In the case where $\mathcal{M}$ is factor,
we have another construction of a measuring process $\mathbb{M}$ for $(\mathcal{M},S)$ such that
$\mathcal{I}_{\bm \bar{\theta}}(M,\Delta)=\mathcal{I}_\mathbb{M}(M,\Delta)$
for all $M\in\mathcal{M}$ and $\Delta\in\mathcal{F}$.

Let $\mathcal{N}$ be an AFD type III factor on a separable Hilbert space $\mathcal{L}$.
Let $Y$ be a partial isometry of $\mathcal{N}$ such that $Y^\ast Y\neq1$ and $YY^\ast\neq 1$.
There then exists a partial isometry $W$ of
$\mathcal{M}\overline{\otimes} M_{\Vert{\bm \bar{\theta}}\Vert+2}(\mathbb{C})\overline{\otimes}\mathcal{N}$
such that $W^\ast W=1-V^\ast V\otimes Y^\ast Y$
$=1-1\otimes |\eta_0\rangle\langle \eta_0|\otimes Y^\ast Y$
and $WW^\ast=1-VV^\ast\otimes YY^\ast$.
We define a unitary operator $U$ of
$\mathcal{M}\overline{\otimes} M_{\Vert{\bm \bar{\theta}}\Vert+2}(\mathbb{C})\overline{\otimes}\mathcal{N}$ by
\begin{equation}
U=V\otimes Y+W,
\end{equation}
and a PVM $E:\mathcal{F}\rightarrow M_{\Vert{\bm \bar{\theta}}\Vert+2}(\mathbb{C})\overline{\otimes}\mathcal{N}$ by
\begin{equation}
E(\Delta) = E_\eta(\Delta)\otimes 1_\mathcal{L}
\end{equation}
for all $\Delta\in\mathcal{F}$.
Let $\psi$ be a unit vector of $\mathcal{L}$ such that $Y^\ast Y\psi=\psi$.
Then we have $W(\xi\otimes\eta_0\otimes \psi)=0$ for all $\xi\in\mathcal{H}$.
We define a normal state $\sigma$ on
$M_{\Vert{\bm \bar{\theta}}\Vert+2}(\mathbb{C})\overline{\otimes}\mathcal{N}$ by
\begin{equation}
\sigma(X)=\langle \eta_0\otimes \psi| X(\eta_0\otimes \psi)\rangle
\end{equation}
for all $X\in M_{\Vert{\bm \bar{\theta}}\Vert+2}(\mathbb{C})\overline{\otimes}\mathcal{N}$.

A Hilbert space $\mathcal{K}$ is then defined by $\mathcal{K}=\mathbb{C}^{\Vert{\bm \bar{\theta}}\Vert+2}
\otimes\mathcal{L}$.
Since $U\in\mathcal{M}\overline{\otimes} M_{\Vert{\bm \bar{\theta}}\Vert+2}(\mathbb{C})
\overline{\otimes}\mathcal{N}$, the 4-tuple
$\mathbb{M}=(\mathcal{K},\sigma,E,U)$ is an inner measuring process for $(\mathcal{M},S)$ satisfying
the desired property.
\end{remark}

Not only for factors $\mathcal{M}$, we have the following theorem
affirmatively resolving Question \ref{Q4} for physically relevant cases.

\begin{definition}
A measuring process $\mathbb{M}=(\mathcal{K},\sigma,E,U)$ for $(\mathcal{M},S)$ is said to be faithful
if there exists a normal faithul representation
$\widetilde{E}:L^\infty(S,\mathcal{I}_\mathbb{M})\rightarrow{\bf B}(\mathcal{K})$ such that
$\widetilde{E}([\chi_\Delta])=E(\Delta)$ for all $\Delta\in\mathcal{F}$.
\end{definition}
This definition is the same as \cite[Definition 3.4]{OO16}
except that the definition of measuring process for $(\mathcal{M},S)$ is different.

\begin{theorem} \label{INJVN}
Let $\mathcal{M}$ be an injective von Neumann algebra on a Hilbert space $\mathcal{H}$
and $(S,\mathcal{F})$ a measurable space.
For every CP instrument $\mathcal{I}$ for $(\mathcal{M},S)$, $\varepsilon>0$,
$n\in\mathbb{N}$, $\rho_1,\cdots,\rho_n\in\mathcal{S}_n(\mathcal{M})$,
$M_1,\cdots,M_n\in\mathcal{M}$ and $\Delta_1,\cdots,\Delta_n\in\mathcal{F}$,
there exists a faithful measuring process $\mathbb{M}=(\mathcal{K},\sigma,E,U)$ for $(\mathcal{M},S)$
in the sense of Definition \ref{MP2} such that
\begin{equation}
|\langle\rho_j,\mathcal{I}(M_j,\Delta_j)\rangle
-\langle\rho_j, \mathcal{I}_\mathbb{M}(M_j,\Delta_j)\rangle|<\varepsilon
\end{equation}
for all $j=1,\cdots,n$, and that
\begin{equation}
\mathcal{I}(1,\Delta)=\mathcal{I}_\mathbb{M}(1,\Delta)
\end{equation}
for all $\Delta\in\mathcal{F}$.
\end{theorem}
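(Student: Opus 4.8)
The plan is to establish this as the Definition~\ref{MP2} and \emph{faithful} strengthening of the injective case \cite[Theorem 4.2]{OO16}, reusing the discretization of Theorems~\ref{GVN1} and \ref{INJINNApp} for the state change and a Naimark-type dilation of the true marginal for exactness and faithfulness. First I would fix the data $\varepsilon,n,\{\rho_j\},\{M_j\},\{\Delta_j\}$, pass to the finite $\sigma$-subfield $\mathcal{F}^\prime$ generated by $\Delta_1,\dots,\Delta_n,S$, and take a maximal partition $\{\Gamma_i\}_{i=1}^m$ of $S$ in $\mathcal{F}^\prime$, exactly as in the proof of Theorem~\ref{INJINNApp}. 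After this reduction it suffices to approximate the finitely many state-change maps $\mathcal{I}(\cdot,\Gamma_i)$ against the $\{\rho_j\}$ within $\varepsilon$, while reproducing the marginal POVM $\Delta\mapsto\mathcal{I}(1,\Delta)$ \emph{exactly} for every $\Delta\in\mathcal{F}$ and arranging the spectral measure $E$ to extend to a faithful normal representation of $L^\infty(S,\mathcal{I})$.

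Two ingredients then have to be produced and combined. For the state change, since $\mathcal{M}$ is injective but \emph{not} assumed to be a factor, I cannot invoke approximate innerness (Proposition~\ref{injinn} is stated only for injective factors); instead I would use injectivity in the equivalent form of semidiscreteness/AFD recalled in Section~\ref{sec:2}, approximating each $\mathcal{I}(\cdot,\Gamma_i)$ point-ultraweakly by normal CP maps that factor through matrix algebras $M_{k}(\mathbb{C})$. Such matrix-factorable maps are realizable by a \emph{finite-dimensional} probe, i.e.\ by a unitary piece lying in $\mathcal{M}\,\overline{\otimes}\,M_{k}(\mathbb{C})$, which is the analogue of the inner construction in the proof of Theorem~\ref{INJINNApp}. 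For the marginal, I would start from the minimal Stinespring representation \cite[Proposition 4.2]{Oz84}, $\mathcal{I}(M,\Delta)=V^\ast\pi_0(M)E_0(\Delta)V$ with $E_0$ a PVM commuting with the normal representation $\pi_0$, so that $\mathcal{I}(1,\Delta)=V^\ast E_0(\Delta)V$ already encodes the marginal; taking the dilation minimal makes the induced spectral measure faithful on $L^\infty(S,\mathcal{I})$. The $4$-tuple $\mathbb{M}=(\mathcal{K},\sigma,E,U)$ is then assembled by tensoring the finite state-change probe with the (possibly continuous) faithful probe carrying $E_0$, in such a way that for $M=1$ the state-change factor acts trivially and only $E$ contributes, yielding $\mathcal{I}_\mathbb{M}(1,\Delta)=\mathcal{I}(1,\Delta)$ for all $\Delta$, with the $\varepsilon$-approximation surviving on the chosen data.

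The decisive step is to verify that this $\mathbb{M}$ is a measuring process in the strong sense of Definition~\ref{MP2}, not merely Definition~\ref{MP1}. Writing $\pi_{in}(M)=M\otimes 1$ and $\pi_\Delta(M)=U^\ast(M\otimes E(\Delta))U$, one checks easily that consecutive output factors collapse, $\pi_{\Delta_1}(M_1)\pi_{\Delta_2}(M_2)=\pi_{\Delta_1\cap\Delta_2}(M_1M_2)$, but the genuinely higher correlations $W^{\mathbb{M}}_{(\dots,in,\dots,\Delta,\dots)}$ with input factors interleaved between outputs are exactly those left uncontrolled by Definition~\ref{MP1}. To handle them I would imitate the $^\ast$-subalgebra argument from the proof of the extendability theorem in Section~\ref{sec:5}: exhibit a $^\ast$-subalgebra $\mathcal{D}\subseteq{\bf B}(\mathcal{H}\otimes\mathcal{K})$ that contains $\pi_{in}(\mathcal{M})$ and all $\pi_\Delta(\mathcal{M})$, is closed under multiplication and involution, and satisfies $(\mathrm{id}\otimes\sigma)(\mathcal{D})\subseteq\mathcal{M}$; this forces $W^{\mathbb{M}}_T(\overrightarrow{M})\in\mathcal{M}$ for every $T\in\mathcal{T}_S$, and the conditions $(\mathrm{MC}1)$--$(\mathrm{MC}6)$ then follow from the structural properties of $\pi_{in}$, $\{\pi_\Delta\}$, $E$ and $\sigma$.

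I expect the main obstacle to be precisely this reconciliation: exactness and faithfulness of the marginal force a genuine, generally \emph{continuous}, spectral measure and hence a \emph{non-inner} unitary $U$, for which membership of the mixed correlations in $\mathcal{M}$ is no longer automatic as it was in the inner/factor cases of Theorems~\ref{INJINNApp} and \ref{INJFac}. The technical heart is therefore the construction of the subalgebra $\mathcal{D}$ for the \emph{combined} finite-probe-tensor-faithful-PVM process, together with checking that tensoring the continuous faithful marginal probe does not destroy the $\varepsilon$-closeness obtained from the matrix-factorable approximation of the state change.
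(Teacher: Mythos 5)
There is a genuine gap, and it sits exactly where you lean on semidiscreteness. Your replacement for Proposition~\ref{injinn} --- approximate each $\mathcal{I}(\cdot,\Gamma_i)$ by normal CP maps factoring through matrix algebras, then realize these ``by a unitary piece lying in $\mathcal{M}\,\overline{\otimes}\,M_k(\mathbb{C})$'' --- is false. If $U=\sum_{p,q}U_{pq}\otimes e_{pq}$ with $U_{pq}\in\mathcal{M}$, the resulting maps $(\mathrm{id}\otimes\sigma)(U^\ast(M\otimes E(\Delta))U)$ are sums of terms $V^\ast M W$ with $V,W\in\mathcal{M}$, hence finitely inner; in particular they satisfy $\Phi(Mz)=\Phi(M)z$ for every central projection $z$, and so do their pointwise ultraweak limits. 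Factorability through $M_k(\mathbb{C})$ is strictly weaker: on $\mathcal{M}=\mathbb{C}\oplus\mathbb{C}$ the flip $(a,b)\mapsto(b,a)$ factors through $M_2(\mathbb{C})$ (compress to the diagonal, exchange the entries) yet violates the centrality identity, so no unitary in $(\mathbb{C}\oplus\mathbb{C})\,\overline{\otimes}\,M_k(\mathbb{C})$, nor any limit of such probes, implements it. This is precisely why the paper confines the inner construction (Theorems~\ref{INJINNApp} and \ref{INJFac}) to factors, where Proposition~\ref{injinn} supplies approximate innerness; for a general injective $\mathcal{M}$ your first ingredient simply does not exist in the form you need. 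Your second ingredient is also not delivered: you never write down the instrument that the combined ``finite probe tensor faithful probe'' process is supposed to realize, a \emph{single} unitary $U$ must produce the exact marginal and the approximate state change simultaneously (the discretization by $\{\Gamma_i\}$ wrecks the marginal, as you notice but do not repair), and the Definition~\ref{MP2} verification via a $^\ast$-subalgebra $\mathcal{D}$ is exactly the step you defer. So the ``technical heart'' you identify is absent, not merely technical.

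The paper's proof of Theorem~\ref{INJVN} avoids all of this and uses injectivity on the algebra side rather than the outcome side. Take a norm one projection $\mathcal{E}:{\bf B}(\mathcal{H})\rightarrow\mathcal{M}$ together with a net of unital (normal) CP maps $\Phi_\alpha$ converging to $\mathcal{E}$ pointwise ultraweakly (\cite[Corollary 3.9]{AD95}), and set $\mathcal{I}_\alpha(X,\Delta)=\mathcal{I}(\Phi_\alpha(X),\Delta)$: this is a CP instrument on all of ${\bf B}(\mathcal{H})$, and unitality gives $\mathcal{I}_\alpha(1,\Delta)=\mathcal{I}(1,\Delta)$ \emph{exactly}, for every $\Delta\in\mathcal{F}$, with no discretization and no tension with the marginal; one $\alpha_0$ then yields the $\varepsilon$-estimate on the finite data. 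Because $\mathcal{I}_{\alpha_0}$ lives on ${\bf B}(\mathcal{H})$, \cite[Proposition 3.2]{OO16} and Theorem~\ref{vNhom2} give $\mathcal{I}_{\alpha_0}(X,\Delta)=V^\ast(X\otimes E_1([\chi_\Delta]))V$ with $E_1$ a faithful normal representation of $L^\infty(S,\mathcal{I}_{\alpha_0})$ and $V$ an isometry, and $V$ is promoted to a unitary by tensoring with an AFD type III factor (which supplies an isometry $U_2$ with $U_2U_2^\ast=|\eta_1\rangle\langle\eta_1|\otimes 1$) and the $2\times 2$ rotation trick with $G_{11},G_{12},G_{22}$. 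The resulting $\mathbb{M}$ realizes $\mathcal{I}_{\alpha_0}$ \emph{exactly} --- whence faithfulness, the exact marginal, and the Definition~\ref{MP2} requirements, since for processes attached to a full instrument on $({\bf B}(\mathcal{H}),S)$ the two definitions of measuring process coincide --- while the $\varepsilon$-closeness to $\mathcal{I}$ is carried entirely by $\Phi_{\alpha_0}$. If you wish to salvage your outline, replace the semidiscrete factorization by this composition-with-$\Phi_\alpha$ device; as proposed, both pillars of your argument (inner finite-dimensional probes for non-factors, and the unconstructed combined dilation) fail or are missing.
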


\begin{proof}
Suppose that $\mathcal{M}$ is in a standard form without loss of generality.
Then there is a norm one projection $\mathcal{E}:{\bf B}(\mathcal{H})\rightarrow\mathcal{M}$
and a net $\{\Phi_\alpha\}_{\alpha\in A}$ of unital CP maps such that
$\Phi_\alpha(X)\rightarrow^{uw} \mathcal{E}(X)$ for all $X\in{\bf B}(\mathcal{H})$
by \cite[Corollary 3.9]{AD95}(, or \cite[Proposition 4.2]{OO16}).
For every $\alpha\in A$, a CP instrument $\mathcal{I}_\alpha$ for $({\bf B}(\mathcal{H}),S)$ is defined by
\begin{equation}
\mathcal{I}_\alpha(X,\Delta)=\mathcal{I}(\Phi_\alpha(X),\Delta)
\end{equation}
for all $X\in{\bf B}(\mathcal{H})$ and $\Delta\in\mathcal{F}$.
For every $\alpha\in A$, $\mathcal{I}_\alpha$ satisfies
\begin{equation}
\mathcal{I}_\alpha(1,\Delta)=\mathcal{I}(\Phi_\alpha(1),\Delta)=\mathcal{I}(1,\Delta)
\end{equation}
for all $\Delta\in\mathcal{F}$.

Let $\varepsilon>0$,
$n\in\mathbb{N}$, $\rho_1,\cdots,\rho_n\in\mathcal{S}_n(\mathcal{M})$,
$M_1,\cdots,M_n\in\mathcal{M}$ and $\Delta_1,\cdots,\Delta_n\in\mathcal{F}$.
There exists $\alpha_0\in A$ such that
\begin{equation}
|\langle \rho_i, \mathcal{I}(M_i,\Delta_i)\rangle
-\langle \rho_i, \mathcal{I}_{\alpha_0}(M_i,\Delta_i) \rangle|<\varepsilon
\end{equation}
for every $i=1,\cdots,n$.

By \cite[Proposition 3.2]{OO16} and Theorem \ref{vNhom2}, there exist a Hilbert space $\mathcal{L}_1$,
a normal faithful representation $E_1:L^\infty(S,\mathcal{I}_{\alpha_0})
\rightarrow{\bf B}(\mathcal{L}_1)$ and an isometry $V:\mathcal{H}\rightarrow\mathcal{H}\otimes\mathcal{L}_1$
such that
\begin{equation}
\mathcal{I}_{\alpha_0}(X,\Delta)=V^\ast( X\otimes E_1([\chi_\Delta]))V
\end{equation}
for all $X\in{\bf B}(\mathcal{H})$ and $\Delta\in\mathcal{F}$.

Because the discussion below is not needed in the case of $\dim(\mathcal{L}_1)=1$,
we assume that $\dim(\mathcal{L}_1)\geq 2$.
Let $\eta_1$ be a unit vector of $\mathcal{L}_1$.
Let $\mathcal{N}$ be an AFD type III factor on a separable Hilbert space $\mathcal{L}_2$.
We define a partial isometry
$U_1:\mathcal{H}\otimes\mathcal{L}_1\otimes\mathcal{L}_2
\rightarrow\mathcal{H}\otimes\mathcal{L}_1\otimes\mathcal{L}_2$ by
\begin{equation}
U_1(x\otimes\xi\otimes \psi)=\langle \eta_1| \xi\rangle Vx\otimes \psi
\end{equation}
for all $x\in\mathcal{H}$, $\xi\in\mathcal{L}_1$ and $\psi\in\mathcal{L}_2$.
Let $U_2$ be an isometry of ${\bf B}(\mathcal{L}_1)\overline{\otimes}\mathcal{N}$
such that $U_2U_2^\ast = |\eta_1\rangle\langle\eta_1|\otimes 1$.
We define an isometry $U_3$ of ${\bf B}(\mathcal{H}\otimes\mathcal{L}_1)\overline{\otimes}\mathcal{N}$
by $U_3=1\otimes U_2$.
We then define a unitary operator $U$ of $\mathcal{H}\otimes\mathcal{L}_1\otimes
\mathcal{L}_2\otimes\mathbb{C}^2$ by
\begin{align}
U &= U_1U_3\otimes G_{11} + [1-(U_1U_3)
(U_1U_3)^\ast]\otimes G_{12}\nonumber\\
 &\quad\; +[1-(U_1U_3)^\ast(U_1U_3)]\otimes G_{12}^\ast-(U_1U_3)^\ast\otimes G_{22} \nonumber \\
 &= U_1U_3\otimes G_{11} + (1-U_1U_1^\ast)\otimes G_{12}-(U_1U_3)^\ast\otimes G_{22},
\end{align}
where
\begin{equation}
G_{11}=\left(
\begin{array}{cc}
1 & 0 \\
0 & 0
\end{array}
\right),\hspace{3mm}
G_{12}=\left(
\begin{array}{cc}
0 & 1 \\
0 & 0
\end{array}
\right),\hspace{3mm}
G_{22}=\left(
\begin{array}{cc}
0 & 0 \\
0 & 1
\end{array}
\right).
\end{equation}

Let $\eta_2$ be a unit vector of $\mathcal{L}_2$.
We define a Hilbert space $\mathcal{K}=\mathcal{L}_1\otimes\mathcal{L}_2\otimes\mathbb{C}^2$,
a normal state $\sigma$ on ${\bf B}(\mathcal{K})$ by
\begin{equation}
\sigma(X)=\mathrm{Tr}\left[X(|\eta_1\otimes\eta_2\rangle\langle\eta_1\otimes\eta_2|\otimes G_{11})\right]
\end{equation}
for all $X\in{\bf B}(\mathcal{K})$, and a PVM $E:\mathcal{F}\rightarrow{\bf B}(\mathcal{K})$ by
\begin{equation}
E(\Delta) =E_1([\chi_\Delta])\otimes 1_{\mathcal{N}\overline{\otimes}M_2(\mathbb{C})}
\end{equation}
for all $\Delta\in\mathcal{F}$, respectively, where $\mathrm{Tr}$ is the trace on
${\bf B}(\mathcal{L}_1\otimes\mathcal{L}_2)\overline{\otimes} M_2(\mathbb{C})$.

The 4-tuple $\mathbb{M}=(\mathcal{K},\sigma,E,U)$ is then a faithful measuring process for $(\mathcal{M},S)$
that realizes $\mathcal{I}_{\alpha_0}$ and that satisfies
\begin{equation}
|\langle\rho_j,\mathcal{I}(M_j,\Delta_j)\rangle
-\langle\rho_j, \mathcal{I}_\mathbb{M}(M_j,\Delta_j)\rangle|<\varepsilon
\end{equation}
for all $j=1,\cdots,n$, and 
\begin{equation}
\mathcal{I}(1,\Delta)=\mathcal{I}_\mathbb{M}(1,\Delta)
\end{equation}
for all $\Delta\in\mathcal{F}$.
\end{proof}
By the proof of Theorem \ref{INJVN} and facts in Section \ref{sec:2}, we have the following corollaries.
\begin{corollary}\label{RENE}
Let $\mathcal{M}$ be a von Neumann algebra on a Hilbert space $\mathcal{H}$
and $(S,\mathcal{F})$ a measurable space. Then we have
\begin{equation}
\mathrm{CPInst}_{\mathrm{RE}}(\mathcal{M},S)=\mathrm{CPInst}_{\mathrm{NE}}(\mathcal{M},S).
\end{equation}
\end{corollary}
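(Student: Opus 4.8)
The inclusion $\mathrm{CPInst}_{\mathrm{RE}}(\mathcal{M},S)\subseteq\mathrm{CPInst}_{\mathrm{NE}}(\mathcal{M},S)$ is already recorded in Section \ref{sec:4}: a measuring process in the sense of Definition \ref{MP2} is in particular one in the sense of Definition \ref{MP1}, so the CP instrument it defines has the NEP by Theorem \ref{THMNEP}. The plan is therefore to prove the reverse inclusion, i.e. to realize every CP instrument with the NEP by a measuring process in the sense of Definition \ref{MP2}. Given $\mathcal{I}\in\mathrm{CPInst}_{\mathrm{NE}}(\mathcal{M},S)$, I would first invoke Theorem \ref{THMNEP} to obtain a CP instrument $\widetilde{\mathcal{I}}$ for $({\bf B}(\mathcal{H}),S)$ with $\widetilde{\mathcal{I}}(M,\Delta)=\mathcal{I}(M,\Delta)$ for all $M\in\mathcal{M}$ and $\Delta\in\mathcal{F}$, and then, exactly as in the proof of Theorem \ref{INJVN}, put $\widetilde{\mathcal{I}}$ into the Stinespring product form $\widetilde{\mathcal{I}}(X,\Delta)=V^\ast(X\otimes E_1([\chi_\Delta]))V$ with $V:\mathcal{H}\to\mathcal{H}\otimes\mathcal{L}_1$ an isometry and $E_1$ a normal faithful representation of $L^\infty(S,\widetilde{\mathcal{I}})$.

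The crucial simplification relative to Theorem \ref{INJVN} is that no approximation is needed: since $\widetilde{\mathcal{I}}$ is a genuine extension of $\mathcal{I}$, the step in which injectivity was used to pass to the approximating instrument $\mathcal{I}_{\alpha_0}$ is omitted, and the dilation construction in that proof is applied verbatim to $V$ and $E_1$. This produces a $4$-tuple $\mathbb{M}=(\mathcal{K},\sigma,E,U)$ (with $\mathcal{K}=\mathcal{L}_1\otimes\mathcal{L}_2\otimes\mathbb{C}^2$, $E(\Delta)=E_1([\chi_\Delta])\otimes 1$, $\sigma$ the normal state used there, and $U$ the Halmos-type unitary dilation of the isometry $W=U_1U_3$ built with the help of an AFD type $\mathrm{III}$ factor on $\mathcal{L}_2$) which, being a measuring process for $({\bf B}(\mathcal{H}),S)$ that realizes $\widetilde{\mathcal{I}}$, satisfies $\mathcal{I}_\mathbb{M}(M,\Delta)=\widetilde{\mathcal{I}}(M,\Delta)=\mathcal{I}(M,\Delta)$ for all $M\in\mathcal{M}$ and $\Delta\in\mathcal{F}$. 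Hence $\mathbb{M}$ realizes $\mathcal{I}$, and it only remains to check that $\mathbb{M}$ is a measuring process for $(\mathcal{M},S)$ in the sense of Definition \ref{MP2}.

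The main obstacle is precisely this last verification: that $\{W_T^\mathbb{M}\}$ is valued in $\mathcal{M}$, i.e. $W_T^\mathbb{M}(\overrightarrow{M})\in\mathcal{M}$ for every $T\in\mathcal{T}_S$ and $\overrightarrow{M}\in\mathcal{M}^{|T|}$, not merely for $|T|=1$. I would establish this by computing the compression $W_T^\mathbb{M}(\overrightarrow{M})=(\mathrm{id}\otimes\sigma)(\pi_T(\overrightarrow{M}))$ through the ground vector of $\sigma$ and the $2\times 2$ block structure of $U$ over $\mathbb{C}^2$. Writing $G=VV^\ast$, the splitting of the $\mathbb{C}^2$-corners together with the projection $1-VV^\ast$ occurring in $U$ inserts, at each $\pi_{in}$-junction, a resolution $G+(1-G)$; upon expanding, every summand factors through $V^\ast(\,\cdot\,)V$ at each occurrence of $G=VV^\ast$, while within a block the identity parts fuse consecutive factors via the PVM identity $E_1([\chi_{\Delta}])E_1([\chi_{\Delta'}])=E_1([\chi_{\Delta\cap\Delta'}])$. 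Each block thus collapses to a value $V^\ast(M'\otimes E_1([\chi_{\Gamma}]))V=\widetilde{\mathcal{I}}(M',\Gamma)=\mathcal{I}(M',\Gamma)\in\mathcal{M}$, with $M'$ a product of the relevant $M_j$ and $\Gamma$ the intersection of the relevant $\Delta_j$, so that $W_T^\mathbb{M}(\overrightarrow{M})$ is a finite sum of products of elements of $\mathcal{M}$ and hence lies in $\mathcal{M}$. The bookkeeping of these blocks for arbitrary alternating $T$ is the only delicate point; the cases $|T|\le 3$ already display the pattern (for instance $W_{(\Delta_1,\Delta_2)}^\mathbb{M}(M_1,M_2)=\mathcal{I}(M_1M_2,\Delta_1\cap\Delta_2)$, which is also immediate from $E$ being a PVM and $\pi_\Delta(M)=U^\ast(M\otimes E(\Delta))U$), and the general case follows by induction on $|T|$. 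Once $\mathcal{M}$-valuedness is known, the conditions $(\mathrm{MC}1)$--$(\mathrm{MC}6)$ for $\{W_T^\mathbb{M}\}$, which hold over $({\bf B}(\mathcal{H}),S)$ by Theorem \ref{OTOC}, restrict to $(\mathcal{M},S)$, so $\mathbb{M}$ is a measuring process for $(\mathcal{M},S)$ in the sense of Definition \ref{MP2} realizing $\mathcal{I}$, giving $\mathcal{I}\in\mathrm{CPInst}_{\mathrm{RE}}(\mathcal{M},S)$.
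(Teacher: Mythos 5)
Your proposal is correct and takes essentially the same route as the paper: the paper's own proof of this corollary is just the citation of Theorem \ref{THMNEP} together with the dilation construction from the proof of Theorem \ref{INJVN}, applied directly to the normal extension $\widetilde{\mathcal{I}}$ so that the injectivity/approximation step drops out, exactly as you describe. Your block-collapsing verification that $W_T^{\mathbb{M}}(\overrightarrow{M})\in\mathcal{M}$ (which works because $U_1U_3$ is an isometry with $(U_1U_3)(U_1U_3)^\ast=VV^\ast\otimes 1$ and $(U_1U_3)^\ast(Y\otimes 1)(U_1U_3)=V^\ast YV\otimes 1$, so every expanded summand factors into products of values $\mathcal{I}(M',\Gamma)$ and elements of $\mathcal{M}$) is precisely the detail the paper leaves implicit, and your sketch of it is sound.
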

\begin{proof}
Use \cite[Theorem 3.4 (iii)]{OO16}.
\end{proof}
\begin{corollary}
Let $\mathcal{M}$ be an atomic von Neumann algebra on a Hilbert space $\mathcal{H}$
and $(S,\mathcal{F})$ a measurable space. Then we have
\begin{equation}
\mathrm{CPInst}_{\mathrm{RE}}(\mathcal{M},S)=\mathrm{CPInst}(\mathcal{M},S).
\end{equation}
\end{corollary}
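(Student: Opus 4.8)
The plan is to prove the claimed identity as a chain of two equalities that are already available, so that no fresh construction is required. Recall the a priori inclusions $\mathrm{CPInst}_{\mathrm{RE}}(\mathcal{M},S)\subseteq\mathrm{CPInst}_{\mathrm{NE}}(\mathcal{M},S)\subseteq\mathrm{CPInst}(\mathcal{M},S)$, the first noted in Section \ref{sec:4} and the second being immediate from the definition of the NEP. Hence it suffices to close the chain by showing $\mathrm{CPInst}(\mathcal{M},S)\subseteq\mathrm{CPInst}_{\mathrm{RE}}(\mathcal{M},S)$ under the hypothesis that $\mathcal{M}$ is atomic.

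The first step is to pass from \emph{all} CP instruments to those with the NEP. Since $\mathcal{M}$ is atomic, \cite[Theorem 4.1]{OO16} (equivalently, the existence of a normal conditional expectation $\mathcal{E}:{\bf B}(\mathcal{H})\rightarrow\mathcal{M}$, \cite[Chapter V, Section 2, Excercise 8]{T79}) gives $\mathrm{CPInst}(\mathcal{M},S)=\mathrm{CPInst}_{\mathrm{NE}}(\mathcal{M},S)$: every CP instrument for $(\mathcal{M},S)$ enjoys the normal extension property. The second step is the general identity $\mathrm{CPInst}_{\mathrm{RE}}(\mathcal{M},S)=\mathrm{CPInst}_{\mathrm{NE}}(\mathcal{M},S)$ of Corollary \ref{RENE}, valid for every $\sigma$-finite von Neumann algebra. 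Concatenating the two equalities yields
\begin{equation}
\mathrm{CPInst}_{\mathrm{RE}}(\mathcal{M},S)=\mathrm{CPInst}_{\mathrm{NE}}(\mathcal{M},S)=\mathrm{CPInst}(\mathcal{M},S),
\end{equation}
which is exactly the assertion.

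The only point demanding care — and the reason I would route the argument through Corollary \ref{RENE} rather than through a direct reduction to ${\bf B}(\mathcal{H})$ — is the strength of the word \emph{realized} in the definition of $\mathrm{CPInst}_{\mathrm{RE}}$. A tempting shortcut is to use the conditional expectation $\mathcal{E}$ to lift $\mathcal{I}$ to the CP instrument $\widetilde{\mathcal{I}}(X,\Delta)=\mathcal{I}(\mathcal{E}(X),\Delta)$ on $({\bf B}(\mathcal{H}),S)$, dilate it by Theorem \ref{CPIMP}, and restrict the resulting measuring process to $\mathcal{M}$. This reproduces $\mathcal{I}$ at the level of the induced instrument, but it does \emph{not} by itself produce a measuring process for $(\mathcal{M},S)$ in the sense of Definition \ref{MP2}: that definition demands that the \emph{entire} family $\{W_T^{\mathbb{M}}\}_{T\in\mathcal{T}}$ of higher measurement correlations take values in $\mathcal{M}$, whereas the ${\bf B}(\mathcal{H})$-dilation only controls the first-order data $W_\Delta^{\mathbb{M}}(M)=\mathcal{I}(M,\Delta)$. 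Composing the higher correlations with $\mathcal{E}$, as in the Remark following the extendability theorem in Section \ref{sec:5}, restores a system of measurement correlations valued in $\mathcal{M}$, but it breaks the requirement that these be the \emph{intrinsic} correlations of one fixed measuring process. Securing exactly this intrinsic property from the NEP is the content of Corollary \ref{RENE}, whose proof invokes \cite[Theorem 3.4 (iii)]{OO16}; I would therefore simply quote it. Consequently the substantive work resides entirely in the two imported results, and the present corollary is their immediate composition.
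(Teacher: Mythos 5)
Your proof is correct and is exactly the derivation the paper intends: the corollary is stated as an immediate consequence of Corollary \ref{RENE} ($\mathrm{CPInst}_{\mathrm{RE}}=\mathrm{CPInst}_{\mathrm{NE}}$) combined with the fact recalled in Section \ref{sec:2} that atomicity gives $\mathrm{CPInst}_{\mathrm{NE}}(\mathcal{M},S)=\mathrm{CPInst}(\mathcal{M},S)$ by \cite[Theorem 4.1]{OO16}. Your cautionary aside about why the conditional-expectation shortcut only yields a system of measurement correlations, not a single measuring process in the sense of Definition \ref{MP2}, accurately mirrors the paper's own Remark at the end of Section \ref{sec:5}.
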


\begin{corollary}
Let $\mathcal{M}$ be a von Neumann algebra on a Hilbert space $\mathcal{H}$
and $(S,\mathcal{F})$ a measurable space. Then we have
\begin{equation}
\mathrm{CPInst}_{\mathrm{AR}}(\mathcal{M},S)=\mathrm{CPInst}(\mathcal{M},S).
\end{equation}
\end{corollary}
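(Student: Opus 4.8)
The plan is to obtain the only nontrivial inclusion, namely $\mathrm{CPInst}(\mathcal{M},S)\subseteq\mathrm{CPInst}_{\mathrm{AR}}(\mathcal{M},S)$, directly from Theorem \ref{GVN1}; the reverse inclusion holds by definition, since every element of $\mathrm{CPInst}_{\mathrm{AR}}(\mathcal{M},S)$ is by construction a CP instrument. The crucial point I would exploit is that Theorem \ref{GVN1} does more than approximate: for any finitely many prescribed data it yields a measuring process reproducing $\mathcal{I}$ \emph{exactly} on those data. Consequently the entire analytic content is already discharged, and what remains is only to package the measuring processes furnished by Theorem \ref{GVN1} into a single net witnessing approximate realizability.

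Concretely, I would fix a CP instrument $\mathcal{I}$ for $(\mathcal{M},S)$ and take as directed index set $A$ the collection of all finite subsets of $\mathcal{S}_n(\mathcal{M})\times\mathcal{M}\times\mathcal{F}$, ordered by inclusion. For each $F=\{(\rho_i,M_i,\Delta_i)\}_{i=1}^{k}\in A$, Theorem \ref{GVN1} provides a measuring process $\mathbb{M}_F=(\mathcal{K}_F,\sigma_F,E_F,U_F)$ for $(\mathcal{M},S)$ such that $\langle\rho_i,\mathcal{I}(M_i,\Delta_i)\rangle=\langle\rho_i,\mathcal{I}_{\mathbb{M}_F}(M_i,\Delta_i)\rangle$ for every $i=1,\dots,k$. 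This assignment defines a net $\{\mathbb{M}_F\}_{F\in A}$ of measuring processes for $(\mathcal{M},S)$.

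It then remains to check that $\{\mathbb{M}_F\}_{F\in A}$ approximately realizes $\mathcal{I}$ in the sense of the definition of $\mathrm{CPInst}_{\mathrm{AR}}(\mathcal{M},S)$ given at the beginning of this section. Given $\varepsilon>0$, $n\in\mathbb{N}$, and data $\rho_1,\dots,\rho_n\in\mathcal{S}_n(\mathcal{M})$, $M_1,\dots,M_n\in\mathcal{M}$, $\Delta_1,\dots,\Delta_n\in\mathcal{F}$, I would simply set $F_0=\{(\rho_i,M_i,\Delta_i)\}_{i=1}^{n}\in A$. By the defining property of $\mathbb{M}_{F_0}$ one has $|\langle\rho_i,\mathcal{I}(M_i,\Delta_i)\rangle-\langle\rho_i,\mathcal{I}_{\mathbb{M}_{F_0}}(M_i,\Delta_i)\rangle|=0<\varepsilon$ for all $i=1,\dots,n$, so that $\alpha=F_0$ meets the requirement, and hence $\mathcal{I}\in\mathrm{CPInst}_{\mathrm{AR}}(\mathcal{M},S)$.

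Because all the work is hidden inside Theorem \ref{GVN1} — whose proof discretizes $\mathcal{I}$ over the finite $\sigma$-subfield generated by the $\Delta_i$ and then invokes the normal extension property of discrete CP instruments (Proposition \ref{DCPinst1}) — I do not expect any genuine analytic obstacle; the only matters requiring attention are the directed-set formalism and the trivial observation that exact agreement on a finite family of data is a fortiori $\varepsilon$-agreement. For completeness I note an alternative route that bypasses the explicit net construction: combine the already established identity $\mathrm{CPInst}_{\mathrm{AN}}(\mathcal{M},S)=\mathrm{CPInst}(\mathcal{M},S)$ with Corollary \ref{RENE}, writing an arbitrary CP instrument as a pointwise ultraweak limit of instruments with the NEP, each of which is realized by a measuring process; the resulting net of measuring processes then approximately realizes the given instrument.
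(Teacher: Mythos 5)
Your argument is in substance the paper's own derivation: the corollary is obtained by packaging the exact finite-data agreement of Theorem \ref{GVN1} (whose proof discretizes $\mathcal{I}$ over the finite $\sigma$-subfield generated by the $\Delta_i$ and invokes Proposition \ref{DCPinst1}) into a net indexed by finite data sets, exact agreement being a fortiori $\varepsilon$-agreement, and your directed-set bookkeeping is correct. One point you should close in your primary route: Theorem \ref{GVN1} is stated and proved in Section \ref{sec:2}, where ``measuring process'' means Definition \ref{MP1}, whereas $\mathrm{CPInst}_{\mathrm{AR}}(\mathcal{M},S)$ belongs to Section \ref{sec:6}, whose theorems consistently use measuring processes in the sense of Definition \ref{MP2}; the latter is strictly stronger for general $\mathcal{M}$, since it demands that \emph{all} higher correlations $W^{\mathbb{M}}_T(\overrightarrow{M})$ lie in $\mathcal{M}$, not merely $\mathcal{I}_{\mathbb{M}}(M,\Delta)\in\mathcal{M}$. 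This gap is harmless but must be discharged explicitly: the discrete instrument $\mathcal{I}^\prime$ from the proof of Theorem \ref{GVN1} has the NEP, and Corollary \ref{RENE} --- itself obtained from the construction in the proof of Theorem \ref{INJVN} --- upgrades NEP realizability to realizability by a Definition-\ref{MP2} measuring process. Your alternative route, combining $\mathrm{CPInst}_{\mathrm{AN}}(\mathcal{M},S)=\mathrm{CPInst}(\mathcal{M},S)$ with Corollary \ref{RENE}, invokes exactly this and is the reading closest to the paper's one-line attribution (``by the proof of Theorem \ref{INJVN} and facts in Section \ref{sec:2}''), so your proposal as a whole is correct once that substitution is made explicit.
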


Following these results, Question \ref{Q4} is affirmatively resolved for general
$\sigma$-finite von Neumann algebras.

Throughout the present paper, we have developed
the dilation theory of systems of measurement correlations and CP instruments,
and established many unitary dilation theorems of them.
In the succeeding paper, we systematically develop
successive and continuous measurements in the generalized Heisenberg picture.
The author believes that the approach to quantum measurement theory
given in the present and succeeding papers contributes to
the categorical (re-)formulation of quantum theory.
On the other hand, though we do not know how it is related to the topic of the paper at the present time,
the future task is to find the connection
with the results of Haagerup and Musat \cite{HaagerupMusat11,HaagerupMusat15},
which develop the asymptotic factorizability of CP maps on finite von Neumann algebras.

\begin{acknowledgement}
The author would like to thank Professor Masanao Ozawa for his useful comments and warm encouragement.
This work was supported by the John Templeton Foundations, No. 35771 and
by the JSPS KAKENHI, No. 26247016, and No. 16K17641.
\end{acknowledgement}

\begin{thebibliography}{99.}%
%

\bibitem {AFL82}
L. Accardi, A. Frigerio and J.T. Lewis, Quantum stochastic processes,
Publ. Res. Inst. Math. Sci. \textbf{18}, 97--133 (1982).

\bibitem{AD95} C. Anantharaman-Delaroche,
Amenable correspondences and approximation properties for von Neumann algebras,
Pacific J. Math. \textbf{171}, 309--341 (1995).

\bibitem{ADH90}
C. Anantharaman-Delaroche and J.F. Havet,
On approximate factorizations of completely positive maps,
J. Func. Anal. \textbf{90}, 411--428 (1990).

\bibitem {Arveson69}  W. Arveson, Subalgebras of C$^\ast$-algebras, Acta Math. \textbf{123}, 141--224 (1969).

\bibitem {BB95} A. Barchielli and V.P. Belavkin,
Measurements continuous in time and a posteriori states in quantum mechanics,
J. Phys. A: Math. Gen. \textbf{24}, 1495--1514 (1991).

\bibitem {Belavkin85}
V.P. Belavkin, Reconstruction Theorem for Quantum Stochastic Processes,
 Theoret. Math. Phys. \textbf{3}, 409--431 (1985), arXiv:math/0512410.

\bibitem{BDF87} D. Buchholz, C. D'Antoni and K. Fredenhagen, The universal structure of local algebras,
Commun. Math. Phys. \textbf{111}, 123--135 (1987).

\bibitem {Connes}  A. Connes, \textit{Noncommutative Geometry}, (Academic Press, San Diego, CA, 1994).

\bibitem {BGL95}
P. Busch, M. Grabowski and P.J. Lahti, \textit{Operational quantum physics}, (Springer, Berlin, 1995).

 \bibitem {Davies76} E.B. Davies, \textit{Quantum Theory of Open Systems},
 (Academic Press, London, 1976).

 \bibitem {DL70} E.B. Davies and J.T. Lewis, An operational approach to quantum
probability, Commun. Math. Phys. \textbf{17}, 239--260 (1970).
 
\bibitem {Dixmier} J. Dixmier, \textit{Von Neumann Algebras}, 
 (North-Holland, Amsterdam, 1981).

\bibitem {Lance95}
E.C. Lance, \textit{Hilbert C*-modules: a toolkit for operator algebraists}, (Cambridge UP, Cambridge, 1995).

\bibitem {EvansLewis}
D.E. Evans and J.T. Lewis, \textit{Dilations of irreversible evolutions in algebraic quantum theory},
Comm. Dublin Inst. Adv. Studies Ser. A \textbf{24}, (1977).

\bibitem{HK64} R. Haag and D. Kastler, An algebraic approach to quantum field theory,
J. Math. Phys. {\bf 5}, 848--861 (1964).

\bibitem {HaagerupMusat11}
U. Haagerup and M. Musat, Factorization and dilation problems for
completely positive maps on von Neumann algebras, Commun. Math. Phys. \textbf{303}, 555--594 (2011).

\bibitem {HaagerupMusat15}
U. Haagerup and M. Musat, An asymptotic property of factorizable
completely positive maps and the Connes embedding problem, Commun. Math. Phys. \textbf{338}, 721--752 (2015).

\bibitem {Halmos82}
P.R. Halmos, \textit{A Hilbert space problem book}, 2nd Ed., (Springer, New York, 1982).

\bibitem {HoraObata07}
A. Hora and N. Obata, \textit{Quantum probability and spectral analysis of graphs},
(Springer, Berlin, 2007).

\bibitem {Kraus} K. Kraus, General state changes in quantum theory,
Ann. Phys. \textbf{64}, 311--335 (1971).

\bibitem {Kraus2} K. Kraus, \textit{States, Effects, and Operations: Fundamental Notions of Quantum Theory},
Lecture Notes Phys. \textbf{190}, (Springer, Belin, 1983).

\bibitem{Lud67} G. Ludwig, Attempt of an axiomatic foundation
of quantum mechanics and more general theories, {II}, Commun. Math. Phys. \textbf{4}, 331--348 (1967);
Attempt of an axiomatic foundation of quantum mechanics and more
general theories, {III}, \textit{ibid.} \textbf{9}, 1--12 (1968).

\bibitem {Mingo89} J.A. Mingo, The correspondence associated to an inner completely positive map,
Math. Ann. \textbf{284}, 121--135 (1989).

\bibitem {Mingo90} J.A. Mingo, Weak containment of correspondences and
approximate factorization of completely positive maps, J. Func. Anal. \textbf{89}, 90--105 (1990).

\bibitem {Muraki13} N. Muraki, A simple proof of the classification theorem for positive natural products,
Prob. Math. Stat. \textbf{33}, 315--326 (2013).

\bibitem{NFBK10} B. Sz.-Nagy, C. Foias, H. Bercovici and L. Kerchy,
\textit{Harmonic analysis of operators on Hilbert space}, 2nd Ed., (Springer, New York, 2010).

\bibitem {vonNeumann} J. von Neumann, \textit{Mathematische Grundlagen der Quantenmechanik},
(Springer, Berlin, 1932); \textit{Mathematical Foundations of Quantum Mechanics}, 
(Princeton UP, Princeton, 1955).

\bibitem{O14}I. Ojima, K. Okamura and H. Saigo,
Local state and sector theory in local quantum physics, 
Lett. Math. Phys. \textbf{106}, 741--763 (2016).

\bibitem{OO16}K. Okamura and M. Ozawa, Measurement theory in local quantum physics,
J. Math. Phys. \textbf{57}, 015209 (2016).

\bibitem{Ozawa83} M. Ozawa, Conditional expectation and repeated measurements of continuous
quantum observables, In; \textit{Probability Theory and Mathematical Statistics},
(eds. K. Ito and J.V. Prohorov), Lecture Notes Math. \textbf{1021}, pp.518--525 (Springer, Berlin, 1983).

\bibitem{Oz84}M. Ozawa, Quantum measuring processes of continuous observables,
  J. Math. Phys. \textbf{25}, 79--87 (1984).

\bibitem{Oz85} M. Ozawa, Conditional probability and a posteriori states in quantum mechanics,
Publ. Res. Inst. Math. Sci. \textbf{21}, 279--295 (1985).

\bibitem{Oz93}M. Ozawa, Canonical approximate quantum measurements, 
J. Math. Phys. \textbf{34}, 5596--5624 (1993).

\bibitem{Ozawa97}M. Ozawa, An Operational Approach to Quantum State Reduction,
Ann. Phys. (N.Y.) \textbf{259}, 121--137 (1997).

 \bibitem{Ozawa04}M. Ozawa, Uncertainty relations for noise and disturbance in generalized quantum measurements,
 Ann. Phys. (N.Y.) \textbf{331}, 350--416 (2004).

 \bibitem{Ozawa06}M. Ozawa, Quantum perfect correlations, Ann. Phys. (N.Y.) \textbf{321}, 744--769 (2006).

 \bibitem{14MFQ-}M. Ozawa, Mathematical foundations of quantum information:
Measurement and foundations, Sugaku Expositions \textbf{27}, 195--221 (2014).

\bibitem{Ozawa15}M. Ozawa, Heisenberg's original derivation of the uncertainty principle
and its universally valid reformulations,  Current Science \textbf{10}, 2006--2016 (2015).

 \bibitem{Paulsen02} V. Paulsen, \textit{Completely bounded maps and operator algebras}, 
 (Cambridge UP, Cambridge, 2002).

 \bibitem{PT15} J.-P. Pellonp\"{a}\"{a} and M. Tukiainen,
Minimal normal measurement models of quantum instruments, (2015), arXiv:1509.08886 [quant-ph].

\bibitem{Sch59} J. Schwinger, The algebra of microscopic measurement,
Proc. Nat. Acad. Sci. U.S. {\bf {45}}, 1542--1554 (1959);
The geometry of quantum states, \textit{ibid.} {\bf {46}}, 257--265 (1960).

\bibitem{Skeide00} M. Skeide. Generalized matrix C$^\ast$-algebras and representations of Hilbert
modules, Math. Proc. Royal Irish Academy, \textbf{100A}, 11--38 (2000).

 \bibitem {Skeide01}
M. Skeide, \textit{Hilbert modules and applications in quantum probability},
Habilitationsschrift, (Cottbus, 2001).

 \bibitem{Stine55}W.F. Stinespring, Positive functions on C$^\ast$-algebras, Proc. Amer. Math. Soc.
 \textbf{6}, 211-216 (1955).

\bibitem {SW00}
R.F. Streater and A.S. Wightman, \textit{PCT, spin and statistics, and all that}, 
(Princeton UP, Princeton, 2000).

\bibitem {T79}M. Takesaki, \textit{Theory of Operator Algebras I}, (Springer, Berlin, 1979).

\bibitem {T02}M. Takesaki, \textit{Theory of Operator Algebras III}, (Springer, Berlin, 2002).

\bibitem{U54} H. Umegaki, Conditional expectation in an operator algebra,
Tohoku Math. J. \textbf{6}, 177--181 (1954).


\end{thebibliography}
\end{document}